\documentclass[12pt]{article}
\usepackage[utf8]{inputenc}
\usepackage[english]{babel}
\usepackage{amsmath}
\usepackage{mathrsfs}
\usepackage{amssymb}
\usepackage{amsthm}
\usepackage{amsfonts}
\usepackage{xspace}
\usepackage[normalem]{ulem}
\usepackage{graphicx}
\usepackage{multirow}
\usepackage{booktabs}
\usepackage{appendix}
\usepackage{enumerate}
\usepackage{url} 
\usepackage{authblk}
\usepackage{xcolor}
\usepackage{subcaption}
\usepackage{xr}
\usepackage{adjustbox}
\usepackage{tabularx}
\usepackage{tikz}
\usepackage[colorlinks=true,linkcolor=red,citecolor=blue]{hyperref}
\usepackage{enumitem}
\usepackage{natbib}
\usepackage{iftex}
\usepackage{mathtools}
\usepackage{etoolbox}
\usepackage{algorithm}
\usepackage{algpseudocode}
\usetikzlibrary{shapes, shapes.geometric, arrows.meta, positioning, calc, decorations.pathreplacing, backgrounds, fit}

\definecolor{myblue}{RGB}{70, 130, 180}
\definecolor{myorange}{RGB}{255, 140, 0}
\definecolor{lightbg}{RGB}{248, 250, 255}
\newcommand{\blind}{1}

\makeatletter
\newcommand*{\addFileDependency}[1]{
\typeout{(#1)}
\@addtofilelist{#1}
\IfFileExists{#1}{}{\typeout{No file #1.}}
}\makeatother

\newcommand{\mP}{\mathbb{P}}

\newcommand{\bbeta}{\boldsymbol{\beta}}

\newcommand{\bGamma}{\boldsymbol{\Gamma}}
\newcommand{\bX}{\mathbf{X}}
\newcommand{\bC}{\mathbf{C}}

\newcommand{\bc}{\mathbf{c}}
\newcommand{\bx}{\mathbf{x}}
\newcommand{\bV}{\mathbf{V}}
\newcommand{\bY}{\mathbf{Y}}
\newcommand{\by}{\mathbf{y}}

\newcommand{\calT}{\mathcal{T}}
\newcommand{\calU}{\mathcal{U}}

\newcommand{\ba}{\mathbf{a}}
\newcommand{\mO}{\mathcal{O}}
\newcommand{\bmO}{\boldsymbol{\mathcal{O}}}
\newcommand{\mK}{\mathcal{K}}
\newcommand{\mJ}{\mathcal{J}}

\newcommand{\bM}{\mathbf{M}}
\newcommand{\bm}{\mathbf{m}}
\newcommand{\bv}{\mathbf{v}}
\newcommand{\buj}{\bullet j}
\newcommand{\bumj}{\bullet, -j}

\newcommand{\lb}{\left\{}
\newcommand{\rb}{\right\}}
\newcommand{\nabep}{\nabla_{\epsilon=0}}

\newtheorem{assumption}{Assumption}
\newtheorem{remark}{Remark}
\newtheorem{theorem}{Theorem}
\newtheorem{prop}{Proposition}
\newtheorem{lemma}{Lemma}

\makeatletter
\begin{document}

\def\spacingset#1{\renewcommand{\baselinestretch}%
{#1}\small\normalsize} \spacingset{1}


\if1\blind
{
  \title{\Large \bf Causal mediation in cluster-randomized trials with multiple mediators: spillover-aware decomposition, identification, and semiparametric efficient inference}
\author{
  Jiaqi Tong$^{1}$,
  Chao Cheng$^{2}$,
  and Fan Li$^{1,*}$
  \vspace{0.2cm} 
  \\
  $^{1}$Department of Biostatistics, Yale School of Public Health, New\\ Haven, CT, USA \\
  $^{2}$Department of Statistics and Data Science, Washington University\\ in St. Louis, St. Louis, MO, USA \\
  $^{*}$\emph{email}: fan.f.li@yale.edu
}
  \maketitle
} \fi

\if0\blind
{
  \bigskip
  \bigskip
  \bigskip
  \begin{center}
    {\LARGE\bf Causal mediation in cluster-randomized trials with multiple mediators: spillover-aware decomposition, identification, and semiparametric efficient inference}
\end{center}
  \medskip
} \fi

\bigskip
\begin{abstract}
Causal mediation analysis in cluster-randomized trials (CRTs) is complicated by the presence of multiple mediators, intracluster correlation, and within-cluster interference. 
Existing mediation methods often fall short in accommodating these features simultaneously, and semiparametric efficient estimators that fully address them remain unavailable. We develop a unified framework that defines a class of mediation effect estimands, including exit indirect effects, exit spillover mediation effects, and their interaction effects, to investigate causal mechanisms in CRTs with an arbitrary number of mediators under an unknown causal structure. We introduce a set of interpretable causal assumptions for point identification of each estimand. For optimal inference, we first derive the efficient influence functions for the proposed estimands and construct corresponding one-step and debiased machine learning estimators. In particular, to flexibly model the joint mediator density, we employ an elliptical copula marginal regression model that combines a nonparametric marginal regression with an interpretable association structure. We assess the finite-sample performance of the proposed estimators through simulation studies and illustrate the methodology by reanalyzing the PPACT CRT data with three causally unordered mediators.
\end{abstract}

\noindent%
{\it Keywords:} Multiple mediators, multiple robustness, interaction effects, spillover mediation effects, semiparametric efficiency, elliptical copula marginal regression
\vfill

\newpage
\spacingset{1.8} 

\section{Introduction}
Causal mediation analysis is an effective framework in health and social sciences to explore causal pathways from treatment to outcome through post-treatment intermediate variables, referred to as \emph{mediators}. With a single mediator, the primary step of mediation analysis is to decompose the total treatment effect into a natural indirect effect through the mediator and a natural direct effect not through the mediator, thereby disentangling the role of mediator in explaining the treatment-outcome relationship \citep{imai2010identification}. Although there is a rich literature on causal mediation analysis with independent and identically distributed data \citep{vanderweele2015explanation}, causal mediation methods have only recently been extended to correlated data settings, including multilevel data with individual-level assignment \citep{liu2025estimating} and cluster-randomized trials (CRTs) with cluster-level assignment \citep{vanderweele2013mediation,cheng2024semiparametric}.

Of these settings, causal mediation analysis in CRTs presents distinct methodological challenges and has become an active area of research. That is, investigation of effect pathways is typically complicated by hierarchical data structures that induce both intracluster correlation and within-cluster interference. Intracluster correlation results in the dependence among individuals within the same cluster arising from shared environments or unobserved common factors, whereas interference is reflected by spillover effects wherein the intermediate variable of one individual can potentially influence the outcomes of others in the same cluster. Incorporating within-cluster interference to the context of causal mediation, \cite{vanderweele2010direct} and \cite{vanderweele2013mediation} further decomposed the natural indirect effect into a \emph{spillover mediation effect} to capture the pathway from the treatment to outcome through mediators of other individuals within the same cluster and an \emph{individual mediation effect} to capture the pathway through the individual’s own mediator. Leveraging this finer decomposition, \cite{cheng2024semiparametric} developed semiparametric efficiency theory and proposed robust and efficient estimators for the corresponding causal mediation estimands allowing for spillover. Furthermore, in a setting of two mediators, \cite{ohnishi2024bayesian} decomposed the exit indirect effect estimands \citep{xia2022decomposition} into the \emph{exit spillover mediation effects} and \emph{exit individual mediation effects}, and pursued a Bayesian nonparametric approach for estimation and inference. 

It is not uncommon in CRTs to collect multiple intermediate outcomes, possibly more than two, that may serve as potential mediators of the treatment-outcome relationship. For example, the Pain Program for Active Coping and Training (PPACT) CRT evaluated the impact of a primary care-based cognitive behavioral therapy (CBT) intervention on chronic pain impact, measured by the PEGS scale, among long-term opioid users \citep{debar2022primary}. These effects are hypothesized to operate through three mediators, including the RMDQ score, patient satisfaction with primary care services, and average daily opioid dose, whose causal structure is unknown.
In this context, the Bayesian nonparametric approach of \cite{ohnishi2024bayesian} is not applicable to three mediators, and several key gaps remain. At the conceptual level, with an arbitrary number of $K$ mediators with an unknown causal structure, a suitable set of causal mediation estimands that can address spillover has not been previously characterized in CRTs. At the inferential level, semiparametric efficiency theory and corresponding optimal, multiply robust estimators for such estimands have not been previously established. Last but not the least, at the computational level, the joint density of multiple mediators across individuals within a cluster, a key nuisance function for assessing complex mediation effects in CRTs, also poses unique estimation challenges.

In this paper, we develop a semiparametric framework for mediation analysis in CRTs involving $K\geq 2$ mediators with an unknown causal structure. Our novel contribution is threefold. First, we explicitly characterize a class of causal mediation estimands that accommodate spillover effects in CRTs, and discuss the algebraic architecture of the complex effect decomposition in this context. Our estimands characterization generalizes the special case studied in \cite{ohnishi2024bayesian} from $K=2$ mediators to an arbitrary number of $K\geq 2$ mediators. 
Specifically, we introduce: (i) symmetric interaction indirect mediation effects, which capture interaction among any nonempty subsets of mediators; and (ii) spillover and individual interaction mediation effects, which quantify the contributions from neighborhood spillover and individual pathways, respectively.
Second, we propose a set of structural assumptions for nonparametric point identification of each mediation estimand and discuss their interpretation. Third, for estimation and inference, we derive the efficient influence function and construct semiparametric efficient one-step estimators. In particular, the proposed one-step estimators can be combined with: (i) nuisance estimators based on parametric working models, which provide robustness to certain misspecifications (see Section \ref{Sec:semi-estimation} for full details); and (ii) data-adaptive machine learning algorithms for nuisance estimation, yielding debiased estimators that achieve $\sqrt{I}$-consistency ($I$ is the number of clusters) and semiparametric efficiency under mild regularity conditions. To balance model flexibility and estimation stability, we propose a semiparametric elliptical copula marginal regression approach for modeling the joint distribution of multiple mediators in the presence of clustering, {which also constitutes a useful standalone contribution to the copula literature on multivariate density estimation}. This joint density is a nuisance function that is critical for constructing the semiparametric efficient estimators for our mediation estimands.


The remainder of this paper is organized as follows. Section \ref{ss:notation} introduces the notation and data structure. Section \ref{sec:two-mediators} reviews and refines the causal estimands and the nonparametric identification results in the two-mediator setting. Section \ref{sec:K-mediators} extends these results to $K \geq 2$ mediators. Section \ref{ss:semi-copula} describes the elliptical copula marginal regression model. Section \ref{Sec:semi-estimation} derives efficient influence functions, constructs semiparametric one-step estimators for inference, and establishes their asymptotic properties. Section \ref{Sec:simulation} reports a simulation study. Section \ref{sec:PPACT-application} applies the proposed methods to the PPACT CRT.
Section \ref{ss:concluson} concludes.

\section{Notations and data structure}
\label{ss:notation}
We consider a two-arm CRT with $I$ clusters, where each cluster $i \in[I]$ contains $N_i$ individuals, and $[n]=\{1,\ldots,n\}$ denotes the collection of positive integers up to $n$. Let $A_i$ be the cluster-level treatment, where $A_i = 1$ represents the treated condition and $A_i = 0$ represents the control condition. For each individual $j\in[N_i]$ in cluster $i$, we observe a vector of individual-level baseline covariates $\bX_{ij}\in\mathcal{X}\subseteq \mathbb{R}^{d_X}$ and a vector of cluster-level baseline covariates $\bV_{i}\in\mathcal{V}\subseteq \mathbb{R}^{d_V}$. We also observe an individual-level outcome  $Y_{ij}\in\mathcal{Y}\subseteq\mathbb{R}$ and $K$ individual-level mediators $M^k_{ij}$ for $k \in \mK\equiv[K]$, which are observed post-randomization but prior to the measurement of the outcome. We define $\bX_i = (\bX_{i1}, \ldots, \bX_{iN_i})^\top$ as the design matrix of individual-level covariates for cluster $i$, and $\bC_i = (\bX_i, \mathbf{1}_{N_i}\otimes\bV_i^\top)$ as the design matrix containing all baseline covariates, where $\otimes$ denotes the Kronecker product and $\mathbf{1}_{N_i} = (1, \ldots, 1)^\top$ is a vector of ones of dimension $N_i$. We further define $\bY_i = (Y_{i1}, \ldots, Y_{iN_i})^\top$ as the vector of outcomes for cluster $i$, and $\bM^k_i = (M^k_{i1}, \ldots, M^k_{iN_i})^\top\in\mathcal{M}^k\subseteq\mathbb{R}^{N_i}$ as the vector of mediators. We also denote $\bM_i = ({\bM^1_i}^\top, \ldots,{\bM^K_i}^\top)^\top\in\mathcal{M}\subseteq\mathbb{R}^{K N_i}$ as the complete set of $K$ mediators. In particular, we denote $\bM^k_{i,-j}$ as the sub-vector of $\bM^k_i$ excluding the entry $M^k_{ij}$, and $\bM^{-k}_i$ as the collection of the $K-1$ mediators excluding $\bM^k_i$. We define the vector of observed data for individual $j$ in cluster $i$ as $\mO_{ij} = ( \bX_{ij}^\top, M^1_{ij}, \ldots,M^K_{ij}, Y_{ij} )^\top$, and the observed data vector for cluster $i$ as $\bmO_i = ( A_i, N_i, \bV_i^\top, \mO_{i1}^\top,\ldots,\mO_{iN_i}^\top )^\top$. In a CRT, an analyst observes $I$ independent and identically distributed data vectors $\bmO_i$.

In our motivating PPACT CRT, the causal ordering among the three mediators, the RMDQ score, satisfaction with primary care services, and average daily opioid dose, is unknown because all three are measured at 6 months. In this setting, let $M^k_{ij}(a)$, $\bM^k_i(a)$, $\bM_i(a)$, and $\bM^k_{i,-j}(a)$ denote the counterfactual versions of their observed counterparts under treatment $a$.
Further, we define $Y_{ij}(a,\bm)$ as the counterfactual outcome under treatment  $a$ and mediators $\bm$. 
Following \cite{imai2010identification}, we adopt the composition assumption and define $Y_{ij}(a) \equiv Y_{ij}(a, \bM(a))$. This posits that the counterfactual outcome under treatment $a$ is equivalent to the outcome obtained when the treatment is set to $a$ and all mediators assume the natural values they would take under that treatment assignment. Let $\mathcal{W}_{ij}=(\bX_{ij}^\top, \{M^k_{ij}(a)\}_{a\in\{0,1\},k\in\mK}, \{Y_{ij}(a,\bm)\}_{a\in\{0,1\},\bm\in\mathcal{M}})^\top$ and $\boldsymbol{\mathcal{W}}_i=( A_i, N_i, \bV_i^\top, \mathcal{W}_{i1}^\top,\ldots,\mathcal{W}_{iN_i}^\top )^\top$ denote the full data vectors for individual $j$ within cluster $i$ and for cluster $i$, respectively. Without loss of generality, we often omit subscript $i$ when referring to generic cluster membership; e.g., use $Y_{\buj}$ to denote $Y_{ij}$ for an individual $j$ from a given cluster. 
Finally, let $f(\bullet | A, \bC, N)$ denote the conditional density function of its argument (e.g., the mediators or their subsets) given $\{A,\bC,N\}$. Throughout, the integrals are taken over their corresponding valid supports and are omitted for simplicity.

\section{Estimands and nonparametric identification with $K=2$ mediators: A review and refinement}\label{sec:two-mediators}
{To begin, we review causal mediation estimands with $K=2$ mediators and propose a set of interpretable structural assumptions with several new ingredients tailored to the generalization to $K \geq 2$ mediators and to the ratio estimands presented later in Section \ref{sec:K-mediators}.} We consider a class of weighted average treatment effect $\text{TE}=L(\mu_w(1),\mu_w(0))$,
where for $a\in\{0,1\}$, $\mu_w(a)=E\{ {w(\bV,N)}/{N}\sum_{j=1}^N Y_{\buj}(a)\}/E\{w(\bV,N)\},$
$w \equiv w(\bV,N)$ is a pre-specified weight depending on the cluster-level covariates and cluster sizes, and $L$ is a pre-specified smooth link function. Following \cite{li2025model}, setting $w = N$ yields the \emph{individual-average treatment effect}, while setting $w = 1$ yields the \emph{cluster-average treatment effect}; choosing $L(x, y) = x - y$, $L(x, y) = y^{-1}x$, and $L(x, y) = y^{-1}(1-x)^{-1}x(1-y)$ corresponds to the causal risk difference, risk ratio, and odds ratio, respectively. For ease of exposition, we focus on the cluster-average treatment effect with $w=1$ to demonstrate the mediation effect estimands of interest; the Supplementary Material provides proofs that accommodate general weights $w$. Following \cite{imai2010identification}, we define the \emph{natural indirect effect} (NIE) and the \emph{natural direct effect} (NDE), respectively, as $\text{NDE}=\theta_1(1,0,0)-\theta_1(0,0,0)$ and $\text{NIE}=\theta_1(1,1,1)-\theta_1(1,0,0)$,
where $\theta_1(a_1,a_2,a_3)=E\{{N}^{-1}\sum_{j=1}^NY_{\buj}(a_1,\bM^1(a_2),\bM^2(a_3))\}$ for $a_l\in\{0,1\},l\in\{1,2,3\}$. Here, the NIE quantifies the indirect effect transmitted through the mediators, whereas the NDE measures the direct effect that bypasses them. These components satisfy the decomposition $\text{TE} = \text{NDE} + \text{NIE}$.
To disentangle the role of each mediator, \cite{xia2022decomposition} further decomposed the NIE into three components: two \emph{exit indirect effects} (EIEs), which quantify the indirect effects transmitted through each mediator individually, and an \emph{interaction effect} (INT); for $k\in\{1,2\}$, we have $\text{EIE}_k=\theta_1(1,1,1)-\theta_1(1,k-1,2-k)$ and $\text{INT}=\theta_1(1,1,1)-\theta_1(1,1,0)-\theta_1(1,0,1)+\theta_1(1,0,0)$.
These components satisfy the decomposition $\text{NIE}=\text{EIE}_1+\text{EIE}_2-\text{INT}$. Here, $\text{EIE}_k$ represents the indirect mediation effect transmitted through the $k$-th mediator, while INT captures the interaction effect, reflecting the extent to which the indirect effect through one mediator is modified by the treatment status of the other. 

To account for the neighborhood spillover, \citet{ohnishi2024bayesian} further decomposed EIE into the sum of the \emph{exit spillover mediation effect} (ESME) and the \emph{exit individual mediation effect} (EIME); that is, for $k\in\{1,2\}$, we have $\text{EIE}_k=\text{ESME}_k+\text{EIME}_k$ with $\text{ESME}_k=\theta_1(1,1,1)-\theta_2(k-1,2-k)$ and $\text{EIME}_k=\theta_2(k-1,2-k)-\theta_1(1,k-1,2-k)$,
where $\theta_2(a_1,a_2)={E\{{N}^{-1}\sum_{j=1}^NY_{\buj}(1,M^1_{\buj}(1),\bM^1_{\bullet,-j}(a_1),M^2_{\buj}(1),\bM^2_{\bullet,-j}(a_2))\}}$ for $a_1,a_2\in\{0,1\}$.
Here, $\text{ESME}_k$ and $\text{EIME}_k$ quantify the components of the exit indirect effect through the $k$-th mediator attributable to neighbor individual spillover and to the individual himself, respectively. We next introduce the following identification assumptions.
\begin{assumption}\label{assump:2mediatiors-total}(Super-population framework)
Suppose that the following assumptions hold. (a) For all $ a \in \{0, 1\}$ and $\bm^k,k \in \mK$, representing realizations of $\bM^k$, we have $\bM^k = \bM^k(a)$ under $A = a$, and $Y_{\buj} = Y_{\buj}(a, \bm)$ under $A = a$ and $\bM = \bm$. 
(b) $A \sim \text{Bernoulli}(\pi)$, where $0 < \pi  < 1$, and $A \perp \{N, \bC, \bM(a), Y_{\buj}(a, \bm)\}$ for all $a \in \{0, 1\}$ and $\bm$. 
(c) $\boldsymbol{\mathcal{W}}_1, \ldots, \boldsymbol{\mathcal{W}}_n \overset{i.i.d}{\sim} \mathcal{P}^\mathcal{W}$,
where the orthogonal factorization of $\mathcal{P}^{\boldsymbol{\mathcal{W}}}$ is given by $\mathcal{P}^{\boldsymbol{\mathcal{W}}} = \mathcal{P}^{\boldsymbol{\mathcal{W}}|N} \times \mathcal{P}^N $, satisfying: (i) $\mathcal{P}^N$ has bounded support $\mathcal{S}$; (ii) $\mathcal{P}^{\boldsymbol{\mathcal{W}}|N}$ has bounded second moments; and (iii) conditional on $N$, the individual-level full data vectors, $\mathcal{W}_{\bullet1},\ldots,\mathcal{W}_{\bullet N}$, are marginally identically distributed. Furthermore, there exists a positive constant $f_{\text{min}}$ such that $ f(\bm | A, \bC, N) \geq f_{\text{min}} $ for all  $\bm \in \mathcal{M}$ with probability one.
\end{assumption}

\begin{assumption}
[\emph{Mediator-outcome ignorability}]
\label{assump:sequential-ignorability} For all $j\in[N]$, $a\in \{0, 1\}$, and $\bm$, $Y_{\buj}(a,\bm)\perp\left\{\bM(0),\bM(1)\right\}|A,N,\bC$.
\end{assumption}



Assumption \ref{assump:2mediatiors-total} is standard when conceptualizing a super population framework in CRTs \citep{cheng2024semiparametric}. Importantly, Assumption \ref{assump:2mediatiors-total} (c) accommodates a restricted informative cluster size scenari and permits an arbitrary within-cluster correlation structure. Due to cluster randomization, there is no unmeasured treatment-mediator or treatment-outcome confounding, and hence, Assumption \ref{assump:sequential-ignorability} assumes away the mediator-outcome confounding, which is a critical condition for point identification of the natural indirect effect. The next two assumptions are required for identification of the exit indirect effects.

\begin{assumption}
[\emph{Conditional homogeneity in average potential outcomes}]
\label{assump:homogeneous-between-mediator-ignorability} 

For all $k\in\{1,2\}$ 
and valid $\bm^{k}$, 
\begin{align*}
&E\left\{\overline{Y}(1,\bM^k(0),\bM^{-k}(1))|\bM^{k}(0)=\bm^{k},\bC,N\right\}=E\left\{\overline{Y}(1,\bm^k,\bM^{-k}(1))|\bC,N\right\},
\end{align*}
where $\overline{Y}(a,\bm)=N^{-1}\sum_{j=1}^NY_{\buj}(a,\bm)$ is the cluster-average potential outcome.
\end{assumption}
\begin{assumption}
[\emph{Cross-world mediator conditional independence}]
\label{assump:heterogeneous-between-mediator-ignorability} 
For all $j\in[N]$ and $k\in\{1,2\}$, $\bM^k_{\bumj}(0)\perp \lb M^k_{\buj}(1),\bM^{-k}(1)\rb|\bC,N$.

\end{assumption}

Assumption \ref{assump:homogeneous-between-mediator-ignorability} asserts that the conditional counterfactual cluster-average outcome is exchangeable across all strata defined by the values of the $k$-th potential mediator under control. 
We contrast Assumption \ref{assump:homogeneous-between-mediator-ignorability} with its counterpart (Assumption 5) in \cite{ohnishi2024bayesian}, which directly extends the work of \cite{xia2022decomposition} to CRTs. 
In particular, Assumption 5 of \cite{ohnishi2024bayesian} targets the differences in counterfactual outcomes, $\overline{Y}(1,\bM^k(1),\bM^{-k}(1)) - \overline{Y}(1,\bM^k(0),\bM^{-k}(1))$, whereas Assumption \ref{assump:homogeneous-between-mediator-ignorability} solely focuses on the second component, $\overline{Y}(1,\bM^k(0),\bM^{-k}(1))$. We consider this modification because the first component in $\text{EIE}_k$, $\theta_1(1,1,1)$,
can already be identified under Assumptions \ref{assump:2mediatiors-total}-\ref{assump:sequential-ignorability}. 

Finally, Assumption \ref{assump:heterogeneous-between-mediator-ignorability} is needed to point identify $\text{ESME}_k$ and $\text{EIME}_k$ for the finest effect decomposition. Assumption \ref{assump:heterogeneous-between-mediator-ignorability} says that there are no unmeasured confounders between cross-world counterfactual mediators. This assumption holds if and only if the between-individual cross-world conditional independence assumption in \cite{cheng2024semiparametric} and the between-mediator cross-world conditional independence, i.e., $\bM^k(0) \perp \bM^{-k}(1) | \bC, N$, hold simultaneously. 
Analogous to Assumption \ref{assump:homogeneous-between-mediator-ignorability}, Assumption \ref{assump:heterogeneous-between-mediator-ignorability} differs from its counterpart (Assumption 6) in \cite{ohnishi2024bayesian}. 
To summarize, we adopt different assumptions than \cite{ohnishi2024bayesian} to focus on identification of the essential component of each estimand, thus making it feasible to handle estimands on the ratio scale and to generalize to settings with $K \geq 3$ mediators (see Section \ref{sec:K-mediators}). 
To proceed, we refer to Assumption \ref{assump:2mediatiors-total}
as Set I, Assumptions \ref{assump:2mediatiors-total}–\ref{assump:homogeneous-between-mediator-ignorability} 
as Set II, and Assumptions 
\ref{assump:2mediatiors-total}-\ref{assump:sequential-ignorability}
along with \ref{assump:heterogeneous-between-mediator-ignorability} as Set III, where each successive set of assumptions, from I to III, is needed to point identify a different set of mediation estimands. The following result shows the nonparametric identification formulas for $\theta_1$ and $\theta_2$ with the base $K=2$ setting.

\begin{theorem}\label{thm:iden-NIE}
Under Assumptions Set I, for $(a_1,a_2) \in \{(0,0),(1,0),(1,1)\}$, we have 
\begin{align*}
 \theta_1(a_1,a_2,a_2)=
&{\displaystyle E\lb 
\frac{1}{N}\sum_{j=1}^N\int\eta_{\buj}(a_1, \bm, \bC,N)f(\bm|a_2,\bC,N)d\bm\rb},
\end{align*}
where $\eta_{\buj}(a, \bm, \bC,N) = E\lb Y_{\buj} | A = a, \bM=\bm, \bC, N \rb$ denotes the conditional outcome mean for individual $j$. Under Assumptions Set II, for $(a_1, a_2) \in \{(1,0),(0,1)\}$, we have
\begin{align*}
    \theta_1(1,a_1,a_2)=
    &{\displaystyle E\lb \frac{1}{N}\sum_{j=1}^N\int\eta_{\buj}(1,\bm, \bC,N) \prod_{k=1}^2 f(\bm^k|a_k,\bC,N)d\bm\rb}.
\end{align*}
Finally, under Assumptions Set III, for $k\in\{1,2\}$, we have 
\begin{align*}
\theta_2(k-1,2-k)=
& E\bigg\{ \frac{1}{N}\sum_{j=1}^N\int \eta_{\buj}(1, m_{\buj}^k,\bm^k_{\bullet,-j}, \bm^{-k}, \bC,N)f(\bm^k_{\bullet,-j}|0,\bC,N)f(m_{\buj}^k,\bm^{-k}|1,\bC,N)d\bm\Bigg\}.
\end{align*}
\end{theorem}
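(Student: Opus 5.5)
The plan is to prove all three parts with one template. Start from the definition of $\theta_1$ or $\theta_2$, condition on $(\bC,N)$ by the tower property, and integrate the counterfactual outcome against the joint law of whichever counterfactual mediators enter it. Then use ignorability to replace the conditional mean of $Y_{\buj}(a,\bm)$ by the observed regression $\eta_{\buj}$. Finally, use randomization and consistency to replace each counterfactual mediator density by the observed conditional density. The assumption sets differ only in which joint law of cross-world mediators can be factorized.

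\emph{Set I.} For $(a_1,a_2)\in\{(0,0),(1,0),(1,1)\}$, both mediators are set to the same world $a_2$, so no cross-world quantity appears. I will write
\begin{align*}
E\{Y_{\buj}(a_1,\bM(a_2))\mid\bC,N\}=\int E\{Y_{\buj}(a_1,\bm)\mid \bM(a_2)=\bm,\bC,N\}\, f_{\bM(a_2)}(\bm\mid\bC,N)\,d\bm .
\end{align*}
Assumption~\ref{assump:2mediatiors-total}(b) makes $A$ independent of $\{\bC,N,\bM(a),Y(a,\bm)\}$. Together with Assumption~\ref{assump:sequential-ignorability}, this gives
\begin{align*}
E\{Y_{\buj}(a_1,\bm)\mid \bM(a_2)=\bm,\bC,N\}=E\{Y_{\buj}(a_1,\bm)\mid A=a_1,\bM(a_1)=\bm,\bC,N\}=\eta_{\buj}(a_1,\bm,\bC,N).
\end{align*}
The first equality drops the conditioning on $\bM(a_2)$, adds the conditioning on $A=a_1$ (justified by randomization), and re-adds $\bM(a_1)=\bm$ (justified by Assumption~\ref{assump:sequential-ignorability}). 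The second equality is consistency. The positivity bound $f\ge f_{\min}$ ensures these conditional expectations are well defined on the whole support. Randomization and consistency also give $f_{\bM(a_2)}(\bm\mid\bC,N)=f(\bm\mid a_2,\bC,N)$. Averaging over $j$ and then over $(\bC,N)$ yields the first display. Assumption~\ref{assump:sequential-ignorability} is stated only jointly for $\{\bM(0),\bM(1)\}$, but conditioning on a sub-vector is still licensed, since independence from the joint vector implies independence from any function of it.

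\emph{Set II.} For $\theta_1(1,a_1,a_2)$ with $(a_1,a_2)=(0,1)$, let $k=1$ be the mediator set to $0$ and $-k$ the one set to $1$; the case $(1,0)$ is symmetric. Assumption~\ref{assump:homogeneous-between-mediator-ignorability} is stated for the cluster average $\overline Y$, so I will work with $\overline Y$ throughout, which is exactly what $\theta_1$ requires. Conditioning on $\bM^k(0)=\bm^k$ and applying the assumption gives
\begin{align*}
\theta_1=E\int E\{\overline Y(1,\bm^k,\bM^{-k}(1))\mid\bC,N\}\, f_{\bM^k(0)}(\bm^k\mid\bC,N)\,d\bm^k .
\end{align*}
The inner expectation is then an integral over the law of $\bM^{-k}(1)$ alone. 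Applying the Set I argument to it (Assumption~\ref{assump:sequential-ignorability} with world $1$) turns it into $\int\eta(1,\bm,\bC,N)f(\bm^{-k}\mid1,\bC,N)\,d\bm^{-k}$. Identifying $f_{\bM^k(0)}$ as $f(\bm^k\mid0,\bC,N)$ gives the product form. The key point is that Assumption~\ref{assump:homogeneous-between-mediator-ignorability} replaces an unidentified cross-world joint density by a product of marginals, but only after averaging the outcome.

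\emph{Set III.} For $\theta_2$ the outcome is individual-specific, with arguments $M^k_{\buj}(1)$, $\bM^k_{\bullet,-j}(0)$ and $\bM^{-k}(1)$ (taking $k$ as the mediator whose neighbor components are set to $0$). I will condition on all three jointly. Assumption~\ref{assump:sequential-ignorability} lets the outcome mean become $\eta_{\buj}(1,\bm,\bC,N)$, as in Set I, since it conditions on the full vector $\{\bM(0),\bM(1)\}$. Assumption~\ref{assump:heterogeneous-between-mediator-ignorability} then factorizes the joint density as $f_{\bM^k_{\bumj}(0)}\times f_{(M^k_{\buj}(1),\bM^{-k}(1))}$. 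Each factor is identified by randomization as the corresponding marginal of $f(\cdot\mid0,\bC,N)$ or $f(\cdot\mid1,\bC,N)$.

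\emph{Main obstacle.} The delicate step is the first equality in the Set I display: replacing conditioning on cross-world counterfactual mediator values by conditioning on the observed $\bM$. This needs randomization and Assumption~\ref{assump:sequential-ignorability} applied in the right order, and I will verify that ordering carefully.
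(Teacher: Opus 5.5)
Your proposal is correct and follows essentially the same route as the paper. In each part you use the tower property over $(\bC,N)$ and the relevant counterfactual mediators, remove the cross-world dependence with Assumption~\ref{assump:homogeneous-between-mediator-ignorability} or~\ref{assump:heterogeneous-between-mediator-ignorability}, and then use randomization, Assumption~\ref{assump:sequential-ignorability}, and consistency to reach $\eta_{\buj}$ and the observed mediator densities.

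The differences are cosmetic. For Set I the paper just cites Theorem 3.1 of \cite{cheng2024semiparametric} with $\bM$ treated as a single mediator. Your direct argument is fine, and it rightly invokes Assumption~\ref{assump:sequential-ignorability}, which the cross-world case $(a_1,a_2)=(1,0)$ genuinely needs. For $\theta_2$ the paper conditions first on $\bM^k_{\bumj}(0)$ and then on $(M^k_{\buj}(1),\bM^{-k}(1))$, rather than on all three jointly.
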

Henceforth, we refer to the identification formulas in Theorem \ref{thm:iden-NIE} as \emph{mediation functionals} \citep{tchetgen2012semiparametric}. By Theorem \ref{thm:iden-NIE}, estimands EIE$_k$, INT, ESME$_k$, and EIME$_k$ are point identifiable, as they can be expressed as linear combinations of $\theta_1$ and $\theta_2$.

\section{Generalization to $K\geq 2$ mediators}\label{sec:K-mediators}
{Although causal mediation estimands are well understood in the basic setting with two mediators, the literature for an arbitrary number of mediators remains sparse. The difficulty arises from the complex combinatorial algebra required to preserve the symmetry inherent in the basic setting. Recently, \cite{tong2025permutationinvarianceprinciplecausal} proposed a unified framework for defining \emph{permutation-equivariant} estimands, whose interpretation is invariant to permutations of the mediator labels, and established a new bridge between the literature on factorial trials \citep{mukerjee2006modern} and causal mediation analysis. In this section, we adopt this framework to generalize $\text{EIE}_k$ and $\text{INT}$ in full generality and discuss new insights on practical scalability based on ideas from the factorial trials literature. Moreover, we derive novel decompositions into spillover-aware mediation effects for arbitrary interaction effects and new results for causal ratio estimands, including results that are new even in the basic setting with $K=2$. Finally, for nonparametric identification, we provide novel generalizations of the structural assumptions.}

We first introduce some additional notations to accommodate this more complex but general setting. We define $ \ba = (a_1, \ldots, a_K) $ as a binary tuple of length $ K $, representing the treatment assignment sequence to index $K$ mediators. For $\mJ \subseteq \mK$, we define $\ba_{\mJ}$ as a binary tuple such that $a_k = 0$ for all $k \in \mJ$ and $a_k = 1$ for all $k \notin \mJ$. In particular, $\ba_\emptyset$ is a vector containing all ones, and $\ba_\mK$ is a vector containing all zeros. Let $\bM^\mJ$ denote the collection of mediators indexed by the set $\mJ$ in a given cluster, and let $\bM^\mJ(a)$ represent the counterfactual version of $\bM^\mJ$ under treatment level $a$. We also write $\bM(\ba) = (\bM^1(a_1)^\top, \ldots, \bM^K(a_K)^\top)^\top$ to denote the $K$ potential mediators under arbitrary treatment assignment sequence $\ba=(a_1,\ldots,a_K)$ in a given cluster. Of note, we have equivalent notation $\bM(1)=\bM(\ba_{\emptyset})$, $\bM(0)=\bM(\ba_{\mK})$, $ \bM^k =\bM^{\lb k\rb}$, and $ \bM^{-k} =\bM^{\mK\backslash\lb k\rb}$, where $A \backslash B = A \cap B^c$ denotes the set difference. For example, when $K = 4$ and $\mJ = \{2, 4\}$, we have that $\ba_{\mJ} = (1, 0, 1, 0)$, $\bM^\mJ=(\bM^2,\bM^4)$, $\bM^{\mK\backslash \mJ}=(\bM^1,\bM^3)$, and $\bM^{-1}=(\bM^2,\bM^3,\bM^4)$. 

Based on the general definitions, the decomposition of TE into the sum of the NDE and NIE are analogous to the case of two mediators: 
\begin{align*}
&\text{TE}=\theta_1(1,\ba_\emptyset)-\theta_1(0,\ba_{\mK})=\text{NIE}+\text{NDE},\\
    &\text{NDE}=\theta_1(1,\ba_{\mK})-\theta_1(0,\ba_{\mK}),~~\text{NIE}=\theta_1(1,\ba_\emptyset)-\theta_1(1,\ba_{\mK}),
\end{align*}
where $\theta_1(a^\ast,\ba)={E\left\{{N}^{-1}\sum_{j=1}^NY_{\buj}(a^\ast,\bM(\ba))\right\}}$.
It is important to note that the estimands defined in Section \ref{sec:two-mediators} are permutation-equivariant; that is, their interpretation is invariant to the labeling of the mediators. To maintain this equivariance in the setting of causally unordered mediators, we follow \cite{tong2025permutationinvarianceprinciplecausal} and define $\text{INT}_\mJ$ as the interaction effect among the mediators indexed by any non-empty subset $\emptyset \neq \mJ \subseteq \mK$:
\begin{align*}
    \text{INT}_\mJ=&\sum_{\mJ^\ast\subseteq \mJ}(-1)^{|\mJ^\ast|}\theta_1(1,\ba_{\mJ^\ast}).
\end{align*}
We refer to $\text{INT}_\mJ$ as the $|\mJ|$-th order interaction effect, quantifying the synergistic influence of the mediators within the set $\mJ$. In particular, the first-order interaction effect for a singleton $\lb k\rb$ coincides with $\text{EIE}_k$, that is, $\text{INT}_{\lb k\rb} = \text{EIE}_k= \theta_1(1, \ba_\emptyset) - \theta_1(1, \ba_{\{k\}})$. Hereafter, we use $\text{EIE}_k$ and $\text{INT}_{\{k\}}$ interchangeably. {Notably, there are a total of $2^K-1$ interaction effects, which increases exponentially with the number of mediators. To guide practical use, we borrow the ideas of \emph{hierarchy} and \emph{heredity} from factorial trials \citep{mukerjee2006modern}, since both settings arise from a common combinatorial framework, although their extension to our setting requires substantial new development. Specifically, hierarchy assumes that $\text{INT}_{\mJ_1}$ is more important than $\text{INT}_{\mJ_2}$ only if $|\mJ_1|<|\mJ_2|$, and heredity assumes that, provided $|\mJ_1|<|\mJ_2|$, $\text{INT}_{\mJ_2}$ is important only if $\text{INT}_{\mJ_1}$ is important. For example, under hierarchy and heredity, $\text{EIE}_1$ and $\text{EIE}_2$ are equally important, and both are more important than $\text{INT}$.} 
Similarly, we derive the  decomposition for the NIE as
\begin{align}\label{eq:decomposition-NIE-K}
    \text{NIE}=&\sum_{\emptyset\neq \mJ\subseteq\mK}(-1)^{|\mJ|+1}\text{INT}_\mJ.
\end{align}

Here, the combinatorial structure underlying this decomposition is intimately connected to the \emph{weighted inclusion-exclusion principle} defined in \citet{tong2025permutationinvarianceprinciplecausal}. 
To facilitate interpretation, we provide the following alternative characterization of $\text{INT}_\mJ$: for any $k\in \mJ$
\begin{align}\label{eq:decomposition-INT}
    &\text{INT}_\mJ=\sum_{\mJ^\ast\subseteq \mJ\backslash\lb k\rb} (-1)^{|\mJ^\ast|}\left\{\theta_1(1,\ba_{\mJ^\ast})-\theta_1\left(1,\ba_{\mJ^\ast\cup\lb k\rb}\right)\right\}.
\end{align}
Equation \eqref{eq:decomposition-INT} shows that the $\text{INT}_\mJ$ can be interpreted as the extent to which the effect of any single mediator is modified by the interaction of the remaining $|\mJ|-1$ mediators, with the choice of pivotal mediator being arbitrary. 
Moreover, Equation \eqref{eq:decomposition-INT} motivates an analogous decomposition of $\text{INT}_\mJ$ into two components: one arising from neighborhood spillover, termed the \emph{spillover interaction mediation effect} (SIME), and the other stemming from the individual, termed the \emph{individual interaction mediation effect} (IIME). Formally, it follows that, for any $k\in\mJ$, we obtain $\text{INT}_\mJ=\text{SIME}_\mJ(k)+\text{IIME}_\mJ(k)$ with
\begin{align}
    &\text{SIME}_\mJ(k)=\sum_{\mJ^\ast\subseteq \mJ\backslash\lb k\rb}(-1)^{|\mJ^\ast|}\left\{\theta_1(1,\ba_{\mJ^\ast})-\theta_2(k,\mJ^\ast)\right\},\nonumber\\
    &\text{IIME}_\mJ(k)=\sum_{\mJ^\ast\subseteq \mJ\backslash\lb k\rb}(-1)^{|\mJ^\ast|}\left\{\theta_2(k,\mJ^\ast)-\theta_1\left(1,\ba_{\mJ^\ast\cup\lb k\rb}\right)\right\},\label{eq:SIMEJ-IIMEJ}
\end{align}
where $\theta_2(k,\mJ^\ast)={E\left\{{N}^{-1}\sum_{j=1}^NY_{\buj}(1,\bM^{\mJ^\ast}(0),\bM_{\bumj}^k(0),M_{\buj}^k(1),\bM^{\mK\backslash (\mJ^\ast\cup\lb k\rb)}(1))\right\}}$ refines $\theta_1(1,\ba_{\mJ^\ast})$ by replacing $\bM_{\bumj}^k(1)$ with $\bM_{\bumj}^k(0)$.
To interpret, $\text{IIME}_\mJ(k)$ and $\text{SIME}_\mJ(k)$ respectively measure how the interaction among the other $|\mJ|-1$ mediators modifies the individual and spillover effects of the $k$-th mediator. When $\mJ=\{k\}$ and $K=2$, our decomposition reduces to that of \cite{ohnishi2024bayesian}, with $\text{ESME}_k=\text{SIME}_{\lb k\rb}(k)$ and $\text{EIME}_k=\text{IIME}_{\lb k\rb}(k)$ as a special case; but definition \eqref{eq:SIMEJ-IIMEJ} is much more general. Importantly, our definition goes beyond difference effect measure. For example, with a binary outcome for which the causal estimands are defined on the ratio scale (risk ratio or odds ratio), we provide the definitions as follows.
\begin{remark}
\label{eg:relative-risk}
Let $h_{\text{RR}}(x)=x$ and $h_{\text{OR}}(x)=x/(1-x)$ denote the transformation functions for defining the causal risk ratio and causal odds ratio estimands, respectively. Then, for $h\in\{h_{\text{RR}},h_{\text{OR}}\}$, the TE, NDE, and NIE can be defined as $\text{TE}={h(\theta_1(1,\ba_\emptyset))}/{h(\theta_1(0,\ba_\mK))}$, $\text{NDE}={h(\theta_1(1,\ba_\mK))}/{h(\theta_1(0,\ba_\mK))}$, and $\text{NIE}={h(\theta_1(1,\ba_\emptyset))}/{h(\theta_1(1,\ba_\mK))}$, and they satisfy $\text{TE}=\text{NIE}\times\text{NDE}$.
Moreover, the interaction effects among mediators in $\mJ$ can be defined as $\text{INT}_\mJ=\prod_{\mJ^\ast\subseteq \mJ}h(\theta_1(1,\ba_{\mJ^\ast}))^{(-1)^{|\mJ^\ast|}}$; the decomposition can be expressed as $\text{NIE}=\prod_{\emptyset\neq \mJ\subseteq\mK}\text{INT}_\mJ^{(-1)^{|\mJ|+1}}$.
Finally, to quantify spillover and individual interaction mediation effects, we define $\text{SIME}_\mJ(k)=\prod_{\mJ^\ast\subseteq \mJ\backslash\lb k\rb}\left[\frac{h(\theta_1(1,\ba_{\mJ^\ast}))}{h(\theta_2(k,\mJ^\ast))}\right]^{(-1)^{|\mJ^\ast|}}$ and $\text{IIME}_\mJ(k)=\prod_{\mJ^\ast\subseteq \mJ\backslash\lb k\rb}\left[\frac{h(\theta_2(k,\mJ^\ast))}{h(\theta_1(1,\ba_{\mJ^\ast\cup\lb k\rb}))}\right]^{(-1)^{|\mJ^\ast|}}$,
which satisfies $\text{INT}_\mJ=\text{SIME}_\mJ(k)\times \text{IIME}_\mJ(k)$.
\end{remark}

Algebraically, a logarithmic transformation converts the additive structure of causal mediation effects on the difference scale into a multiplicative structure on the ratio scale. To identify the parameters $\theta_1$ and $\theta_2$ with $K\geq 2$, we continue with Assumption \ref{assump:2mediatiors-total}
but propose the following two assumptions to extend and replace Assumptions \ref{assump:homogeneous-between-mediator-ignorability} and \ref{assump:heterogeneous-between-mediator-ignorability}. 

\begin{assumption}[\emph{Conditional homogeneity in average potential outcomes}]
\label{K-assump:homogeneous-between-mediator-ignorability-INT} 
For all $\mJ^\ast\subseteq \mJ$, $\mJ^\ast\neq\mK,\emptyset$, and $\bm^{\mJ^\ast}$ and $\bm^{\mK \backslash \mJ^\ast}$ as the realizations of $\bM^{\mJ^\ast}$ and $\bM^{\mK\backslash \mJ^\ast}$, respectively, 
$$E\left\{\overline{Y}(1,\bM(\ba_{\mJ^\ast}))|\bM^{\mJ^\ast}(0)=\bm^{\mJ^\ast},\bC,N\right\}=E\left\{\overline{Y}(1,\bm^{\mJ^\ast},\bM^{\mK\backslash \mJ^\ast}(1))|\bC,N\right\}.$$
\end{assumption}

\begin{assumption}
[\emph{Cross-world mediator conditional independence}]
\label{K-assump:generalized-heterogeneous-between-mediator-ignorability} 
$\{ \bM^{\mJ^\ast}(0),\bM_{\bumj}^k(0)\}\perp \{ M_{\buj}^k(1),\bM^{\mK\backslash (\mJ^\ast\cup\{ k\})}(1)\}|\bC,N$ holds for all $j\in[N]$ and $\mJ^\ast\subseteq \mJ\backslash\lb k\rb$.
\end{assumption}

Assumption \ref{K-assump:homogeneous-between-mediator-ignorability-INT} generalizes Assumption \ref{assump:homogeneous-between-mediator-ignorability} and is indexed by $\mathcal{J}$. That is, Assumption \ref{K-assump:homogeneous-between-mediator-ignorability-INT}-$\mJ$ states that conditional on $\{\bC,N\}$, the expectation of the  counterfactual cluster-average outcome $\overline Y(1,\bM(\ba_{\mJ^\ast}))$ remains invariant across strata defined by the counterfactual mediators under control, $\bM^{\mJ^\ast}(0)$. Interestingly, Supplementary Material Section \ref{sec:alternative-GCH-assumption} shows that an alternative assumption produces the same identification formulas. Moreover, to improve interpretability, the cross-world intra-mediator independence assumption, $\bM^{\mJ^\ast}(0) \perp \bM^{\mK\backslash\mJ^\ast}(1)\mid \bC,N$, can replace Assumption \ref{K-assump:homogeneous-between-mediator-ignorability-INT}.
Similarly, Assumption \ref{K-assump:generalized-heterogeneous-between-mediator-ignorability} generalizes Assumption \ref{assump:heterogeneous-between-mediator-ignorability} and is indexed by $(k,\mathcal{J})$, which
states that, two sets of cross-world potential mediators $\bM^{\mJ^\ast}(0)$ and $\{ M_{\buj}^k(1), \bM^{\mK\backslash (\mJ^\ast\cup{ k})}(1) \}$, are independent conditioning on $\{\bC,N\}$. Importantly, Assumption \ref{assump:heterogeneous-between-mediator-ignorability} or Assumption \ref{K-assump:generalized-heterogeneous-between-mediator-ignorability} does not constrain the observed-data likelihood, because it restricts only cross-world mediator distributions while allowing arbitrary correlations within same-world mediator distributions.
To proceed, we let Assumptions \ref{assump:2mediatiors-total} and \ref{K-assump:homogeneous-between-mediator-ignorability-INT} constitute Assumption Set II-$\mathcal{J}$, which facilitates the identification of, $\theta_1^{II}(1, \mathbf{a}_{\mathcal{J}^\ast})$, defined as $\theta_1(1,\mathbf{a}_{\mathcal{J}^\ast})$ for any $\mathcal{J}^\ast \subseteq \mathcal{J}$ such that $\mJ^\ast\notin\{\mK,\emptyset\}$. Similarly, Assumptions \ref{assump:2mediatiors-total} and \ref{K-assump:generalized-heterogeneous-between-mediator-ignorability} constitute Assumption Set III-$(k,\mathcal{J})$, enabling the identification of $\theta_2(k,\mathcal{J}^\ast)$ for all $\mathcal{J}^\ast \subseteq \mathcal{J} \setminus \{k\}$. The theorem below provides nonparametric identification formulas for $\theta_1$ and $\theta_2$ when $K \geq 2$, based on which the mediation estimands can be expressed as additive or multiplicative combinations of $\theta_1$ and $\theta_2$. For simplicity, we denote the estimand identified under Assumption Set I by $\theta_1^{I}(a^\ast, \mathbf{a}_{\mathcal{J}})$, defined as $\theta_1(a^\ast, \mathbf{a}_{\mathcal{J}})$ for $(a^\ast, \mathcal{J}) \in \{(0, \mathcal{K}), (1, \mathcal{K}), (1, \emptyset)\}$. 

\begin{theorem}\label{thm:iden-NIE-K}
Under Assumptions Set I, we have
\begin{align*}
\theta_1^{I}(a^\ast, \mathbf{a}_{\mathcal{J}})=&{\displaystyle E\lb 
 \frac{1}{N}\sum_{j=1}^N\int_{\mathcal{M}}\eta_{\buj}(a^\ast, \bm, \bC,N)f(\bm|\ba_{\mJ1},\bC,N)d\bm\rb},
\end{align*}
where $\ba_{\mJ1}$ denotes the first coordinate of $\ba_{\mJ}$, with $\ba_{\mathcal{K}1} = 0$ and $\ba_{\emptyset1} = 1$. Under Assumptions Set II-$\mJ$, for all $\mJ^\ast\subseteq\mJ$ such that $\mJ^\ast\notin\{\mK,\emptyset\}$, we have 
\begin{align*}
   \theta_1^{II}(1,\ba_{\mJ^\ast})= &{\displaystyle E\lb \frac{1}{N}\sum_{j=1}^N\int\eta_{\buj}(1,\bm, \bC,N) f(\bm^{\mJ^\ast}|0,\bC,N)f(\bm^{\mK\backslash\mJ^\ast}|1,\bC,N)d\bm\rb}.
\end{align*}
Finally, under Assumptions set III-$(k,\mJ)$, for all $\mJ^\ast\subseteq\mJ\backslash\lb k\rb$, we have
\begin{align*}
\theta_2(k,\mJ^\ast)=&{\displaystyle E\left\{ \frac{1}{N}\sum_{j=1}^N\int \eta_{\buj}(1, \bm, \bC,N)f(\bm^{\mJ^\ast},\bm_{\bumj}^k|0,\bC,N)f(m_{\buj}^k,\bm^{\mK\backslash (\mJ^\ast\cup\lb k\rb)}|1,\bC,N)d\bm\right\}}.
\end{align*}
\end{theorem}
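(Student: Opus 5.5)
The plan is the usual mediation-functional argument: reduce each $\theta$ to an iterated expectation conditional on $\{\bC,N\}$, write the potential outcome's conditional mean as the observed regression $\eta_{\buj}$ using randomization, consistency and Assumption \ref{assump:sequential-ignorability}, and then integrate against the law of the counterfactual mediator vector. All three parts go through $E\{N^{-1}\sum_j Y_{\buj}(a^\ast,\bM(\ba))\}=E[N^{-1}\sum_j E\{Y_{\buj}(a^\ast,\bM(\ba))\mid \bC,N\}]$, which is legitimate by bounded support of $N$ and bounded second moments (Assumption \ref{assump:2mediatiors-total}(c)). The one building block used repeatedly is the following. For any fixed $\bm$, Assumption \ref{assump:2mediatiors-total}(b), Assumption \ref{assump:sequential-ignorability} and consistency give
\[
E\{Y_{\buj}(a^\ast,\bm)\mid \bC,N\}=E\{Y_{\buj}(a^\ast,\bm)\mid A=a^\ast,\bM(a^\ast)=\bm,\bC,N\}=\eta_{\buj}(a^\ast,\bm,\bC,N).
\]
Here randomization lets us insert $A=a^\ast$, and ignorability lets us insert $\bM(a^\ast)=\bm$. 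The positivity bound $f\ge f_{\min}$ ensures $\eta_{\buj}$ is well defined on the support. Similarly, $f_{\bM^{\mJ}(a)\mid\bC,N}=f(\bm^{\mJ}\mid a,\bC,N)$ by randomization and consistency.

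\emph{Set I.} For $(a^\ast,\mJ)\in\{(0,\mK),(1,\mK),(1,\emptyset)\}$, all mediators are set to the common value $a=\ba_{\mJ1}$, so $\bM(\ba_\mJ)=\bM(a)$. Condition on $\bM(a)=\bm$ and use Assumption \ref{assump:sequential-ignorability} (independence of $Y_{\buj}(a^\ast,\bm)$ from $\bM(a)$ given $A,\bC,N$, combined with $A$'s independence) to replace $E\{Y_{\buj}(a^\ast,\bM(a))\mid \bM(a)=\bm,\bC,N\}$ by $E\{Y_{\buj}(a^\ast,\bm)\mid\bC,N\}$. The building block then turns this into $\eta_{\buj}$. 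Integrating against $f(\bm\mid a,\bC,N)$ gives the formula. The case $a^\ast\ne a$ is exactly where the cross-world independence in Assumption \ref{assump:sequential-ignorability} (jointly over $\bM(0),\bM(1)$) is needed.

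\emph{Set II-$\mJ$.} Averaging over $j$ first, write
\[
\theta_1(1,\ba_{\mJ^\ast})=E\big[E\{E(\overline Y(1,\bM(\ba_{\mJ^\ast}))\mid \bM^{\mJ^\ast}(0),\bC,N)\mid\bC,N\}\big].
\]
Assumption \ref{K-assump:homogeneous-between-mediator-ignorability-INT} replaces the innermost term by $E\{\overline Y(1,\bm^{\mJ^\ast},\bM^{\mK\backslash\mJ^\ast}(1))\mid\bC,N\}$ evaluated at $\bm^{\mJ^\ast}=\bM^{\mJ^\ast}(0)$. This quantity only involves mediators in a single world $a=1$. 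Conditioning on $\bM^{\mK\backslash\mJ^\ast}(1)=\bm^{\mK\backslash\mJ^\ast}$ and repeating the Set I argument, with treatment $1$ and ignorability, turns it into $\int N^{-1}\sum_j\eta_{\buj}(1,\bm,\bC,N)f(\bm^{\mK\backslash\mJ^\ast}\mid1,\bC,N)d\bm^{\mK\backslash\mJ^\ast}$. Integrating the outer layer against the law of $\bM^{\mJ^\ast}(0)$, namely $f(\bm^{\mJ^\ast}\mid0,\bC,N)$, yields the product form. The delicate point is that Assumption \ref{K-assump:homogeneous-between-mediator-ignorability-INT} is stated at the cluster-average level. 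So I would apply the sum over $j$ inside, and avoid any attempt at an individual-level version.

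\emph{Set III-$(k,\mJ)$.} For each $j$, let $\bM^{\dagger}_j=(\bM^{\mJ^\ast}(0),\bM^k_{\bumj}(0),M^k_{\buj}(1),\bM^{\mK\backslash(\mJ^\ast\cup\{k\})}(1))$. Condition on $\bM^\dagger_j=\bm$. Since $\bM^\dagger_j$ is a function of $\{\bM(0),\bM(1)\}$, Assumption \ref{assump:sequential-ignorability} together with randomization gives $E\{Y_{\buj}(1,\bM^\dagger_j)\mid\bM^\dagger_j=\bm,\bC,N\}=\eta_{\buj}(1,\bm,\bC,N)$. By Assumption \ref{K-assump:generalized-heterogeneous-between-mediator-ignorability}, the joint density of $\bM^\dagger_j$ given $\{\bC,N\}$ factors into the law of the $0$-world block times the law of the $1$-world block. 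Each block is a same-world marginal, hence identified by randomization as $f(\bm^{\mJ^\ast},\bm^k_{\bumj}\mid0,\bC,N)$ and $f(m^k_{\buj},\bm^{\mK\backslash(\mJ^\ast\cup\{k\})}\mid1,\bC,N)$. Summing over $j$ and taking expectations gives the result.

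I expect the main obstacle to be bookkeeping rather than substance. The hardest step is Set II: there the homogeneity assumption is about a cluster average and involves a mediator value plugged into a partially counterfactual outcome. Getting the conditioning order right there (first on $\bM^{\mJ^\ast}(0)$, then on the world-$1$ mediators) is what needs care. For general weights $w(\bV,N)$ the same steps apply verbatim after multiplying inside by $w/E(w)$, since $w$ is $\{\bC,N\}$-measurable.
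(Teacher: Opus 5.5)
Your proposal is correct and follows essentially the same route as the paper. Both use iterated expectations given $\{\bC,N\}$. For Set II, both apply Assumption \ref{K-assump:homogeneous-between-mediator-ignorability-INT} at the cluster-average level and then condition on the world-$1$ mediators. For Set III, both combine the cross-world factorization of Assumption \ref{K-assump:generalized-heterogeneous-between-mediator-ignorability} with ignorability and randomization.

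The differences are cosmetic. The paper cites Theorem 3.1 of \citet{cheng2024semiparametric} for Set I, whereas you prove it directly. For Set III, the paper conditions first on the world-$0$ block and then on the world-$1$ block, whereas you condition on the whole mixed-world mediator vector at once and factor its law.
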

The empirical analogue of the moment conditions in Theorem \ref{thm:iden-NIE-K} motivates a set of g-computation estimators $\widehat{\theta}^{\text{g}}$ that standardizes the outcome regression to the target population. In particular, the joint mediator density $f(\bM \mid A, \bC, N)$ serves as a key nuisance function in the identification formulas and we next introduce a specific regression model suitable for the CRT context to operationalize the estimation of this density function.

\section{A copula model for joint mediator density in CRTs}\label{ss:semi-copula}
To enable flexibly modeling the distribution of all mediators in CRTs, we propose a semiparametric elliptical copula marginal regression approach. This approach includes the following features: (i) accommodating mixed-type mediators (continuous and binary, as in our PPACT application study); (ii) allowing for heavy-tailed copula structures and distributions; and (iii) providing interpretable copula association parameters to capture both inter-mediator and within-cluster between-individual dependencies.
To proceed, the joint mediator density can be characterized as two variational-independent components: (i) the one-dimensional marginal mediator densities $\{ f(M_{\buj}^k | A, \bC, N) : j \in [N], k \in \mK \}$ and (ii) the dependence structure captured by the copula $\mathcal{C}(\boldsymbol{u}^1,\ldots,\boldsymbol{u}^K|A,\bC,N)$, where $\boldsymbol{u}^k=(u_{\bullet1}^k,\ldots,u_{\bullet N}^k)^\top,k\in\mK$. By the Sklar's theorem \citep{nelsen2006introduction}, these two components are connected through $  F(\bM | A, \bC, N)=\mathcal{C}(F(M_{\bullet1}^1),\ldots,F(M_{\bullet N}^1),\ldots,F(M_{\bullet 1}^K),\ldots,F(M_{\bullet N}^K)|A,\bC,N)$,
where $F(M_{\buj}^k)$ denotes the cumulative distribution function (CDF) of the mediator $M_{\buj}^k$ conditional on $\{A,\bC,N\}$.
The proposed ECMR model generalizes the Gaussian copula marginal regression model \citep{masarotto2012gaussian} by incorporating the class of elliptical distributions \citep{owen1983class}. The $d$-dimensional elliptical distribution denoted as $\mathcal{E}_d(\boldsymbol{\mu},\boldsymbol{\Omega},g)$, for a random vector $\bX\in\mathbb{R}^d$ is characterized by the density function of the form $f_{\mathcal{E}_d(\boldsymbol{\mu},\boldsymbol{\Omega},g)}(\bx)=\det(\boldsymbol{\Omega})^{-1/2}g((\bx-\boldsymbol{\mu})^\top \boldsymbol{\Omega}^{-1} (\bx-\boldsymbol{\mu}))$,
where $g: \mathbb{R}^{+} \to \mathbb{R}^{+} \cup \{+\infty\}$ is the generator function satisfying the normalizing equation, $\boldsymbol{\mu}\in\mathbb{R}^d$ is the location parameter, and $\boldsymbol{\Omega}\in\mathbb{R}^{d\times d}$ is the scaled covariance matrix. We require $\boldsymbol{\Omega} \succ 0$ so that the covariance matrix is positive definite, where `$\succ$' denotes the Löwner order. The multivariate normal, $t$-, and symmetric Laplace distributions are well-known elliptical family members, with generator functions listed in the Table \ref{tab:elliptical_distributions}. 

\begin{table}[ht!]
\centering
\caption{
Summary of popular members of the elliptical distribution family. Here, $\Gamma(z)$ is the Gamma function and $K_\nu(z)$ is the modified Bessel function of the second kind.}
\label{tab:elliptical_distributions}
\resizebox{0.9\textwidth}{!}{%
\begin{tabular}{lc}
\toprule
\textbf{Elliptical Distribution} & \textbf{Generator Function $g$} \\ \hline
Normal & 
\(\displaystyle g_{\mathcal{N}}(u)=\displaystyle(2\pi)^{-d/2}\exp\left(-{u}/{2}\right)\)\\ [1ex]
$t$ with degree of freedom $\nu$ & 
\(\displaystyle g_t(u;\nu) ={\displaystyle\Gamma\left({d+\nu}/{2}\right)}/{\displaystyle\Gamma\left({\nu}/{2}\right)} (\nu\pi)^{-d/2}\left(1+\frac{u}{\nu}\right)^{-{(\nu+d)}/{2}}\) \\ [0.5ex]
Cauchy & $g_t(u;1)$ \\ [0.5ex]
Symmetric Laplace with $\boldsymbol{\mu}=\mathbf{0}$ &  $g_L(u)={2\displaystyle(2\pi)^{-d/2}}\left({u}/{2}\right)^{(1-d/2)/2}K_{1-d/2}(\sqrt{2u})$\\ 
\bottomrule
\end{tabular}}
\end{table}

The ECMR model then assumes that
\begin{align}
    &M_{\buj}^k=F^{-1}(F_{\mathcal{E}(g)}(\underline{\epsilon}_{\buj}^k)),\label{eq:elliptical-copula-regression}\\
    &\mathcal{C}_{\mathcal{E}_{NK}}=\boldsymbol{F}_{\mathcal{E}_{NK}(\mathbf{0},\boldsymbol{R}, g)}\left(F_{\mathcal{E}(g)}^{-1}(u_{\bullet1}^1),\ldots,F_{\mathcal{E}(g)}^{-1}(u_{\bullet N}^1),\ldots,F_{\mathcal{E}(g)}^{-1}(u_{\bullet 1}^K),\ldots,F_{\mathcal{E}(g)}^{-1}(u_{\bullet N}^K)\right),\label{eq:model-elliptical-copula}
\end{align}
where $F^{-1}$ is the quantile function of $M_{\bullet j}^k$ conditional on $\{A,\bC,N\}$, $\underline{\epsilon}_{\buj}^k\sim\mathcal{E}(g)\equiv\mathcal{E}_1(0,1, g)$ is the error term or latent variable, $F_{\mathcal{E}(g)}$ is the CDF of $\mathcal{E}(g)$, and $\boldsymbol{F}_{\mathcal{E}_{NK}(\mathbf{0},\boldsymbol{R}, g)}$ is the CDF of $\mathcal{E}_{NK}(\mathbf{0},\boldsymbol{\Omega}, g)$ such that $\boldsymbol{\Omega}$ is standardized to be the correlation matrix $\boldsymbol{R}$, i.e., $R_{ij}=\Omega_{ij}/\sqrt{\Omega_{ii}\Omega_{jj}}$. 
Here, Equations \eqref{eq:elliptical-copula-regression} and \eqref{eq:model-elliptical-copula} correspond to components (i) (marginal) and (ii) (dependence structure) of the copula model, respectively, with Equation \eqref{eq:model-elliptical-copula} defining an elliptical copula \citep{derumigny2022identifiability}. Moreover, the mapping between $\underline{\epsilon}_{\bullet j}^k$ and $M_{\bullet j}^k$ in Equation \eqref{eq:elliptical-copula-regression} is one-to-one for continuous mediators, with $\underline{\epsilon}_{\bullet j}^k=F_{\mathcal{E}(g)}^{-1}(F(M_{\bullet j}^k))$, and many-to-one for discrete mediators. In other words, Equation \eqref{eq:elliptical-copula-regression} is conceptually analogous to a one-dimensional degenerate copula or subcopula in that it links the mediator to the error term or latent variable through a CDF transformation. Equation \eqref{eq:elliptical-copula-regression} affords considerable flexibility by leaving the CDF $F$ unspecified, thereby encompassing several prominent classes of semiparametric models. For instance, this framework encompasses: (i) the additive model for continuous mediators, defined as $M^k_{\buj} = \underline{\eta}^k_{\buj}(A, \bC, N) + \underline{\sigma} \underline{\epsilon}_{\buj}^k$, where $\underline{\eta}^k_{\buj}(A, \bC, N) = E\{ M^k_{\buj} | A, \bC, N\}$ and $\underline{\sigma}^2$ is a scale parameter quantifying the residual variability after conditioning on $\{A, \bC, N\}$; and (ii) generalized additive models (e.g., semiparametric logistic or probit regression) for the binary mediators. Finally, for continuous mediators, the error terms follow a multivariate standard elliptical distribution:
\begin{align}\label{eq:alternative-ellipitical copula}
    \boldsymbol{\underline{\epsilon}}\equiv({\boldsymbol{\underline{\epsilon}}^1}^\top,\ldots,{\boldsymbol{\underline{\epsilon}}^K}^\top)^\top\sim\mathcal{E}_{NK}(\mathbf{0},\boldsymbol{R},g),
\end{align}
where $\boldsymbol{\underline{\epsilon}}^k=(\underline{\epsilon}_{\bullet1}^k,\ldots,\underline{\epsilon}_{\bullet N}^k)^\top$, as shown in Supplementary Material Section \ref{supp;ss:proof of proposition 1}. Since the elliptical copula in Equation \eqref{eq:model-elliptical-copula} is identifiable only at the jumps for discrete variables, that is, only as a subcopula, we propose replacing Equation \eqref{eq:model-elliptical-copula} with Equation \eqref{eq:alternative-ellipitical copula} to ensure identifiability. We show in the Supplementary Material that \eqref{eq:alternative-ellipitical copula} implies \eqref{eq:model-elliptical-copula}.

The proposed ECMR model nests the Gaussian copula marginal regression \citep{masarotto2012gaussian} as a special case by setting $g=g_{\mathcal{N}}$, the generator function for the normal distribution.  To align with regression-based methods for CRTs, we assume that copula association is independent of $\{A,\bC,N\}$ and exchangeable across individuals. Under this assumption, the correlation matrix $\boldsymbol{R}$ can be parameterized with $K^2$ parameters and is expressed as: $\boldsymbol{R}(\boldsymbol{\mathcal{Q}} )=(\boldsymbol{\mathcal{Q}}_0-\boldsymbol{\mathcal{Q}}_1)\otimes\boldsymbol{I}_{N}+ \boldsymbol{\mathcal{Q}_1}\otimes \boldsymbol{1}_N \boldsymbol{1}_N^\top$,
where $\boldsymbol{\mathcal{Q}}_0\succ0$ represents the symmetric within-individual correlation matrix 
and $\boldsymbol{\mathcal{Q}}_1$ represents the symmetric unstructured between-individual correlation matrix. 
We require $\boldsymbol{\mathcal{Q}}_0\succ\boldsymbol{\mathcal{Q}}_1$ to guarantee $\boldsymbol{R}(\boldsymbol{\mathcal{Q}} )\succ0$; the between‐individual correlation must be ``smaller'' than the within‐individual correlation. Often, the elements of $\boldsymbol{\mathcal{Q}}\equiv(\boldsymbol{\mathcal{Q}}_0,\boldsymbol{\mathcal{Q}}_1)$ are referred to as intracluster correlation coefficients (ICCs). Then the joint mediator density for the proposed ECMR model satisfies $f(\bM|A,\bC,N)\propto \int_{\mathcal{D}(\bM,\mK_d)} f_{\mathcal{E}_{NK}(\boldsymbol{R},g)}({\underline{\widetilde{\boldsymbol{\epsilon}}}}_{\mK_c},{\underline{\widetilde{\boldsymbol{\epsilon}}}}_{\mK_d})d{\underline{\boldsymbol{\epsilon}}}_{\mK_d}$, where $\mathcal{K}_c$ and $\mathcal{K}_d$ ($\mK=\mathcal{K}_c\cup\mathcal{K}_d$) respectively denote the sets of indices corresponding to the continuous and discrete mediators, $\mathcal{D}(\bM,\mK_d)=\prod_{j=1}^N\prod_{k\in\mK_d}[F_{\mathcal{E}(g)}^{-1}\{F((M_{\buj}^k)^{-})\},F_{\mathcal{E}(g)}^{-1}\{F(M_{\buj}^k)\}]$ is the cartesian product of intervals among discrete mediators, $F((M_{\buj}^k)^{-})$ is the left-limit of $F$ at $M_{\buj}^k$, 
and ${\underline{\boldsymbol{\epsilon}}}_{\mK_c}$ and ${\underline{\boldsymbol{\epsilon}}}_{\mK_d}$ are the vectors of error terms. To improve stability in finite samples and mitigate the curse of dimensionality, one may employ a pre-specified generator $g$, selected based on prior knowledge. For instance, when modeling heavy-tailed data, it is recommended to utilize a generator function corresponding to a $t$-distribution with low degrees of freedom. Subsequently, we estimate the marginal component $\widehat{F}$ in Equation \eqref{eq:elliptical-copula-regression} using data-adaptive machine learning methods, such as SuperLearner with cluster-level cross-validation \citep{luedtke2016super}, and propose a pseudo-likelihood estimator \citep{genest1995semiparametric} for the ICC parameters $\boldsymbol{\mathcal{Q}}$: subject to $\boldsymbol{\mathcal{Q}}_0\succ\boldsymbol{\mathcal{Q}}_1\text{ and }\boldsymbol{\mathcal{Q}}_0\succ0$,
\begin{align}
    \widehat{\boldsymbol{\mathcal{Q}}}=&\underset{{{{\boldsymbol{\mathcal{Q}}}}}}{\arg\max} \prod_{i}^I \int_{\widehat{\mathcal{D}}_i(\bM_i,\mK_d)} f_{\mathcal{E}_{N_iK}(\boldsymbol{R}(\boldsymbol{\mathcal{Q}}),g)}(\widehat{{\underline{{\boldsymbol{\epsilon}}}}}_{\mK_c,i},{\underline{{\boldsymbol{\epsilon}}}}_{\mK_d})d{\underline{\boldsymbol{\epsilon}}}_{\mK_d},\label{eq:MLE-estimation-ECMR}
\end{align}
where 
$\widehat{\mathcal{D}}_i(\bM,\mK_d)=\prod_{j=1}^{N_i}\prod_{k\in\mK_d}[F_{\mathcal{E}(g)}^{-1}\{\widehat{F}((M_{\buj}^k)^{-})\},F_{\mathcal{E}(g)}^{-1}\{\widehat{F}(M_{\buj}^k)\}]$ and $\widehat{\underline{\epsilon}}_{ij}^k=F_{\mathcal{E}(g)}^{-1}(\widehat{F}(M_{\buj}^k))$. 
A more detailed discussion of the ECMR model, including its extension to a data-adaptive generator using sieve methods, and properties that facilitate density calculations for any subset of mediators, is provided in Section \ref{sec:SI-ECMR-SUPP} of the Supplementary Material. 

\section{Semiparametric efficient estimation and inference}\label{Sec:semi-estimation}
Because g-computation estimators $\widehat{\theta}^\text{g}$ can be biased under working model misspecification, a common tool to facilitate robust and semiparametrically efficient inference is the construction of one-step estimators based on efficient influence functions (EIFs) \citep{tsiatis2006semiparametric}. Following \cite{kennedy2022semiparametric}, when derive the EIFs, we place no restrictions on the distribution of the observed data $\bmO_i$, except for the assumption that the treatment probability is known by design. For illustration, we focus on estimating the mediation functionals $\theta_1$ and $\theta_2$, from which all mediation effects, including $\text{TE}$, $\text{NDE}$, $\text{NIE}$, $\text{INT}_{\mJ}$, $\text{SIME}_{\mJ}(k)$, and $\text{IIME}_{\mJ}(k)$, are obtained as they are additive or multiplicative combinations of $\theta_1$ and $\theta_2$. 
To simplify the exposition, let $(k,j)$ denote the index for the $k$-th mediator of individual $j$, and let 
$\mathcal{T} = \mathcal{K}\times [N_i]$ represent the collection of indices for cluster $i$. For any $\mathcal{U}\subseteq \calT$, we define, for any $a_1, a_2, a_3 \in \{0,1\}$:
\begin{align*}
    &\lambda^{(1)}_{\buj}(\mathbf{C}, N;\calU,a_1a_2a_3) = \int \eta_{\buj}(a_1, \mathbf{M}, \mathbf{C}, N) f(\widetilde{\mathbf{M}} | a_2, \mathbf{C}, N) f(\widetilde{\mathbf{M}}^c | a_3, \mathbf{C}, N) \, d\mathbf{M},\\
    &\lambda^{(2)}_{\buj}(\widetilde{\bM}^c,\mathbf{C}, N;\mathcal{U},a_1) = \int \eta_{\buj}(1, \mathbf{M}, \mathbf{C}, N) f(\widetilde{\mathbf{M}} | a_1, \mathbf{C}, N)  \, d\widetilde{\mathbf{M}},\\
&\lambda^{(3)}_{\buj}(\bM,\bC,N;\calU,a_1 a_2 a_3)=\frac{f(\widetilde{\bM}|a_1,\bC,N)f(\widetilde{\bM}^c|a_2,\bC,N)}{f(\bM|a_3,\bC,N)},
\end{align*}
where $\widetilde{\mathbf{M}}$ and $\widetilde{\mathbf{M}}^c$ denote the subvectors of $\mathbf{M}$ indexed by $\mathcal{U}$ and $\mathcal{T} \setminus \mathcal{U}$, respectively. Here, $\lambda^{(1)}_{\buj}$ denotes the integral of the individual conditional outcome mean function, $\eta_{\buj}$, with respect to a composite mediator distribution; $\lambda^{(2)}_{\buj}$ denotes the integral of $\eta_{\buj}$ with respect to the distribution of a subset of mediators under condition $a_1$; $\lambda^{(3)}_{\buj}$ denotes the marginal-to-joint mediator density ratio associated with the partition $\{\mathcal{U}, \mathcal{T} \setminus \mathcal{U}\}$. 

To characterize the EIFs, we introduce the following auxiliary functions: (i) $\eta^\ast_{\buj,a^\ast\ba_{\mJ1}}(\bC,N)=\int \eta_{\buj}(a^\ast,\bm,\bC,N)f(\bm|\ba_{\mJ1},\bC,N)d\bm$; (ii) $r^{k\mJ^\ast a_1a_2 a_3}_{\buj}(\bM,\bC,N)=\lambda^{(3)}_{\buj}(\bM,\bC,N;\calU,a_1 a_2 a_3)$ for $\calU=(\mJ^\ast\cup\{k\}\times[N_i])\backslash \{(k,j)\}$; (iii) $\widetilde{r}^{\mJ^\ast a_1a_2 a_3}_{\buj}(\bM,\bC,N)=\lambda^{(3)}_{\buj}(\bM,\bC,N;\calU,a_1 a_2 a_3)$ for $\calU=\mJ^\ast\times[N_i]$; (iv) $\kappa^{k\mJ^\ast}_{\buj}(\bC,N)=\lambda^{(1)}_{\buj}(\mathbf{C}, N;\calU,1 10)$, $\widetilde{\kappa}_{\buj}^{k\mJ^\ast}(\bM_{\bumj}^k,\bM^{\mJ^\ast},\bC,N)=\lambda^{(2)}_{\buj}(\widetilde{\bM}^c,\mathbf{C}, N;\mathcal{U},1)$, and $\check{r}_{\buj}^{k\mJ^\ast}(M_{\buj}^k,\bM^{\mK\backslash(\mJ^\ast\cup\{k\})},\bC,N)=\lambda^{(2)}_{\buj}(\widetilde{\bM}^c,\mathbf{C}, N;\calT\backslash \mathcal{U},0)$ for $\calU=(\mK\backslash (\mJ^\ast\cup\lb k\rb)\times [N_i])\cup\{(k,j)\}$; (v) $\tau_{\buj}^{\mJ^\ast}(\bC,N)=\lambda^{(1)}_{\buj}(\mathbf{C}, N;\calU,101)$, $\widetilde{\tau}_{\buj}^{\mJ^\ast}(\bM^{J^\ast},\bC,N)=\lambda^{(2)}_{\buj}(\widetilde{\bM}^c,\mathbf{C}, N;\calT\backslash \mathcal{U},1)$, and $\check{\tau}_{\buj}^{\mJ^\ast}(\bM^{\mK \backslash J^\ast},\bC,N)=\lambda^{(2)}_{\buj}(\widetilde{\bM}^c,\mathbf{C}, N;\mathcal{U},0)$ for $\calU=\mJ^\ast\times[N_i]$. Closer inspection reveals that these auxiliary functions are primarily composed of two key nuisance functions: individual conditional outcome mean $\eta_{\buj}$ and the mediator densities conditional on $\{A,\mathbf{C},N\}$. We then obtain the EIFs for both $\theta_1$ and $\theta_2$.
\begin{theorem}\label{thm:EIF}
Under Assumptions Set I, the nonparametric EIF for $\theta^{I}_1(a^\ast,\ba_{\mJ})$, denoted as $\varphi(\bmO;\theta^{I}_1(a^\ast,\ba_{\mJ}))$, is given by
 \begin{align*}
&\varphi(\bmO;\theta^{I}_1(a^\ast,\ba_{\mJ}))=\frac{w(\bV,N)}{N}\sum_{j=1}^N\bigg [\frac{\mathcal{I}(A=a^\ast)}{\Pr(A=a^\ast)}\frac{f(\bM|\ba_{\mJ1},\bC,N)}{f(\bM|a^\ast,\bC,N)}\lb Y_{\buj}-\eta_{\buj}(a^\ast,\bM,\bC,N)\rb+\\
    &\frac{\mathcal{I}(A=\ba_{\mJ1})}{\Pr(A=\ba_{\mJ1})}\lb\eta_{\buj}(a^\ast,\bM,\bC,N)-\eta^{\ast}_{\buj,a^\ast\ba_{\mJ1}}(\bC,N)\rb+\eta^{\ast}_{\buj,a^\ast\ba_{\mJ1}}(\bC,N)\bigg]-\theta_1(a^\ast,\ba_{\mJ}).
\end{align*}
Under Assumptions Set II-$\mJ$, for all $\mJ^\ast\subseteq\mJ$ such that $\mJ^\ast\notin\{\mK,\emptyset\}$, the nonparametric EIF for $\theta^{II}_1(1,\ba_{\mJ^\ast})$, denoted as $\varphi(\bmO;\theta^{II}_1(1,\ba_{\mJ^\ast}))$, is given by
  \begin{align*}   
&\frac{w(\bV,N)}{N}\sum_{j=1}^N\Bigg[\frac{\mathcal{I}(A=0)}{\Pr(A=0)}\widetilde{\tau}^{\mJ^\ast}_{\buj}(\bM^{J^\ast},\bC,N) +\left\{1-\sum_{a=0}^1\frac{\mathcal{I}(A=a)}{\Pr(A=a)}\right\}\tau^{\mJ^\ast}_{\buj}(\bC,N)+\\
    &\frac{\mathcal{I}(A=1)}{\Pr(A=1)}\Bigg[\widetilde{r}^{\mJ^\ast 011}_{\buj}(\bM,\bC,N)
    \left\{ Y_{\buj}-\eta_{\buj}(1,\bM,\bC,N)\right\}+\check{\tau}^{\mJ^\ast}_{\buj}(\bM^{\mK\backslash J^\ast},\bC,N)\Bigg]\Bigg]-\theta_1(1,\ba_{\mJ^\ast}).
\end{align*}   
Finally, under Assumptions set III-$(k,\mJ)$, for all $\mJ^\ast\subseteq\mJ\backslash\lb k\rb$, the nonparametric EIF for $\theta_2(k,\mJ^\ast)$, denoted as $\varphi(\bmO;\theta_2(k,\mJ^\ast))$, is given by
   \begin{align*}     
&\frac{w(\bV,N)}{N}\sum_{j=1}^N\Bigg[\frac{\mathcal{I}(A=0)}{\Pr(A=0)}\widetilde{\kappa}_{\buj}^{k\mJ^\ast}(\bM_{\bumj}^k,\bM^{\mJ^\ast},\bC,N) +\left\{1-\sum_{a=0}^1\frac{\mathcal{I}(A=a)}{\Pr(A=a)}\right\}\kappa^{k\mJ^\ast}_{\buj}(\bC,N)+\\
    &\frac{\mathcal{I}(A=1)}{\Pr(A=1)}\left[r^{k\mJ^\ast011}_{\buj}(\bM,\bC,N)\left\{ Y_{\buj}-\eta_{\buj}(1,\bM,\bC,N)\right\}+\check{\kappa}_{\buj}^{k\mJ^\ast}(M_{\buj}^k,\bM^{\mK\backslash(\mJ^\ast\cup\{k\})},\bC,N)\right]\Bigg]-\theta_2(k,\mJ^\ast).
\end{align*}
\end{theorem}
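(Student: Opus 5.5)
We will derive each EIF by the standard Gateaux-derivative (pathwise differentiability) argument. Take a regular one-dimensional parametric submodel $\{P_\epsilon\}$ through the true law and compute $\nabep \theta(P_\epsilon)$. The goal is to write this derivative as $E\{\varphi(\bmO)S_\epsilon(\bmO)\}$, where $S_\epsilon$ is the submodel score. Since the model for $\bmO$ is nonparametric apart from the known $\Pr(A=a)$, the resulting $\varphi$ is unique and is therefore the EIF.

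To fix the weights, we carry the general weight $w(\bV,N)$ throughout. We treat $\theta=E\{w\,\Psi\}/E\{w\}$, where $\Psi$ is the cluster-level functional. Differentiating the ratio gives the factor $w/E(w)$ and the centering $-\theta$; we normalize $E(w)=1$, as in the displayed formulas. The observed-data likelihood factorizes as
\[
p(N,\bC)\,p(A)\,f(\bM\mid A,\bC,N)\,p(\bY\mid A,\bM,\bC,N).
\]
Accordingly the score splits into four orthogonal pieces $S_{N,C}+S_A+S_{M}+S_{Y}$, with $S_A=0$ because $\pi$ is known.

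\textbf{Set I.} We differentiate the g-formula of Theorem~\ref{thm:iden-NIE-K} factor by factor, using the product rule.
\begin{enumerate}
\item Perturbing $p(N,\bC)$ yields $\frac{w}{N}\sum_j \eta^\ast_{\buj,a^\ast\ba_{\mJ1}}(\bC,N)-\theta$.
\item Perturbing $\eta_{\buj}$ gives $\int \nabep\eta_{\buj}\, f(\bm\mid \ba_{\mJ1},\bC,N)\,d\bm$. We rewrite this as an expectation over the $A=a^\ast$ arm by inserting $\mathcal{I}(A=a^\ast)/\Pr(A=a^\ast)$ and the density ratio $f(\bM\mid\ba_{\mJ1},\bC,N)/f(\bM\mid a^\ast,\bC,N)$. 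This produces the residual term $\{Y_{\buj}-\eta_{\buj}\}$ via $E\{(Y-\eta)S_Y\}$.
\item Perturbing $f(\bm\mid\ba_{\mJ1},\bC,N)$ gives the centered term $\mathcal{I}(A=\ba_{\mJ1})/\Pr(A=\ba_{\mJ1})\,\{\eta_{\buj}-\eta^\ast_{\buj}\}$.
\end{enumerate}
Summing the three pieces gives the first display.

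\textbf{Set II and Set III.} The same steps apply, but the mediator density now enters as a product of two densities under different arms: $f(\widetilde\bM\mid 0)\,f(\widetilde\bM^c\mid 1)$. Here $\calU=\mJ^\ast\times[N]$ for $\theta_1^{II}$, and $\calU$ is the set in (ii)/(iv) for $\theta_2$.
\begin{enumerate}
\item \emph{Outcome-regression piece.} Reweighting to the $A=1$ arm gives the ratio $\lambda^{(3)}=\widetilde r^{\mJ^\ast011}$ (respectively $r^{k\mJ^\ast011}$) multiplying $Y_{\buj}-\eta_{\buj}(1,\cdot)$.
\item \emph{Perturbing the arm-0 factor.} The derivative of $f(\widetilde\bM\mid 0,\bC,N)$ gives
\[
\frac{\mathcal{I}(A=0)}{\Pr(A=0)}\bigl\{\widetilde\tau_{\buj}(\widetilde\bM)-\tau_{\buj}\bigr\},
\]
where $\widetilde\tau$ integrates $\eta$ against the arm-1 marginal of the complement. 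The analogous $\widetilde\kappa$ arises for $\theta_2$.
\item \emph{Perturbing the arm-1 marginal.} The derivative of $f(\widetilde\bM^c\mid 1,\bC,N)$ gives
\[
\frac{\mathcal{I}(A=1)}{\Pr(A=1)}\bigl\{\check\tau_{\buj}(\widetilde\bM^c)-\tau_{\buj}\bigr\}.
\]
\end{enumerate}
Adding the $p(N,\bC)$ contribution $\tau_{\buj}$ and collecting the three $\tau$ terms gives the coefficient $\{1-\sum_a\mathcal{I}(A=a)/\Pr(A=a)\}\tau_{\buj}$, which matches the stated form.

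\textbf{Main obstacle.} The hardest step is the marginal-density scores. The score of the marginal $f(\widetilde\bM\mid a,\bC,N)$ equals $E\{S_M(\bM)\mid \widetilde\bM,A=a,\bC,N\}$, so we must condition the joint mediator score down to a subvector. We must then verify that $\int \eta\,\partial_\epsilon f(\widetilde\bm\mid0)\,f(\widetilde\bm^c\mid1)\,d\bm$ equals
\[
E\bigl[\mathcal{I}(A=0)/\Pr(A=0)\,\{\widetilde\tau(\widetilde\bM)-\tau\}\,S_M\bigr].
\]
This identity holds once we use $E\{S_M\mid A,\bC,N\}=0$ and iterated expectations. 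The bookkeeping over index sets is delicate for $\theta_2$, where $\calU$ mixes the neighbors' $k$-th mediators with whole mediator blocks. We will handle it uniformly through the generic $\lambda^{(1)},\lambda^{(2)},\lambda^{(3)}$ for arbitrary $\calU\subseteq\calT$. Concretely, we first prove one lemma giving the EIF of the generic functional
\[
E\Bigl\{\frac{w}{N}\sum_j \lambda^{(1)}_{\buj}(\bC,N;\calU,1\,a_2\,a_3)\Bigr\}, \qquad a_2\neq a_3,
\]
and then specialize $\calU$ to obtain both the Set II and Set III formulas. Finally, we check mean-zero and finite variance; these follow from the positivity bound $f_{\min}$ and the bounded moments in Assumption~\ref{assump:2mediatiors-total}.
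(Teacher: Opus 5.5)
Your plan follows the paper's proof. Both use the same likelihood factorization with known $\pi$, and apply the product rule over the outcome-regression factor, the arm-$0$ factor, the arm-$1$ factor and $f(\bC,N)$ (the paper's $T_1$--$T_4$). Both use the conditional-score identity of Lemma~\ref{lemma:score} to turn each derivative into a covariance with the score, and the quotient rule for the weight. The only organizational difference is this: the paper cites \cite{cheng2024semiparametric} for Set I and repeats the $T_1$--$T_4$ computation separately for $\theta_1^{II}$ and $\theta_2$. Your single lemma over a generic $\calU$ covers both at once, though it must allow $\calU$ to depend on $j$ for $\theta_2$.

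Two small corrections. First, the quotient rule gives $\varphi=E\{w(\bV,N)\}^{-1}\{\phi-\theta\,w(\bV,N)\}$, which is what the paper's proof actually concludes. Normalizing $E(w)=1$ therefore leaves the centering $-\theta\,w(\bV,N)$, not $-\theta$. The displayed formula coincides with the EIF only when $w\equiv1$, the case adopted in the main text. Second, because $\Pr(A=a)$ is known, a gradient is determined only up to adding some $h(A,\bC,N)$ with $E\{h\mid\bC,N\}=0$. So uniqueness is not what makes your $\varphi$ efficient. What does is that each of your pieces lies in the score space of $f(\bC,N)$, $f(\bM\mid A,\bC,N)$ or $f(\bY\mid\bM,A,\bC,N)$, which holds by construction. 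The paper glosses over the same point.
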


To compute the one-step estimators based on EIFs, we would need to model the nuisance functions, specifically $\eta_{\buj}$ and the conditional mediator densities. To address the random dimensions of individual-level mediators and covariates in estimating $\eta_{\buj}$, we assume that ${\eta}_{\buj}(A, \bM, \bC, N) \equiv {\eta}_{\buj}(A, \xi_{\buj}(\bM), \xi^\ast_{\buj}(\bX), \bV, N)$, where $\xi_{\buj}(\bM)$ and $\xi^\ast_{\buj}(\bX)$ are known, non-random-dimensional mappings that capture the effect mechanisms of the individual-level mediators and individual-level covariates, respectively. 
The above model assumption is weak, as the functional forms of $\xi_{\buj}(\bM)$ and $\xi^\ast_{\buj}(\bX)$ remain unspecified. For example, $\xi_{\buj}(\bM) = \bM_{\buj}$ and $\xi^\ast_{\buj}(\bX) = \bX_{\buj}$ if each individual’s outcome depends only on their own mediators and covariates, with no within-cluster interference effects at the mediator level or covariate level. In contrast, if a within-cluster interference effect exists and is separable, then $\xi_{\buj}(\bM) = (\bM_{\buj}, \widetilde{\xi}_{\buj}(\bM)^\top)$ and $\xi^\ast_{\buj}(\bX) = (\bX_{\buj}, \widetilde{\xi}^\ast_{\buj}(\bX)^\top)$, where the non-random-dimensional functions $\widetilde{\xi}_{\buj}(\bM)$ and $\widetilde{\xi}^\ast_{\buj}(\bX)$ capture the paths of within-cluster interference effects (e.g., average values of mediators and covariates over neighbor individuals within the same cluster) of mediators and individual-level covariates, respectively. 

To enable robust and efficient inference, we propose one-step estimators for $\theta \in \{ \theta^{I}_1(a^\ast, \ba_{\mJ}),\allowbreak\theta^{II}_1(1, \ba_{\mJ^\ast}), \theta_2(k, \mJ^\ast) \}$
based on the EIFs in Theorem \ref{thm:EIF}. For ease of presentation, we simplify the notation for conditional mediator densities: (i) $f\equiv f(\bM|A,\bC,N)$, $f_{\mJ^\ast}\equiv f(\bM^{\mJ^\ast}|A,\bC,N)$, and $f_{\mK\backslash\mJ^\ast}\equiv f(\bM^{\mK\backslash\mJ^\ast}|A,\bC,N)$; and (ii) $f_{\bumj}\equiv f(\bM^{\mJ^\ast},\bM^k_{\bullet,-j}|A,\bC,N)$ and $f_{\buj}\equiv f(M_{\buj}^k,\bM^{\mK\backslash (\mJ^\ast\cup\lb k\rb)}|A,\bC,N)$. We adopt the notation $f^{a} \equiv f(\mathbf{M} | a, \mathbf{C}, N)$ to emphasize the dependence on the treatment value $a$; analogous notation is applied to other nuisance functions. Let $\bGamma(\theta)$ denote the collection of nuisance functions for a generic mediation functional $\theta$. Specifically, $\bGamma(\theta^{I}_1(a^\ast, \ba_{\mJ}))=\{\eta_{\buj},f^{\ba_{\mJ1}},f^{a^\ast}:j\in [N]\}$, $\bGamma(\theta^{II}_1(1,\ba_{\mJ^\ast}))=\{\eta_{\buj},f, f_{\mJ^\ast},f_{\mK\backslash\mJ^\ast}:j\in [N]\}$, and $\bGamma(\theta_2(k,\mJ^\ast))=\{\eta_{\buj},f, f_{\bumj},f_{\buj}:j\in [N]\}$. We write $\varphi(\bmO; \theta) = \varphi(\bmO; \theta, \mathbf{\Gamma}(\theta))$ to explicitly indicate the dependence of the EIF on the set of nuisance functions $\mathbf{\Gamma}(\theta)$, and denote the  estimated nuisance functions by $\widehat{\mathbf{\Gamma}}$.

Then the semiparametric one-step estimator is formally given by 
$$\widehat{\theta}=\mP_I\{\phi(\theta;\widehat{\bGamma
}(\theta))\},$$
where $\phi(\theta)=\varphi(\theta)+\theta$ in the uncentered EIF and $\mP_I\lb V \rb = I^{-1} \sum_{i=1}^I V_i$ denotes the empirical average over the clusters. Here, the one-step estimator can be implemented using nuisance functions estimated via parametric working models; the resulting estimator, denoted by $\widehat{\theta}^{\text{EIF.PAR}}$, is robust to certain misspecifications of the working models (precise robustness properties are detailed subsequently). Alternatively, the estimator can be implemented using nuisance functions estimated via data-adaptive machine learning algorithms, such as the Super Learner with cluster-level cross-validation \citep{luedtke2016super}; this resulting \emph{debiased machine learning} estimator is denoted by $\widehat{\theta}^{\text{EIF.DML}}$. To mitigate overfitting bias, we follow \cite{dml} by implementing $\widehat{\theta}^{\text{EIF.DML}}$ using sample cross-fitting.
To proceed, the sample is randomly partitioned at the cluster level into $S$ independent folds of approximately equal size.
Let $\bmO^s$ be the collection of observed data in $s$-th fold,
and $I_s$ be the number of clusters in $\bmO^s$. For each fold, we estimate the nuisance functions $\bGamma(\theta)$ using the training sample $\bmO^{-s} \equiv \bmO\backslash\bmO^s$. The resulting estimated nuisance functions are denoted by $\widehat{\bGamma}^s(\theta)$. For each testing sample $\bmO^s$ and the corresponding empirical measure, $\mathbb{P}_{s}(V)\equiv I_s^{-1}\sum_{i
\in\bmO^s} V_i$, we define $\widehat{\phi}^s(\bmO;\theta)=\mathbb{P}_s\{ \phi(\bmO;\theta,\widehat{\bGamma}^s(\theta))\}$. Finally, we obtain that $\widehat{\theta}^{\text{EIF.DML}}=1/I{\sum_{s=1}^SI_s\widehat{\phi}^s(\bmO;\theta)}$. The following theorem summarizes the large-sample property of the semiparametric one-step estimator $\widehat{\theta}$.

\begin{theorem}\label{thm:dml} Assume that the following regularity conditions hold:
(i) $|w(\bV, N)|$ is uniformly bounded;
(ii) $f$ is uniformly bounded away from zero and $\{\widehat{\eta}, f_{\buj},f_{\mK\backslash\mJ^\ast}, \widehat{f}_{\buj}, \widehat{f}_{\mK\backslash\mJ^\ast},\widehat{f}_{\bumj},\allowbreak\widehat{f}_{\mJ^\ast},\widehat{f}:j\in[N]\}$ are uniformly bounded with probability one; (iii) $E[\{Y_{\buj}-\eta_{\buj}(a,\bM,\bC,N)\}^2|\allowbreak a,\bM,\bC,N]<\infty$ uniformly for all $a\in\{0,1\}$ and $j\in[N]$ with probability one; and (iv) conditional on $N$, $\lVert \widehat{\boldsymbol{\Gamma}}(\theta)-\boldsymbol{\Gamma}(\theta)\rVert_{2,\max}=o_\mP(1)$ with probability one, where $\lVert \boldsymbol{q}\rVert_{2,\max}=\max_r\lVert q_r\rVert_{2}$, $q_r$ is the $r$th component of $\boldsymbol{q}$, and $\lVert \bullet\rVert_2$ is $\mathcal{L}^2(\mP)$ norm. Then the estimation error satisfies
  \begin{align*}
      &\widehat{\theta}-\theta=(\mP_I-E)\lb \varphi(\bmO;\theta)\rb +\text{Rem}_{\theta},
  \end{align*}  
where $\text{Rem}_{\theta}$ satisfies: (i) provided that $\eta_{\buj}^1=\eta_{\buj}(1,\bM,\bC,N)$ and $\overline{\eta}=N^{-1}\sum_{j=1}^N \eta^1_{\buj}$,
\begin{align*}
   \text{Rem}_{\theta^{I}_1(a^\ast,\ba_{\mJ})}=&O_\mP\left(\lVert \widehat{\overline{\eta}}-\overline{\eta}\rVert_2\{\lVert \widehat{f}^{a^\ast}-f^{a^\ast}\rVert_2+\lVert \widehat{f}^{\ba_{\mJ1}}-f^{\ba_{\mJ1}}\rVert_2\}\right); 
\end{align*}
(ii) for $\theta=\theta^{II}_1(1,\ba_{\mJ^\ast})$, given $d_2(\widehat{g}_j,g_j)=\left\lVert \max_{j}|\widehat{g}_j-g_j|\right\rVert_2$ for any $g$ indexed by $j$,
\begin{align*}
\text{Rem}_{\theta}=O_{\mP}&\left(d_2(\widehat{\eta}_{\buj}^1,{\eta}^1_{\buj})\left\{\lVert \widehat{f}^1-f^1\rVert_2+\lVert\widehat{f}_{\mK\backslash\mJ^\ast}^1-f_{\mK\backslash\mJ^\ast}^1\rVert_2+\lVert\widehat{f}_{\mJ^\ast}^0-f_{\mJ^\ast}^0\rVert_2\right\}+\right.\\
   &\left.~~\lVert\widehat{f}_{\mK\backslash\mJ^\ast}^1-f_{\mK\backslash\mJ^\ast}^1\rVert_2 \lVert\widehat{f}_{\mJ^\ast}^0-f_{\mJ^\ast}^0\rVert_2\right);
\end{align*}
and (iii) for $\theta=\theta_2(k,\mJ^\ast)$,
\begin{align*}
\text{Rem}_{\theta}=&O_{\mP}\left(d_2(\widehat{\eta}^1_{\buj},{\eta}^1_{\buj})\left\{\lVert \widehat{f}^1-f^1\rVert_2+d_2(\widehat{f}_{\buj}^1,f_{\buj}^1)+d_2(\widehat{f}_{\bumj}^0,f_{\bumj}^0)\right\}+
   d_2(\widehat{f}_{\buj}^1,f_{\buj}^1) d_2(\widehat{f}_{\bumj}^0,f_{\bumj}^0)\right).
\end{align*}

\end{theorem}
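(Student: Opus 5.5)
The plan is the standard von Mises expansion for one-step estimators, applied to each of the three functionals in turn. For a generic $\theta$ with uncentered EIF $\phi(\theta;\bGamma)$, I write
\begin{align*}
\widehat{\theta}-\theta=(\mP_I-E)\{\varphi(\bmO;\theta)\}+(\mP_I-E)\{\phi(\theta;\widehat{\bGamma})-\phi(\theta;\bGamma)\}+R(\widehat{\bGamma},\bGamma),
\end{align*}
where $R(\widehat{\bGamma},\bGamma)=E\{\phi(\theta;\widehat{\bGamma})\}-\theta$ and the expectation is taken over a new cluster with $\widehat{\bGamma}$ held fixed. The middle empirical-process term is handled in one of two ways. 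For $\widehat{\theta}^{\text{EIF.DML}}$ I use cross-fitting: conditionally on the training fold, the term has mean zero and variance bounded by $E\{\phi(\widehat{\bGamma})-\phi(\bGamma)\}^2/I_s$. Conditions (i)–(iii) (bounded $w$, densities bounded away from zero and above, finite conditional outcome variance) combined with the $L_2$ consistency in (iv) make this $o_\mP(1)$, so the term is $o_\mP(I^{-1/2})$ by Chebyshev. For $\widehat{\theta}^{\text{EIF.PAR}}$ I instead invoke a Donsker argument for the parametric working models. The bounded support of $N$ in Assumption \ref{assump:2mediatiors-total}(c) lets me condition on $N$ and take maxima over $j\in[N]$ at the cost of only a finite constant. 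This is why the remainder is stated in terms of $d_2$ and $\lVert\cdot\rVert_{2,\max}$.

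The main work is computing $R$ for each functional and showing that it is a sum of products of errors. For $\theta^{I}_1$ I first take conditional expectations given $(A,\bM,\bC,N)$. The residual term becomes
\begin{align*}
E\Big[\tfrac{\widehat f^{\ba_{\mJ1}}}{\widehat f^{a^\ast}}f^{a^\ast}(\eta-\widehat\eta)\Big]
\end{align*}
integrated over $\bm$. The second term contributes $\int(\widehat\eta-\widehat\eta^\ast)f^{\ba_{\mJ1}}$. Adding $\widehat\eta^\ast-\eta^\ast$ and collecting, the zero-order pieces cancel, leaving
\begin{align*}
\int(\eta-\widehat\eta)\Big\{\tfrac{\widehat f^{\ba_{\mJ1}}}{\widehat f^{a^\ast}}f^{a^\ast}-f^{\ba_{\mJ1}}\Big\}d\bm .
\end{align*}
Writing the bracket as $\widehat f^{\ba_{\mJ1}}(f^{a^\ast}-\widehat f^{a^\ast})/\widehat f^{a^\ast}+(\widehat f^{\ba_{\mJ1}}-f^{\ba_{\mJ1}})$ and applying Cauchy–Schwarz together with boundedness gives the stated rate. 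Averaging over $j$ with weight $w/N$ turns the $\eta$ error into the error of $\overline{\eta}$, because the density factors do not depend on $j$.

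For $\theta^{II}_1$ and $\theta_2$ the structure is the same, but the target density is a product $f_{\mJ^\ast}^0 f_{\mK\backslash\mJ^\ast}^1$ (respectively $f^0_{\bumj}f^1_{\buj}$), and the EIF contains the three pieces $\widetilde\tau$, $\check\tau$ and $\tau$. I plan to add and subtract terms so that $R$ splits into two parts. The first is
\begin{align*}
\int(\eta^1-\widehat\eta^1)\Big\{\tfrac{\widehat f^0_{\mJ^\ast}\widehat f^1_{\mK\backslash\mJ^\ast}}{\widehat f^1}f^1-f^0_{\mJ^\ast}f^1_{\mK\backslash\mJ^\ast}\Big\},
\end{align*}
which is bounded by the $\eta$ error times the sum of the three density errors. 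The second is
\begin{align*}
\int\widehat\eta^1\,(\widehat f^0_{\mJ^\ast}-f^0_{\mJ^\ast})(\widehat f^1_{\mK\backslash\mJ^\ast}-f^1_{\mK\backslash\mJ^\ast}),
\end{align*}
which arises exactly as in the analysis of doubly robust product-of-marginals functionals. The linear terms cancel because $\widetilde\tau$ integrates $\widehat\eta$ against $\widehat f_{\mK\backslash\mJ^\ast}^1$ over the $f^0$-distribution of $\bM^{\mJ^\ast}$, $\check\tau$ does the symmetric thing, and $\tau$ subtracts the doubly-hatted cross term. The $\theta_2$ case reuses this calculation with the partition $\calU$ from item (iv); the only change is that the partition depends on $j$, which produces the $d_2$ norms.

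I expect the main obstacle to be this algebraic bookkeeping for $\theta^{II}_1$ and $\theta_2$. The difficulty is verifying that every first-order term cancels exactly, which involves the $\{1-\sum_a \mathcal{I}(A=a)/\Pr(A=a)\}\tau$ term and the known randomization probability. It is also where I must keep track of which integrals are taken over subvectors rather than the full mediator vector. I would carry out that cancellation carefully for $\theta^{II}_1$ and then map $\theta_2$ onto it by relabeling.
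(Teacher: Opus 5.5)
Your proposal is correct and follows essentially the same route as the paper. You split the error into the main empirical average, a cross-fitted empirical-process term controlled by showing $E[\{\phi(\bmO;\theta,\widehat{\bGamma})-\phi(\bmO;\theta,\bGamma)\}^2\mid \bmO^{-s}]=o_\mP(1)$, and a bias term written as products of nuisance errors that reproduces the paper's decomposition lemma (including your two-part split for $\theta^{II}_1$ and its $j$-dependent relabeling for $\theta_2$), then finish with Cauchy--Schwarz, boundedness, and maxima over $j\in[N]$ justified by the bounded cluster size; the only differences are that you derive the bias identity directly where the paper cites the single-mediator lemma of Cheng et al., and you add an optional Donsker route for the parametric estimator that the paper does not use.
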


The regularity conditions in Theorem \ref{thm:dml} are mild and standard: RC (i)–(iii) impose uniform boundedness and square‐integrability of the relevant functions, and RC (iv) ensures the nuisance estimators are consistent in the $\mathcal{L}_2(\mathbb{P})$ norm. Provided that $\text{Rem}_{\theta} = o_\mP(I^{-1/2})$, the proposed one-step estimator $\widehat{\theta}$ is $\sqrt{I}$-consistent, asymptotically normal, and semiparametrically efficient such that the asymptotic variance achieves the variance lower bound $E\lb\varphi(\bmO; \theta)^2\rb$. That is, it is sufficient to require the convergence rates of the nuisance function estimators of $o_\mP(I^{-1/4})$ in the $\mathcal{L}_2(\mP)$ norm to ensure the above asymptotic optimality of $\widehat{\theta}$, which is a standard rate of convergence condition in the semiparametric literature \citep{dml}. 
Finally, for inference based on one-step estimators with nuisance functions estimated from parametric working models, a robust approach is to use a nonparametric cluster bootstrap. On the other hand, for inference based on the debiased machine learning estimator, we use the cross-fitted empirical second moment of the EIF as a consistent variance estimator. For example, the variance estimator of NIE on the difference scale is: $\widehat{\mathbb{V}}(\theta_1(1,\ba_{\mK})-\theta_1(0,\ba_{\mK}))\equiv I^{-1} \sum_{s=1}^S I_s\mP_s\{[\widehat{\varphi}^s(\bmO;\theta_1(1,\ba_{\mK}))-\widehat{\varphi}^s(\bmO;\theta_1(0,\ba_{\mK}))]^2\}$,
where $\widehat{\varphi}^s(\bmO;\theta)=\varphi(\bmO;\widehat{\phi}^s(\bmO;\theta),\widehat{\boldsymbol{\Gamma}}^s(\theta))$. 

Finally, we comment on the robustness of $\widehat{\theta}^{\text{EIF.PAR}}$, as implied by the remainder terms in Theorem \ref{thm:dml}. To remain general, we do not limit the discussion to the ECMR model. In brief, the robustness depends on the mediation functional. First, for the functionals used to identify the $\text{TE}$, $\text{NDE}$, and $\text{NIE}$, the estimator is doubly robust. That is, it is consistent and asymptotically normal (CAN) if either the outcome mean model ($\{\eta_{\buj}:j\in[N]\}$) or the mediator density model ($\{f^a:a\in\{0,1\}\}$) is correctly specified, but not necessarily both. This is consistent with property of the semiparametric efficient estimator developed for independent data \citep{tchetgen2012semiparametric}. Second, for the functionals used to identify the interaction effects $\text{INT}_{\mJ}$ (including exit indirect effects $\text{EIE}_k$), $\widehat{\theta}^{\text{EIF.PAR}}$ is triply robust, meaning it is CAN if one of three combinations—the $\{f_{\mJ^\ast}^a,\eta_{\buj}:a\in\{0,1\},j\in[N]\}$ models, the $\{f_{\mK\backslash\mJ^\ast}^a,\eta_{\buj}:a\in\{0,1\},j\in[N]\}$ models, or the $\{f^a:a\in\{0,1\}\}$ model—is correctly specified; this provides three chances to accurately estimate the EIE and the entire set of interaction effects that offer a decomposition of the NIE.  
Finally, for $\theta_2$, the functionals used to identify $\text{SIME}_{\mJ}(k)$ and $\text{IIME}_{\mJ}(k)$, $\widehat{\theta}^{\text{EIF.PAR}}$ is conditionally doubly robust, meaning that it is CAN if either $\{f_{\buj}^a,\eta_{\buj}:a\in\{0,1\},j\in[N]\}$ or $\{f^a:a\in\{0,1\}\}$ is correctly specified. This conditional double robustness is conceptually similar to the semiparametric efficient estimator in \cite{cheng2024semiparametric} based on a single mediator in CRTs. 
Finally, all the proposed estimators rely on inverse-mediator-density weights, which may become unstable if the boundedness condition of the mediator density is practically violated. To mitigate this issue, we outline a stabilization approach based on weighted least squares in the following Remark.


\begin{remark}
To implement the stabilization while preserving the robustness property, one fits a weighted ordinary least squares model to obtain a refined outcome regression estimate, $\widehat{\eta}^s_{\buj}$. This model is fitted with intercept only and conditional on $A=1$, using $\widehat{r}^{k\mJ^\ast011}_{\buj}w(\bV,N)/N$ as the weights and $\widehat{\eta}_{\buj}$ as the offset term. Following \cite{tchetgen2012semiparametric}, $\widehat{\theta}$ no longer explicitly depends on the inverse-mediator-density weights, because the following terms are empirically zero: $\sum_{i=1}^I\sum_{j=1}^{N_i}{w(\bV_i,N_i)}/{N_i}\mathcal{I}(A_i=1)\widehat{r}^{k\mJ^\ast011}_{ij}(\bM_i,\bC_i,N_i)\{Y_{ij}-\widehat{\eta}_{ij}^{\text{s}}(1,\bM_i,\bC_i,N_i)\}=0$. {Steps for implementing the stabilized estimators are provided in Section \ref{sec:supp:stablization-implementation} of the Supplementary Material.}
\end{remark}

\section{Simulation experiments}\label{Sec:simulation}

\subsection{Simulation design}\label{ss:simulation.design}
We carried out simulation studies to illustrate the properties of different mediation effect estimators in a context mimicking the PPACT CRT. We consider $I = 100$ clusters with $K=2$ mediators to facilitate clear presentation of results, without loss of generality. We generate cluster size $N_i$ based on a discrete uniform distribution ranging from $N_{\min}=10$ to $N_{\max}=30$. We then simulate a continuous cluster-level covariate as $V_i\sim\mathcal{N}(2N_i/N_{\max},1)$, and a continuous individual-level covariate as $\bX_i\sim\mathcal{N}(2V_i\mathbf{1}_{N_i},0.95\boldsymbol{I}_{N_i}+0.05\mathbf{1}_{N_i}\mathbf{1}_{N_i}^\top)$. We then generate the treatment $A_i \sim \text{Bernoulli}(0.5)$. We employ the ECMR model to simulate the two mediators. We first generate the residuals as $\boldsymbol{\underline{\epsilon}}_i\sim\mathcal{E}_{N_i\times2}((\boldsymbol{\mathcal{Q}}_0-\boldsymbol{\mathcal{Q}}_1)\otimes\boldsymbol{I}_{N_i}+ \boldsymbol{\mathcal{Q}_1}\otimes \boldsymbol{1}_{N_i} \boldsymbol{1}_{N_i}^\top,g_{\mathcal{N}})$, where $\boldsymbol{\mathcal{Q}}_0=0.9\boldsymbol{I}_2+0.1\mathbf{1}_2\mathbf{1}_2^\top$ and $\boldsymbol{\mathcal{Q}}_1=0.05\mathbf{1}_2\mathbf{1}_2^\top$ are the within-individual and between-individual mediator correlation matrices, respectively. Then, we simulate (i) one continuous mediator whose marginal distribution follows a location–scale normal family: $M_{ij}^1\sim\mathcal{N}\left(-1+0.2A_i+\frac{(1+A_i)N_i}{2N_{\max}}+0.5V_i+0.5X_{ij},\frac{25}{4}\right)$,
and (ii) one binary mediator with a logistic marginal regression model: $M_{ij}^2\sim\text{Bernoulli}\left(\text{expit}\left(0.1A_i+\frac{(1+A_i)N_i}{5N_{\max}}+0.3V_i-0.3X_{ij}\right)\right)$,
where $\text{expit}(x)=\{1+\exp(-x)\}^{-1}$. The observed outcome $\mathbf{Y}_i$ is generated from a multivariate normal distribution with mean $0.2A_i + \frac{(1+A_i)N_i}{2N_{\max}} + 0.5V_i + 0.5X_{ij} + 0.6M^1_{ij} + 0.9M^2_{ij} - 0.8M^1_{ij}M^2_{ij} - 0.7\frac{M^\ast_{ij}}{N_i-1}$ and covariance matrix $0.9\boldsymbol{I}_{N_i} + 0.1\mathbf{1}_{N_i}\mathbf{1}_{N_i}^\top$,
where $M^\ast_{ij}=\sum_{j'\neq j} (M^1_{ij'}+2M^2_{ij'}+0.8M^1_{ij'}M^2_{ij'})$.

We consider five estimators: (i) the g-computation estimator, $\widehat{\theta}^{\text{g}}$; (ii) the semiparametric one-step estimator with parametric nuisance estimation, $\widehat{\theta}^{\text{EIF.PAR}}$; (iii) the debiased machine learning estimator, $\widehat{\theta}^{\text{EIF.DML}}$; (iv) a stabilized variant of $\widehat{\theta}^{\text{EIF.PAR}}$, denoted by $\widehat{\theta}^{\text{EIF.PAR.S}}$; and (v) a stabilized variant of $\widehat{\theta}^{\text{EIF.DML}}$, denoted by $\widehat{\theta}^{\text{EIF.DML.S}}$. For the marginal component defined in Equation \eqref{eq:elliptical-copula-regression}, we assume a semiparametric additive model with unspecified mean function $\underline{\eta}^1_{\buj}$ and a semiparametric logistic regression model with unspecified covariate effect function $\underline{\eta}^2_{\buj}$.
For each estimator, we evaluate performance under five scenarios based on the specification of three key components: the design matrices for $\{\underline{\eta}^1_{\buj},\underline{\eta}^2_{\buj}\}$, the design matrix for $\eta_{\buj}$, and the generator function $g$ for the ECMR model. Scenario (a) assumes all components are correctly specified; (b) misspecifies $\{\underline{\eta}^1_{\buj},\underline{\eta}^2_{\buj}\}$; (c) misspecifies $\eta_{\buj}$; (d) misspecifies the generator $g$; and (e) misspecifies all three components simultaneously. We implement the misspecification as follows. First, for the design matrix, we assume the transformed covariates are used for inference \citep{Kang2007STS}: $\widetilde{V}_i=\Phi(V_i)$ and $\widetilde{X}_{ij}=\widetilde{V}_i\text{expit}(-X_{ij}/2)$. Second, we use the misspecified generator $g_t(u;2)$, corresponding to a $t$-distribution with 2 degrees of freedom, to model the heaviest possible tail behavior in the $t$-family. The true values of the mediation estimands are computed based on a superpopulation of $5\times 10^5$ clusters. For inference, the variances of the parametric estimators are estimated using $100$ cluster bootstrap resamples, whereas the debiased machine learning estimators use the proposed closed-form variance estimators with cross-fitting. Performance metrics, including bias, average empirical standard errors (AESE), and empirical coverage probabilities, are computed over $1000$ iterations.

\subsection{Simulation results}\label{ss:simulation.results}
Simulation results for the NDE, $\text{INT}_{\{1\}}$, $\text{INT}_{\{2\}}$, and $\text{INT}_{\{1,2\}}$ are summarized in Figure \ref{fig:box-NDE-INT} (box plots of point estimates) and Figure \ref{fig:box-NDE-INT-CP} (95\% confidence interval coverage). Several observations are worth noting. First, all five estimators yield valid inference with negligible bias, nominal coverage, and comparable efficiency when all working models are correctly specified. Second, different working model misspecifications lead to different implications. The g-computation estimator can be biased and exhibit undercoverage if any working model is misspecified. In addition, the one-step estimator with parametric nuisance working models provides some protection against working model misspecifications, which aligns with the theoretical discussion in Section \ref{Sec:semi-estimation}. The debiased machine learning estimator further provides approximately valid inference, with only mild bias and undercoverage, even when all models are misspecified with respect to the design matrix input and all other estimators are biased. This demonstrates its potential to capture complex mediator and outcome surfaces in challenging settings. In particular, misspecification of the copula generator function appears to have negligible impact on the one-step estimators and the debiased machine learning estimators. In contrast, it can induce notable bias in the g-computation estimator, further illustrating its vulnerability in this complex setting. Interestingly, we find that misspecifying the mediator conditional mean functions appears to have a smaller impact than misspecifying the outcome mean functions. {To see this, all estimators exhibit only mild bias and undercoverage in scenario (b), although this is not guaranteed by theory. In contrast, misspecification of the outcome mean in scenario (c) can lead to severely biased inference for $\widehat{\theta}^{\text{g}}$.} Third, stabilization can improve efficiency and mitigate unstable point estimates, except for the total effect, particularly for the debiased machine learning estimators. {For example, the median efficiency gain from stabilization for $\widehat{\theta}^{\text{EIF.DML.S}}$ relative to $\widehat{\theta}^{\text{EIF.DML}}$ is 66\%, with the largest gains for interaction effects.} 
Finally, simulation results for all other mediation estimands are provided in Supplementary Material Figures \ref{fig:box-SIME}–\ref{fig:box-IIME} (point estimates) and Supplementary Material Figures \ref{fig:box-SIME-CP}–\ref{fig:box-IIME-CP} (95\% confidence interval coverage), and the results are generally qualitatively similar. In conclusion, we recommend the stabilized debiased machine learning estimator, as it offers enhanced robustness to model misspecification while maintaining finite-sample stability and efficiency comparable to competing methods even when all nuisance models are correctly specified.

\begin{figure}[ht!]
    \centering
    \includegraphics[width=1\linewidth]{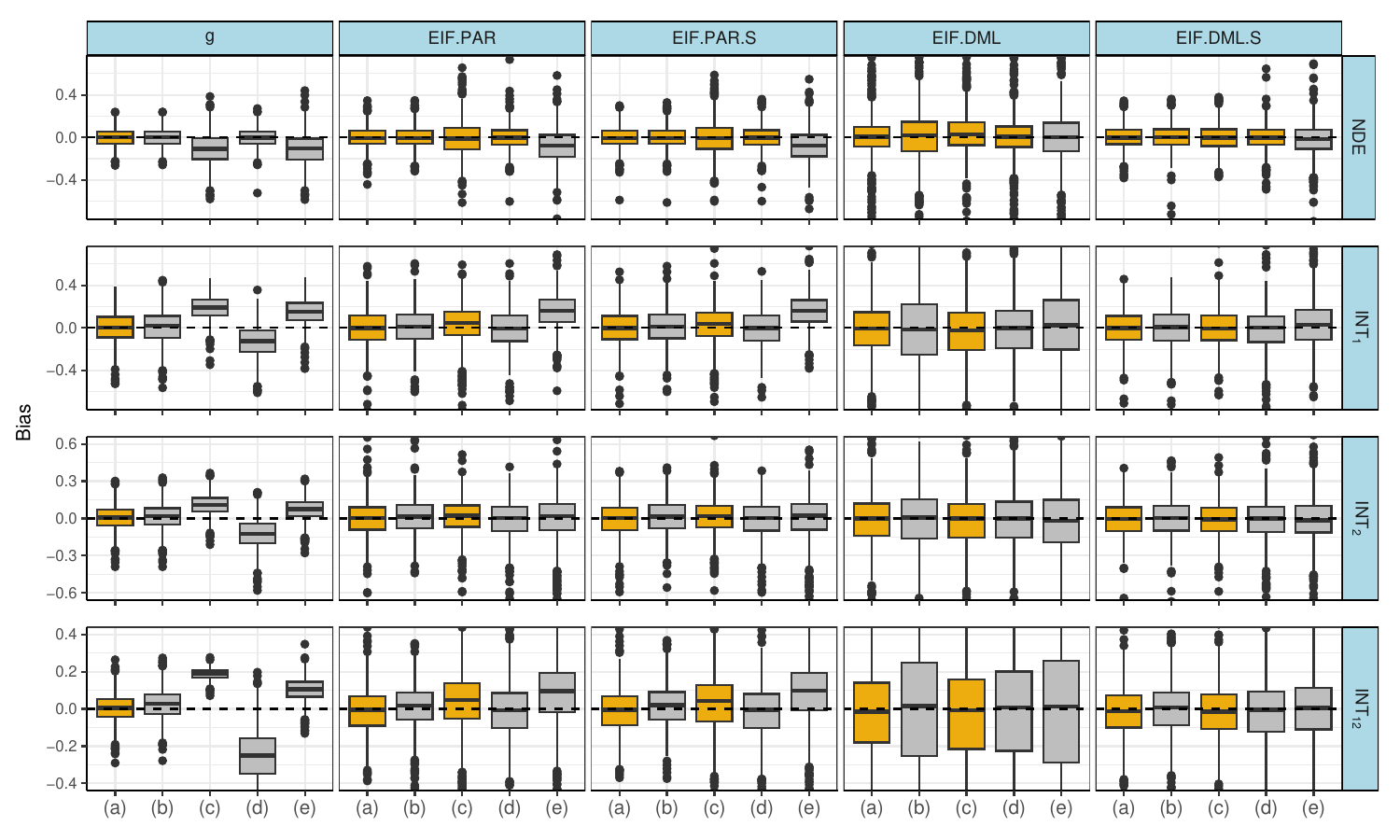}
    \caption{\small Simulation results for the bias of the natural indirect effect and interaction effects. 
    The nuisance model misspecifications follow: scenario (a) assumes all components are correctly specified; (b) misspecifies the mediator conditional means $\{\underline{\eta}_{\buj}^1,\underline{\eta}_{\buj}^2\}$; (c) misspecifies the outcome conditional mean $\eta_{\buj}$; (d) misspecifies the copula generator $g$; and (e) misspecifies all three simultaneously. The estimators that are expected to be theoretically valid are highlighted in orange. }
    \label{fig:box-NDE-INT}
\end{figure}

\begin{figure}[ht!]
    \centering
    \includegraphics[width=\linewidth]{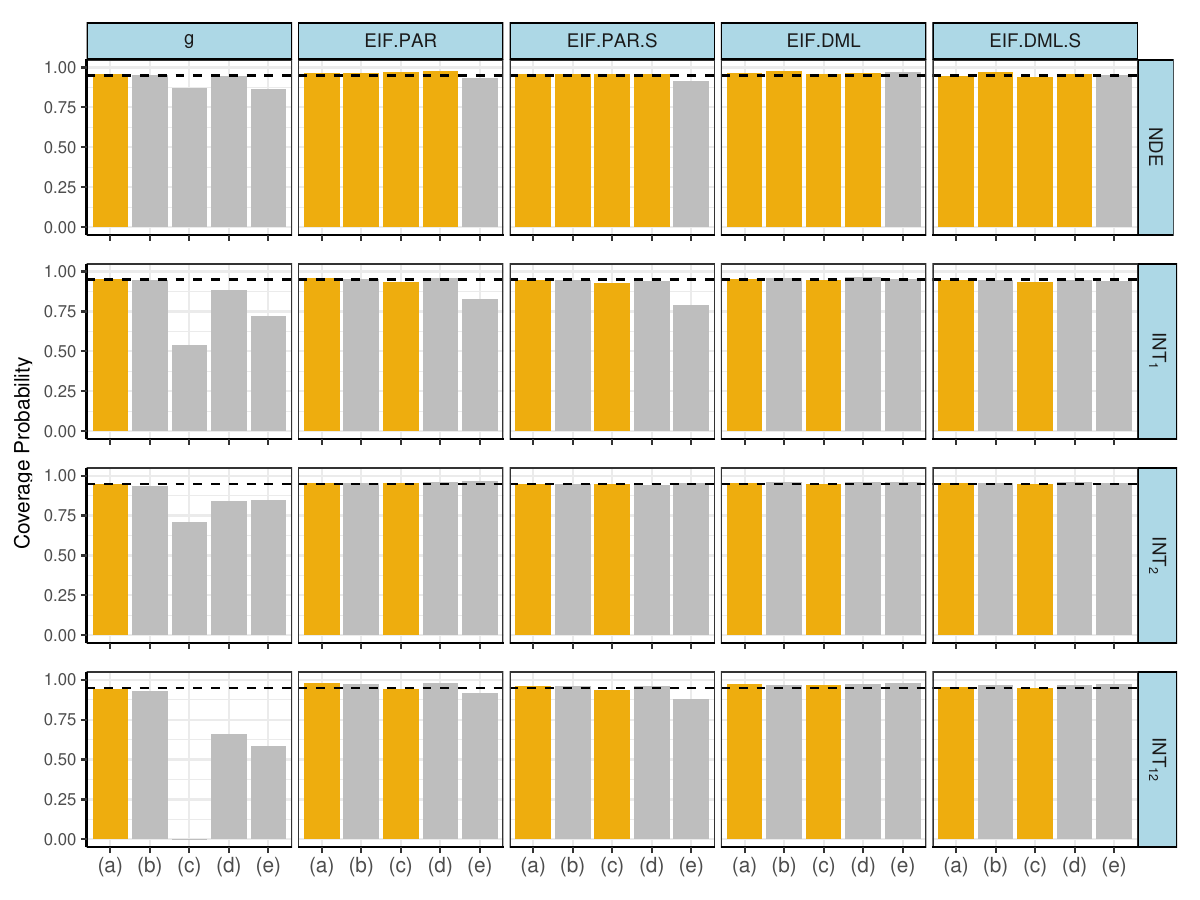}
    \caption{\small Simulation results for the empirical coverage probabilities of the natural indirect effect and interaction effects. 
    The nuisance model misspecifications follow: scenario (a) assumes all components are correctly specified; (b) misspecifies the mediator conditional means $\{\underline{\eta}_{\buj}^1,\underline{\eta}_{\buj}^2\}$; (c) misspecifies the outcome conditional mean $\eta_{\buj}$; (d) misspecifies the copula generator $g$; and (e) misspecifies all three simultaneously. The estimators that are expected to be theoretically valid are highlighted in orange. }
    \label{fig:box-NDE-INT-CP}
\end{figure}

\section{Data Application: Analysis of PPACT CRT}\label{sec:PPACT-application}
We reanalyze the PPACT CRT study, a cluster-randomized trial conducted across three Kaiser Permanente health care regions in Georgia, Hawaii, and the Northwest \citep{debar2022primary}. The objective of the PPACT CRT was to evaluate the effectiveness of an interdisciplinary, primary care-based CBT intervention for adults with chronic pain receiving long-term opioid therapy. The primary outcome was patient-reported pain impact at 12 months, measured by the PEGS scale (a composite score assessing pain intensity and interference with enjoyment of life, general activity, and sleep). The PPACT CRT comprises $850$ patients nested in $106$ primary care providers (i.e., the clusters), with the cluster size ranging from $3$ to $13$. We consider three mediators: the Roland-Morris Disability Questionnaire (RMDQ) score (continuous, $M_{ij}^1$), an indicator of improvement in satisfaction with primary care services from baseline (binary, $M_{ij}^2$), and the average daily opioid dose (continuous, $M_{ij}^3$). Since all three mediators are measured at 6 months, it is natural to treat them as unstructured, and our methods therefore apply naturally.

In the analysis, we (i) adjust for cluster size and $8$ additional individual-level covariates: sex, age, diagnosis of $\geq2$ chronic conditions, count of pain types, benzodiazepine receipt in 6 months prior to randomization, baseline PEGS score, baseline RMDQ score, and average daily opioid dose per day in 6 months prior to randomization; (ii) assume 
that the conditional outcome mean depends on the cluster means of other individuals' mediators. For illustration, we (i) focus primarily on the TE, NDE, NIE, $\text{INT}_{\mJ}$, and the ESME and EIME,
following the hierarchy and heredity principles; (ii) focus primarily on the one-step estimators with parametric nuisance estimation, specifically $\widehat{\theta}^{\text{EIF.PAR}}$ and $\widehat{\theta}^{\text{EIF.PAR.S}}$ (using linear regression for the outcome and continuous mediators, and logistic regression for the binary mediator, with standard errors computed from $250$ bootstrap resamples), as well as the debiased machine learning estimators $\widehat{\theta}^{\text{EIF.DML}}$ and $\widehat{\theta}^{\text{EIF.DML.S}}$ (with nuisance functions estimated via Super Learner \citep{luedtke2016super}); and (iii) assume a Gaussian copula, as misspecification of the copula generator appears to have limited impact on the one-step estimators, as suggested by the simulations. Point estimates and 95\% Wald confidence intervals for all one-step estimators are summarized in Table \ref{tab:ppact-3-mediators}.

Several noteworthy findings emerge from the analysis. First, the CBT intervention yields a statistically significant TE in reducing chronic pain, concordant with the primary analysis in \citet{debar2022primary}. The NIE accounts for approximately $41.3\%$ of the TE when estimated using $\widehat{\theta}^{\text{EIF.PAR}}$ or $\widehat{\theta}^{\text{EIF.PAR.S}}$, and $36.7\%$ and $24.5\%$ when estimated using $\widehat{\theta}^{\text{EIF.DML.S}}$ and $\widehat{\theta}^{\text{EIF.DML}}$, respectively. Second, the estimated EIEs for all mediators are consistently negative, with the indirect effect mediated by the RMDQ score notably larger in magnitude, suggesting that this pathway plays a more prominent role than those through satisfaction with primary care services or average daily opioid dose. Third, the interaction indirect effects among any two or three mediators are generally estimated to be small, except for the empirically unstable $\widehat{\theta}^{\text{EIF.DML}}$. This suggests that, although the mediators are assumed to be unstructured a priori, there is little complementary or offsetting interaction among them in mediating the effect of CBT on pain impact. It also suggests that the natural indirect effect can be approximated by the sum of the three independent exit indirect effects, signaling an approximately independent effect mechanism. Fourth, the estimated spillover indirect effects transmitted through each mediator are generally comparable to or slightly larger in magnitude than their individual counterparts. This suggests that neighborhood spillover among care providers may represent an important pathway in the mediation process. Finally, we comment on the empirical statistical properties. In general, stabilization yields notable efficiency gains and produces more stable, reasonable estimates, especially for the debiased machine learning estimators. Furthermore, the estimates from $\widehat{\theta}^{\text{EIF.PAR}}$ and $\widehat{\theta}^{\text{EIF.PAR.S}}$ are close to $\widehat{\theta}^{\text{EIF.DML.S}}$, which may suggest that the parametric working models provide a reasonable approximation to the truth in this data example. We also noticed that for certain estimands, the confidence intervals derived from $\widehat{\theta}^{\text{EIF.PAR}}$ or $\widehat{\theta}^{\text{EIF.PAR.S}}$ are narrower than those obtained from $\widehat{\theta}^{\text{EIF.DML.S}}$ or $\widehat{\theta}^{\text{EIF.DML}}$. This discrepancy may reflect the trade-off between model flexibility and finite-sample stability.

\begin{table}[ht!]
\centering
\caption{Point estimates and 95\% Wald confidence intervals are reported for all one-step estimators in the PPACT CRT data under a Gaussian copula. Mediator labels 1, 2, and 3 correspond to the RMDQ score, improvement in satisfaction with primary care services from baseline, and average daily opioid dose, respectively.
}\label{tab:ppact-3-mediators}
\resizebox{\textwidth}{!}{
\begin{tabular}{lrcrcrcrc}
\toprule
 & \multicolumn{2}{c}{$\widehat{\theta}^{\text{EIF.PAR}}$} & \multicolumn{2}{c}{$\widehat{\theta}^{\text{EIF.PAR.S}}$} & \multicolumn{2}{c}{$\widehat{\theta}^{\text{EIF.DML}}$} & \multicolumn{2}{c}{$\widehat{\theta}^{\text{EIF.DML.S}}$} \\
\cmidrule(lr){2-3} \cmidrule(lr){4-5} \cmidrule(lr){6-7} \cmidrule(lr){8-9}
Estimand & Est. & 95\% CI & Est. & 95\% CI & Est. & 95\% CI & Est. & 95\% CI \\
\midrule
TE & {-0.46} & {(-0.81, -0.12)} & {-0.46} & {(-0.81, -0.12)} & {-0.49} & {(-0.84, -0.14)} & {-0.49} & {(-0.83, -0.14)} \\
NDE & -0.28 & (-0.71, 0.15) & -0.27 & (-0.68, 0.13) & -0.37 & (-1.10, 0.35) & -0.31 & (-0.74, 0.13) \\
NIE & -0.19 & (-0.52, 0.14) & -0.19 & (-0.49, 0.11) & -0.12 & (-0.78, 0.55) & -0.18 & (-0.48, 0.12) \\[1.0ex]
EIE$_1$ & -0.16 & (-0.55, 0.22) & -0.17 & (-0.46, 0.13) & -0.09 & (-0.78, 0.60) & -0.17 & (-0.50, 0.15) \\
EIE$_2$ & -0.02 & (-0.10, 0.06) & -0.02 & (-0.10, 0.06) & 0.00 & (-1.04, 1.05) & -0.03 & (-0.48, 0.43) \\
EIE$_3$ & -0.02 & (-0.10, 0.06) & -0.02 & (-0.10, 0.06) & -0.08 & (-0.80, 0.63) & -0.03 & (-0.20, 0.13) \\[1.0ex]
INT$_{\{1,2\}}$ & -0.01 & (-0.20, 0.18) & -0.01 & (-0.12, 0.11) & 0.13 & (-0.89, 1.16) & -0.02 & (-0.44, 0.41) \\
INT$_{\{1,3\}}$ & -0.00 & (-0.10, 0.10) & -0.00 & (-0.07, 0.07) & 0.05 & (-0.43, 0.54) & -0.01 & (-0.12, 0.11) \\
INT$_{\{2,3\}}$ & -0.02 & (-0.09, 0.06) & -0.02 & (-0.09, 0.06) & -0.08 & (-1.68, 1.53) & -0.03 & (-0.57, 0.50) \\
INT$_{\{1,2,3\}}$ & -0.01 & (-0.09, 0.07) & -0.01 & (-0.07, 0.05) & 0.16 & (-1.21, 1.54) & -0.01 & (-0.46, 0.45) \\[1.0ex]
ESME$_1$ & -0.09 & (-0.45, 0.28) & -0.09 & (-0.35, 0.18) & -0.04 & (-0.61, 0.54) & -0.08 & (-0.41, 0.24) \\
ESME$_2$ & -0.01 & (-0.08, 0.06) & -0.01 & (-0.08, 0.06) & 0.03 & (-1.79, 1.85) & -0.03 & (-0.79, 0.73) \\
ESME$_3$ & -0.01 & (-0.08, 0.05) & -0.01 & (-0.08, 0.05) & -0.00 & (-0.18, 0.17) & -0.02 & (-0.14, 0.11) \\
EIME$_1$ & -0.08 & (-0.23, 0.07) & -0.08 & (-0.21, 0.05) & -0.05 & (-0.29, 0.19) & -0.09 & (-0.24, 0.06) \\
EIME$_2$ & -0.01 & (-0.05, 0.03) & -0.01 & (-0.04, 0.03) & -0.03 & (-1.05, 1.00) & 0.01 & (-0.59, 0.61) \\
EIME$_3$ & -0.01 & (-0.04, 0.03) & -0.01 & (-0.04, 0.03) & -0.08 & (-0.82, 0.66) & -0.01 & (-0.10, 0.07) \\
\bottomrule
\end{tabular}
}
\end{table}

\section{Conclusion}\label{ss:concluson}
In this paper, we develop a unified framework for causal mediation analysis in cluster-randomized trials with multiple causally unordered mediators. Our contributions are threefold. First, we addressed the complexity of multiple mediators in cluster-randomized trials in full generality by characterizing the complete combinatorial algebra underlying effect decomposition for an arbitrary number of mediators. Second, we introduce a set of structural assumptions under which all causal mediation effects are nonparametrically identified. Third, we establish semiparametric efficiency theory by deriving the efficient influence function, constructing one-step estimators, and studying their asymptotic properties. Notably, when combined with parametric working models, the proposed one-step estimators enjoy varying degrees of robustness depending on the target estimand: the estimators for the natural direct effect, and natural indirect effect are doubly robust; the estimators for the interaction effects $\text{INT}_{\mathcal{J}}$ (including the exit indirect effects as a special case) are triply robust; and the estimators for the spillover and individual interaction mediation effects are conditionally doubly robust. When combined with data-adaptive machine learning algorithms via cross-fitting, the resulting debiased estimators achieve $\sqrt{I}$-consistency and semiparametric efficiency under mild rate conditions on the nuisance estimators. In particular, we propose a flexible elliptical copula marginal regression model for the joint mediator density, a key high-dimensional nuisance function, with interpretable parameterizations.

This work focuses on causally unordered mediators, which is natural when mediators are measured contemporaneously, as in the motivating PPACT CRT. When mediators are observed sequentially, incorporating causal ordering may yield more refined inference. Besides, the elliptical copula model is also of independent interest for joint density modeling beyond the mediation context. In this work, the generator function is treated as fixed to mitigate the curse of dimensionality, but may be made data adaptive in moderate dimensions. A promising direction for future research is to develop sieve-based procedures for estimating the generator function; see Supplementary Material Section \ref{ss:SI-ECMR-SUPP-data-adaptive-g} for additional discussions. Finally, identification relies on cross-world structural assumptions, and it is of interest to develop sensitivity methods in future work to assess robustness to potential assumption violations.



\section*{Acknowledgments}
Research in this article was supported by the Patient-Centered Outcomes Research Institute\textsuperscript{\textregistered} (PCORI\textsuperscript{\textregistered} Award ME-2023C1-31350). All statements expressed in this article are solely those of the authors and do not necessarily reflect the views of PCORI\textsuperscript{\textregistered}, its Board of Governors or Methodology Committee.

\section*{Data Availability Statement}
The PPACT CRT data that support the findings of this study are openly available in GitHub at \url{https://github.com/PainResearch/PPACT}. 

\bigskip
\begin{center}
{\large\bf SUPPLEMENTARY MATERIAL}
\end{center}
\setcounter{section}{0}
\renewcommand{\thesection}{S\arabic{section}}

\section{Summary}
\label{sec:intro_supple}
This supplementary material is organized as follows. Section \ref{sec:alternative-GCH-assumption} provides an alternative to Assumption \ref{K-assump:homogeneous-between-mediator-ignorability-INT} in the main manuscript. Section \ref{sec:derivation-estimands} presents the derivations of the proposed estimands and their decompositions under the risk difference, risk ratio, and odds ratio scales. Section \ref{sec:SI-ECMR-SUPP} presents additional technical results regarding the proposed elliptical copula marginal regression (ECMR) model. Section \ref{sec:supp:stablization-implementation} presents a step-by-step implementation of the stabilization technique for the semiparametric one-step estimators. Section \ref{sec:proofs-all} presents the proofs of all technical results in the main manuscript. Of note, for full generality, we retain the weights $w(\bV, N)$ throughout the proof, whereas the main manuscript primarily focuses on the cluster-average mediation effect estimands by setting $w(\bV, N) = 1$. If our interest is in the individual-average mediation effect estimand, we may simply set $w(\bV, N)=N$.

\section{An alternative to Assumption \ref{K-assump:homogeneous-between-mediator-ignorability-INT}}\label{sec:alternative-GCH-assumption}
\begin{assumption}
\label{K-assump:homogeneous-between-mediator-ignorability-INT-alternative} 
For all $\mJ^\ast\subseteq \mJ$, $\mJ^\ast\neq\mK,\emptyset$, and $\bm^{\mJ^\ast}$ and $\bm^{\mK \backslash \mJ^\ast}$ as the realizations of $\bM^{\mJ^\ast}$ and $\bM^{\mK\backslash \mJ^\ast}$, respectively,
\begin{align*}
E\left\{\overline{Y}(1,\bM(\ba_{\mJ^\ast}))|\bM^{\mK\backslash \mJ^\ast}(1)=\bm^{\mK\backslash \mJ^\ast},\bC,N\right\}=E\left\{\overline{Y}(1,\bM^{\mJ^\ast}(0),\bm^{\mK\backslash \mJ^\ast})|\bC,N\right\}.
\end{align*}
\end{assumption}
Assumption \ref{K-assump:homogeneous-between-mediator-ignorability-INT-alternative} serves as an alternative to Assumption \ref{K-assump:homogeneous-between-mediator-ignorability-INT} presented in the main manuscript and is similarly indexed by $\mJ$. Specifically, Assumption \ref{K-assump:homogeneous-between-mediator-ignorability-INT-alternative}-$\mJ$ posits that, conditional on $\{\bC,N\}$, the expectation of the counterfactual cluster-average outcome $\overline{Y}(1,\bM(\ba_{\mJ^\ast}))$ remains invariant across strata defined by the counterfactual mediators under treatment, $\bM^{\mK\backslash\mJ^\ast}(1)$. In other words, both Assumption \ref{K-assump:homogeneous-between-mediator-ignorability-INT-alternative} and Assumption \ref{K-assump:homogeneous-between-mediator-ignorability-INT} impose mean ignorability for $\overline{Y}(1,\bM(\ba_{\mJ^\ast}))$, conditional on a subset of potential mediators under the same treatment status. Substituting Assumption \ref{K-assump:homogeneous-between-mediator-ignorability-INT-alternative} for Assumption \ref{K-assump:homogeneous-between-mediator-ignorability-INT} yields identical identification formulas. 

\section{Derivations of estimands and their decompositions}\label{sec:derivation-estimands}
In constructing the estimands, we follow \cite{tong2025permutationinvarianceprinciplecausal}, specifically applying their permutation-equivariant, complete, and residual-free estimands derived via combinatorial algebra. Briefly, permutation equivariance ensures that the interpretation of the interaction effects remains invariant under mediator label permutations. To proceed, we set $\mathcal{Y} = \mJ$, $f(\mathcal{Z}) = \theta_1(1, \ba_{\mathcal{Z}})$, and $\mJ^\ast = \mathcal{Z}$. Then, the expressions for the interaction effects and the decomposition of the NIE respectively follow from the formulas 
\begin{align*}
    &\lb\Delta_{\mathcal{Y}}=\sum_{\mathcal{Z}\subseteq \mathcal{Y}}(-1)^{|\mathcal{Z}|}f(\mathcal{Z}):\emptyset\neq \mathcal{Y}\in 2^{\mathcal{X}}\rb,\\
    &f(\emptyset)-f(\mathcal X)=\sum_{\emptyset\neq \mathcal Y\subseteq \mathcal X}(-1)^{|\mathcal Y|+1}\Delta_{\mathcal Y},
\end{align*}
as given in \cite{tong2025permutationinvarianceprinciplecausal}. The alternative characterization in Equation \eqref{eq:decomposition-INT} of the main manuscript follows directly from Proposition 1 in \cite{tong2025permutationinvarianceprinciplecausal}. The results under the risk ratio and odds ratio (Remark \ref{eg:relative-risk}) follow from Corollaries 1 and 2 in \cite{tong2025permutationinvarianceprinciplecausal}, respectively. Finally, to generalize Equations \eqref{eq:SIMEJ-IIMEJ} of the main manuscript from the risk difference to ratio scales, we replace the building block $\theta \in \{\theta_1, \theta_2\}$ with its logarithmic transformation $\log \circ h (\theta)$, where $h(x) = x$ for the risk ratio and $h(x) = x / (1 - x)$ for the odds ratio. Consequently, we obtain the decomposition of the interaction effect on the log scale as follows
\begin{align*}
\log (\text{INT}_\mJ)=&\underbrace{\sum_{\mJ^\ast\subseteq \mJ\backslash\lb k\rb}(-1)^{|\mJ^\ast|}\left\{\log\circ h(\theta_1(1,\ba_{\mJ^\ast}))-\log\circ h(\theta_2(k,\mJ^\ast))\right\}}_{\log(\text{SIME}_\mJ(k))}+\\
&\underbrace{\sum_{\mJ^\ast\subseteq \mJ\backslash\lb k\rb}(-1)^{|\mJ^\ast|}\left\{\log\circ h(\theta_2(k,\mJ^\ast))-\log\circ h(\theta_1(1,\ba_{\mJ^\ast\cup\lb k\rb}))\right\}}_{\log(\text{IIME}_\mJ(k))},    
\end{align*}
which further implies that
\begin{align*}
\log (\text{INT}_\mJ)=&\underbrace{\log\left\{\prod_{\mJ^\ast\subseteq \mJ\backslash\lb k\rb}\left(\frac{h(\theta_1(1,\ba_{\mJ^\ast}))}{h(\theta_2(k,\mJ^\ast))}\right)^{(-1)^{|\mJ^\ast|}}\right\}}_{\log(\text{SIME}_\mJ(k))}+\underbrace{\log\left\{\prod_{\mJ^\ast\subseteq \mJ\backslash\lb k\rb}\left(\frac{h(\theta_2(k,\mJ^\ast))}{h(\theta_1(1,\ba_{\mJ^\ast\cup\lb k\rb}))}\right)^{(-1)^{|\mJ^\ast|}}\right\}}_{\log(\text{IIME}_\mJ(k))}.  
\end{align*}
Thus, it follows that
\begin{align*}
\text{INT}_\mJ=&\underbrace{\left\{\prod_{\mJ^\ast\subseteq \mJ\backslash\lb k\rb}\left(\frac{h(\theta_1(1,\ba_{\mJ^\ast}))}{h(\theta_2(k,\mJ^\ast))}\right)^{(-1)^{|\mJ^\ast|}}\right\}}_{\text{SIME}_\mJ(k)}\times\underbrace{\left\{\prod_{\mJ^\ast\subseteq \mJ\backslash\lb k\rb}\left(\frac{h(\theta_2(k,\mJ^\ast))}{h(\theta_1(1,\ba_{\mJ^\ast\cup\lb k\rb}))}\right)^{(-1)^{|\mJ^\ast|}}\right\}}_{\text{IIME}_\mJ(k)}.  
\end{align*}
Finally, Remark \ref{eg:relative-risk} is obtained by setting $h(x) = x$ and $h(x) = x/(1-x)$, respectively.




\section{Supporting information for the ECMR model}\label{sec:SI-ECMR-SUPP}
\subsection{Extension with a data-adaptive generator}\label{ss:SI-ECMR-SUPP-data-adaptive-g}
To maximize model flexibility, we assume that the generator function belongs to an infinite-dimensional class $\mathcal{G}$. For example, typical smooth function classes include the Hölder class of smoothness order $s$. In this case, we follow \cite{derumigny2022identifiability} to approximate $\mathcal{G}$ using a sieve space $\mathcal{G}_U$ based on Bernstein polynomials of order $U$. Formally, the sieve space $\mathcal{G}_U$ can be defined as, for any given $a>0$, 
\begin{align*}
   \mathcal{G}_U = &\{g \in \mathcal{G}^\ast : g(x) = \sum_{q=0}^U \beta_q (a+x)^q (a-x)^{U-q} \mathbf{1}(x \in [0,a]), \beta_q \geq 0 \}. 
\end{align*}
Here, $\mathbf{1}(\cdot)$ denotes the indicator function, the complexity of the sieve space $U = U_I$ is assumed to grow slowly with the sample size $I$ to balance the bias-variance trade-off, and the space $\mathcal{G}^\ast$ is subject to specific identifiability constraints (see Equations (2) and (6) in \cite{derumigny2022identifiability}). Define $\boldsymbol{\beta}=(\beta_1,\ldots,\beta_q)^\top$. For estimation, we propose to using the sieve maximum likelihood estimator
\begin{align*}
    (\widehat{\boldsymbol{\beta}},\widehat{\boldsymbol{\mathcal{Q}}})=&\underset{{{\{\beta_q\}_{q=1}^U,{\boldsymbol{\mathcal{Q}}}}}}{\arg\max} \prod_{i}^I \int_{\widehat{\mathcal{D}}_i(\bM_i,\mK_d)} f_{\mathcal{E}_{N_iK}(\boldsymbol{R}(\boldsymbol{\mathcal{Q}}),g)}(\widehat{{\underline{{\boldsymbol{\epsilon}}}}}_{\mK_c,i},{\underline{{\boldsymbol{\epsilon}}}}_{\mK_d})d{\underline{\boldsymbol{\epsilon}}}_{\mK_d}\\
    &\text{subject to }\boldsymbol{\mathcal{Q}}_0\succ\boldsymbol{\mathcal{Q}}_1,\boldsymbol{\mathcal{Q}}_0\succ0, g\in\mathcal{G}^\ast, \text{ and }\bbeta\succeq0.
\end{align*}
Finally, the estimator of the generator function, $\widehat{g}$, is given by
\begin{align*}
    \widehat{g}(x)=\sum_{q=0}^U \widehat{\beta}_q (a+x)^q (a-x)^{U-q} \mathbf{1}(x \in [0,a]).
\end{align*}
Notably, the complexity of the sieve space, $U$, should grow slowly with the sample size $I$; in practice, an optimal $U$ can be selected using information criterion under cross-validation.

\subsection{Some useful properties}
We present closed-form expressions for the joint densities of any subset of mediators. By Proposition \ref{prop:property-of-ECMR} and Property (P.1) in \cite{owen1983class}, any subvector of $\boldsymbol{\widetilde{\underline{\epsilon}}}$ also follows a multivariate standard elliptical distribution with the corresponding submatrix of the correlation matrix. That is, $\boldsymbol{\widetilde{\underline{\epsilon}}}_{\mathcal{S}}\sim \mathcal{E}_{|S|}(\boldsymbol{R}_{\mathcal{S}},g)$, where $\mathcal{S}\subseteq\mathcal{L}_N=[N]\times\mK$. Thus, the estimators for the multivariate joint densities of any subset of mediators indexed by $\mathcal{S}$ admit closed-form expressions and are given by:
{\fontsize{9.5pt}{11.5pt}\selectfont
\begin{align*}
\widehat{f}(\bM_{\mathcal{S}}|A,\bC,N)=&\prod_{(j,k)\in\mK_c\cap\mathcal{S}}\frac{\widehat{f}(M_{\buj}^k|A,\bC,N)}{ f_{\mathcal{E}(g)}(\widehat{\underline{\epsilon}}_{\buj}^k)}\int_{\widehat{\mathcal{D}}(\bM,\mK_d\cap\mathcal{S})} f_{\mathcal{E}_{|\mathcal{S}|}(\boldsymbol{R}_{\mathcal{S}}(\widehat{\boldsymbol{\mathcal{Q}}}),g)}(\widehat{{\underline{\widetilde{\boldsymbol{\epsilon}}}}}_{\mK_c\cap\mathcal{S}},{\underline{\widetilde{\boldsymbol{\epsilon}}}}_{\mK_d\cap\mathcal{S}})d{\underline{\boldsymbol{\epsilon}}}_{\mK_d\cap\mathcal{S}},
\end{align*}}
where $\mK_r\cap\mathcal{S},r\in\{c,d\}$ is the shorthand for $(\mK_r\times[N])\cap\mathcal{S}$, and $\times$ means the Cartesian product of sets. 

\subsection{Additional technical results}
The following proposition establishes an important property of the error terms for continuous mediators.
\begin{prop}\label{prop:property-of-ECMR}
Assuming that all mediators are continuous, the error terms follow a multivariate standard elliptical distribution:
$$\boldsymbol{\underline{\epsilon}}\equiv(\boldsymbol{\underline{\epsilon}}^1,\ldots,\boldsymbol{\underline{\epsilon}}^K)\sim\mathcal{E}_{NK}(\mathbf{0},\boldsymbol{R},g),$$
where $\boldsymbol{\underline{\epsilon}}^k=(\underline{\epsilon}_{\bullet1}^k,\ldots,\underline{\epsilon}_{\bullet N}^k)$. 
\end{prop}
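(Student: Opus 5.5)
The argument will be a probability-integral-transform calculation built on Sklar's theorem and the copula specification \eqref{eq:model-elliptical-copula}. When every mediator is continuous, each conditional CDF $F(\cdot)=F(M^k_{\buj}\le\cdot\mid A,\bC,N)$ is continuous. Hence $U^k_{\buj}\equiv F(M^k_{\buj})\sim\mathrm{Uniform}(0,1)$, and the map $M^k_{\buj}\mapsto U^k_{\buj}$ is almost surely invertible with $F^{-1}(U^k_{\buj})=M^k_{\buj}$. The error term is defined through \eqref{eq:elliptical-copula-regression} as $\underline{\epsilon}^k_{\buj}=F^{-1}_{\mathcal{E}(g)}(U^k_{\buj})$. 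The unidimensional standard elliptical law $\mathcal{E}(g)$ has a density, so $F_{\mathcal{E}(g)}$ is continuous and strictly increasing on its support. The plan is to compute the joint CDF of $\boldsymbol{\underline{\epsilon}}$ directly, conditional on $\{A,\bC,N\}$, and show that it equals $\boldsymbol{F}_{\mathcal{E}_{NK}(\mathbf{0},\boldsymbol{R},g)}$.

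\textbf{Step 1.} Fix $\mathbf{t}=(t^k_{\buj})\in\mathbb{R}^{NK}$ and note that $\{\underline{\epsilon}^k_{\buj}\le t^k_{\buj}\}=\{U^k_{\buj}\le F_{\mathcal{E}(g)}(t^k_{\buj})\}$ up to null sets, by monotonicity. It follows that
\[
\Pr(\boldsymbol{\underline{\epsilon}}\le \mathbf{t}\mid A,\bC,N)=\Pr\bigl(U^k_{\buj}\le F_{\mathcal{E}(g)}(t^k_{\buj})\ \forall (k,j)\mid A,\bC,N\bigr).
\]

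\textbf{Step 2.} The joint CDF of $(U^k_{\buj})$ is the copula $\mathcal{C}$ itself. This holds because Sklar's theorem gives a unique copula under continuous marginals, and the joint CDF of the probability-integral transforms is exactly that copula. Applying \eqref{eq:model-elliptical-copula} at $u^k_{\buj}=F_{\mathcal{E}(g)}(t^k_{\buj})$ turns the right-hand side into $\boldsymbol{F}_{\mathcal{E}_{NK}(\mathbf{0},\boldsymbol{R},g)}\bigl(F^{-1}_{\mathcal{E}(g)}(F_{\mathcal{E}(g)}(t^k_{\buj}))\bigr)$.

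\textbf{Step 3.} Since $F^{-1}_{\mathcal{E}(g)}\circ F_{\mathcal{E}(g)}$ is the identity on the support, the joint CDF equals $\boldsymbol{F}_{\mathcal{E}_{NK}(\mathbf{0},\boldsymbol{R},g)}(\mathbf{t})$, which identifies the law of $\boldsymbol{\underline{\epsilon}}$.

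The main obstacle is consistency between the univariate margins of $\mathcal{E}_{NK}(\mathbf{0},\boldsymbol{R},g)$ and $\mathcal{E}(g)=\mathcal{E}_1(0,1,g)$. We need each margin of the $NK$-dimensional law with unit-diagonal $\boldsymbol{R}$ to be exactly the standard one-dimensional law with CDF $F_{\mathcal{E}(g)}$; otherwise \eqref{eq:model-elliptical-copula} would not be a valid copula and the transforms would not cancel. I will invoke the marginalization property (P.1) of \cite{owen1983class}: subvectors of elliptical vectors are elliptical with the corresponding submatrix and the induced generator. I will interpret $\mathcal{E}(g)$ as that induced one-dimensional margin, which is the convention implicit in the definition of the elliptical copula in \cite{derumigny2022identifiability}. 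Care is also needed with the standardization of $\boldsymbol{\Omega}$ to $\boldsymbol{R}$. Rescaling coordinates by positive constants leaves the copula unchanged and maps $\mathcal{E}_{NK}(\mathbf{0},\boldsymbol{\Omega},g)$ to $\mathcal{E}_{NK}(\mathbf{0},\boldsymbol{R},g)$ after absorbing scale into the generator's normalization. Finally, the null sets from ties and boundary values are handled by continuity of all CDFs involved.
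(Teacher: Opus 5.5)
Your proposal is correct and follows essentially the same route as the paper's proof. In both, you compute the joint CDF of the error vector, use monotonicity (the Galois inequalities) to turn it into the joint CDF of the probability-integral transforms $F(M^k_{\buj})$, which is the copula, apply \eqref{eq:model-elliptical-copula}, and cancel $F^{-1}_{\mathcal{E}(g)}\circ F_{\mathcal{E}(g)}$. The paper writes this only for the bivariate case and goes through $F_X^{-1}$ and then $F_X$ rather than starting from $U=F(M)$. It also leaves implicit the marginal-consistency point you raise, which is a condition for \eqref{eq:model-elliptical-copula} to be a valid copula rather than something the cancellation step itself needs.
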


We then show why the model specification in Equation \eqref{eq:alternative-ellipitical copula} of the main manuscript implies the elliptical copula structure in Equation \eqref{eq:model-elliptical-copula} of the main manuscript. Specifically, the specification in Equation \eqref{eq:alternative-ellipitical copula} of the main manuscript captures the dependence structure of the elliptical copula while ensuring identifiability. To simplify notation and without loss of generality, we consider a ECMR model with $K=1$ (only for illustration) and $N=2$. Suppose that 
\begin{align*}
    &(\epsilon_{\bullet 1}^1,\epsilon_{\bullet 2}^1)^\top \sim \mathcal{E}_2(\mathbf{0},\boldsymbol{R},g),\\
    &M_{\bullet1}^1 =F^{-1}(F_{\mathcal{E}(g)}(\epsilon_{\bullet 1}^1)),\\
    &M_{\bullet2}^1 =F^{-1}(F_{\mathcal{E}(g)}(\epsilon_{\bullet 2}^1)).
\end{align*}
Then we have 
\begin{align}
 \Pr(M_{\bullet1}^1\leq a,M_{\bullet2}^1\leq b)=&\Pr(F^{-1}(F_{\mathcal{E}(g)}(\epsilon_{\bullet 1}^1))\leq a,F^{-1}(F_{\mathcal{E}(g)}(\epsilon_{\bullet 2}^1))\leq b)\nonumber\\
 =&\Pr(\epsilon_{\bullet 1}^1\leq F^{-1}_{\mathcal{E}(g)}(F(a)),\epsilon_{\bullet 2}^1\leq F^{-1}_{\mathcal{E}(g)}(F(b)))\nonumber\\
 =&\boldsymbol{F}_{\mathcal{E}_2(\mathbf{0},\boldsymbol{R},g)}(F^{-1}_{\mathcal{E}(g)}(F(a)),F^{-1}_{\mathcal{E}(g)}(F(b))),\label{eq:(6)imply(5)-proof}
\end{align}
where the second equality follows from the Galois inequalities for quantile functions. Finally, Equation \eqref{eq:(6)imply(5)-proof} implies that the copula for the joint mediator distribution is exactly the elliptical copula in Equation \eqref{eq:model-elliptical-copula} of the main manuscript.

\section{Stabilization technique}\label{sec:supp:stablization-implementation}
We outline the stabilization procedure in the Algorithm \ref{alg:stabilized_estimator}, using the estimation of $\theta = \theta_2$ as an illustration. A similar procedure applies to all other causal mediation functionals.

\begin{algorithm}
\caption{Stabilization of the semiparametric one-step estimator}
\label{alg:stabilized_estimator}
\small
\begin{algorithmic}[1]
\Require Observed data $\{\bmO_i\}_{i=1}^I$
\Ensure Stabilized semiparametric one-step estimator $\widehat{\theta}_2$

\Statex \textbf{Step 1: Initial Nuisance Parameter Estimation}
\State Compute the fitted individual-level outcome mean $\widehat{\eta}_{ij}(1, \bM_i, \bC_i, N_i)$ for all individuals
\State Compute the marginal-to-joint density ratio estimates $\widehat{r}_{ij}^{k\mJ^\ast011}(\bM_i, \bC_i, N_i)$ for all individuals based on the fitted joint mediator densities $f(\bM_i^{\mJ^\ast},\bM^k_{i,-j}|0,\bC_i,N_i)$, $f(M_{ij}^k,\bM_i^{\mK\backslash (\mJ^\ast\cup\{k\})}|1,\bC_i,N_i)$, and $f(\bM_i|1,\bC_i,N_i)$

\Statex \textbf{Step 2: Stabilization via Weighted OLS}
\State Subset the data to observations where the treatment condition holds, $A_i=1$
\State For these observations, compute the individual-level regression weights: 
    \[ \mathbb{W}_{ij} = \frac{\widehat{r}_{ij}^{k\mJ^\ast011}(\bM_i,\bC_i,N_i)w(\bV_i,N_i)}{N_i} \]
\State Fit an intercept-only weighted ordinary least squares (WOLS) model (without clustering) for the outcome $Y_{ij}$, conditional on $A_i = 1$, with weights $\mathbb{W}_{ij}$ and offset term $\widehat{\eta}_{ij}(1, \bM_i, \bC_i, N_i)$
\State Obtain the refined estimates of the individual-level outcome mean as $\widehat{\eta}^s_{ij}(1, \bM_i, \bC_i, N_i) = \widehat{\beta} + \widehat{\eta}_{ij}(1, \bM_i, \bC_i, N_i)$, where $\widehat{\beta}$ is the estimated intercept based on 
WOLS
\State Update $\widetilde{\kappa}_{ij}^{k\mJ^\ast}(\bM_{i,-j}^k,\bM_i^{\mJ^\ast},\bC_i,N_i)$, $\kappa^{k\mJ^\ast}_{ij}(\bC_i,N_i)$, and $\check{\kappa}_{ij}^{k\mJ^\ast}(M_{ij}^k,\bM_i^{\mK\backslash(\mJ^\ast\cup\{k\})},\bC_i,N_i)$ based on the refined estimates $\widehat{\eta}^s_{ij}(1, \bM_i, \bC_i, N_i)$ 
\Statex \textbf{Step 3: Final Stabilized One-Step Estimator}
\Statex Following \cite{robins2007comment,tchetgen2012semiparametric}, we have
\Statex \vspace{-0.5em}
    \[ \sum_{i=1}^I\sum_{j=1}^{N_i} \frac{w(\bV_i,N_i)}{N_i} \mathcal{I}(A_i=1) \widehat{r}^{k\mJ^\ast011}_{ij}(\bM_i,\bC_i,N_i) \left( Y_{ij}-\widehat{\eta}_{ij}^{\text{s}}(1,\bM_i,\bC_i,N_i) \right) = 0, \]
which simplifies the one-step estimator to 
\begin{align*}
    \widehat{\theta}_2=&\sum_{i=1}^I\frac{w(\bV_i,N_i)}{N_i}\sum_{j=1}^{N_i}\Bigg\{\frac{\mathcal{I}(A_i=0)}{\widehat{\Pr}(A_i=0)}\widetilde{\kappa}_{ij}^{k\mJ^\ast}(\bM_{i,-j}^k,\bM_i^{\mJ^\ast},\bC_i,N_i) +\\
    &\left(1-\sum_{a=0}^1\frac{\mathcal{I}(A_i=a)}{\widehat{\Pr}(A_i=a)}\right)\kappa^{k\mJ^\ast}_{ij}(\bC_i,N_i)+\frac{\mathcal{I}(A_i=1)}{\widehat{\Pr}(A_i=1)}\check{\kappa}_{ij}^{k\mJ^\ast}(M_{ij}^k,\bM_i^{\mK\backslash(\mJ^\ast\cup\{k\})},\bC_i,N_i)\Bigg\}
\end{align*}

\State \Return $\widehat{\theta}$
\end{algorithmic}
\end{algorithm}

\section{Proofs of technical results}\label{sec:proofs-all}
\subsection{Proof of Proposition \ref{prop:property-of-ECMR}}\label{supp;ss:proof of proposition 1}
\begin{proof}
To simplify notation and without loss of generality, we prove Proposition \ref{prop:property-of-ECMR} for the bivariate ECMR model. Suppose that
\begin{align*}
    &X=F_X^{-1}(F_{\mathcal{E}(g)}(\epsilon_x)),\\
    &Y=F_Y^{-1}(F_{\mathcal{E}(g)}(\epsilon_y)),\\
    &\mathcal{C}_{\mathcal{E}_2}(t_x,t_y)=\boldsymbol{F}_{\mathcal{E}_2(\mathbf{0},\boldsymbol{R},g)}(F^{-1}_{\mathcal{E}(g)}(t_x),F^{-1}_{\mathcal{E}(g)}(t_y)),
\end{align*}
where $F_X$ and $F_Y$ are the cumulative distribution functions (CDF) of the continuous random variables $X$ and $Y$, respectively, and $\epsilon_x$ and $\epsilon_y$ are the corresponding residual errors. Define $G(a,b)$ as the joint CDF for the residual errors $(\epsilon_x,\epsilon_y)^\top$. Then we have
\begin{align*}
    G(a,b)=&\Pr(\epsilon_x\leq a,\epsilon_y\leq b)\\
    =&\Pr(F_X^{-1}(F_{\mathcal{E}(g)}(\epsilon_x))\leq F_X^{-1}(F_{\mathcal{E}(g)}(a)),F_Y^{-1}(F_{\mathcal{E}(g)}(\epsilon_y))\leq F_Y^{-1}(F_{\mathcal{E}(g)}(b)))\\
    =&\Pr(X\leq F_X^{-1}(F_{\mathcal{E}(g)}(a)), Y\leq F_Y^{-1}(F_{\mathcal{E}(g)}(b)))\\
    =&\Pr(F_X(X)\leq F_X(F_X^{-1}(F_{\mathcal{E}(g)}(a))), F_Y(Y)\leq F_Y(F_Y^{-1}(F_{\mathcal{E}(g)}(b)))),\\
    =&\boldsymbol{F}_{\mathcal{E}_2(\mathbf{0},\boldsymbol{R},g)}(F^{-1}_{\mathcal{E}(g)}(F_X(F_X^{-1}(F_{\mathcal{E}(g)}(a)))),F^{-1}_{\mathcal{E}(g)}(F_Y(F_Y^{-1}(F_{\mathcal{E}(g)}(b)))))\\
    =&\boldsymbol{F}_{\mathcal{E}_2(\mathbf{0},\boldsymbol{R},g)}(a,b).
\end{align*}
Therefore, we conclude that $(\epsilon_x,\epsilon_y)^\top\sim \mathcal{E}_2(\mathbf{0},\boldsymbol{R},g)$.

\end{proof}

\subsection{Proof of Theorem \ref{thm:iden-NIE}}
The identification formula for $\theta_1(a_1,a_2,a_2)$ with $(a_1,a_2)\in\{(0,0),(1,0),(1,1)\}$ follows from Theorem 3.1 in \cite{cheng2024semiparametric} by noting that $\bM=(\bM^1, \bM^2)$ can be treated as a single unit. 

Next, we show the identification formula for $\theta_1(1,a_1,a_2)$ with $(a_1,a_2)\in\{(1,0),(0,1)\}$. 
By the law of total expectation (LOTE), we have 
\begin{align}
    &E\left\{w(\bV,N){N}^{-1}\sum_{j=1}^NY_{\buj}(1,\bM^k(0),\bM^{-k}(1))\right\}\nonumber\\
    =&E\lb E\lb w(\bV,N){N}^{-1}\sum_{j=1}^NY_{\buj}(1,\bM^k(0),\bM^{-k}(1))|\bC,N\rb\rb\label{eq:proof-set-II}.
\end{align}
Then we have
\begin{align*}
    \eqref{eq:proof-set-II}=&E\lb \int E\lb w(\bV,N){N}^{-1}\sum_{j=1}^NY_{\buj}(1,\bM^k(0),\bM^{-k}(1))|\bM^{k}(0)=\bm^{k},\bC,N\rb dP_{\bM^{k}(0)|\bC,N}(\bm^{k})\rb\\
    =&E\lb w(\bV,N){N}^{-1}\sum_{j=1}^N\int E\lb Y_{\buj}(1,\bm^k,\bM^{-k}(1))|\bC,N\rb dP_{\bM^{k}(0)|\bC,N}(\bm^{k})\rb,
\end{align*}
where the first equality follows from LOTE and the second equality follows from Assumption \ref{assump:homogeneous-between-mediator-ignorability}. Finally, we conclude from 
\begin{align*}
    &\int E\lb Y_{\buj}(1,\bm^k,\bM^{-k}(1))|\bC,N\rb dP_{\bM^{k}(0)|\bC,N}(\bm^{k})\\
    =&\int \int E\lb Y_{\buj}(1,\bm^k,\bM^{-k}(1))|\bM^{-k}(1)=\bm^{-k},\bC,N\rb dP_{\bM^{-k}(1)|\bC,N}(\bm^{-k})dP_{\bM^{k}(0)|\bC,N}(\bm^{k})\\
    =&\int \int E\lb Y_{\buj}(1,\bm^k,\bM^{-k}(1))|A=1,\bM^{-k}(1)=\bm^{-k},\bC,N\rb dP_{\bM^{-k}|A=1,\bC,N}(\bm^{-k})\\
    &dP_{\bM^{k}|A=0,\bC,N}(\bm^{k})~~~\text{(Assumption \ref{assump:2mediatiors-total})}\\
    =&\int \int E\lb Y_{\buj}(1,\bm^k,\bM^{-k}(1))|A=1,\bM^k(1)=\bm^k,\bM^{-k}(1)=\bm^{-k},\bC,N\rb \\
    &dP_{\bM^{-k}|A=1,\bC,N}(\bm^{-k})dP_{\bM^{k}|A=0,\bC,N}(\bm^{k})~~\text{(Assumption \ref{assump:sequential-ignorability})}\\
    =&\int \int \eta_{\buj}(1,\bm^k,\bm^{-k},\bC,N)  dP_{\bM^{-k}|A=1,\bC,N}(\bm^{-k})dP_{\bM^{k}|A=0,\bC,N}(\bm^{k})~\text{(Assumption \ref{assump:2mediatiors-total})},
\end{align*}
where the first equality follows from LOTE. 

Next, we prove the identification formula for $\theta_2(k-1,2-k)$ with $k\in\{1,2\}$. We prove for $k=1$ and similar procedure applies to the case with $k=2$. By LOTE, we have that
\begin{align}
    &E\lb w(\bV,N){N}^{-1}\sum_{j=1}^NY_{\buj}(1,M^1_{\buj}(1),\bM^1_{\bullet,-j}(0),M^2_{\buj}(1),\bM^2_{\bullet,-j}(1))\rb\nonumber\\
    =&E\lb E\lb w(\bV,N){N}^{-1}\sum_{j=1}^NY_{\buj}(1,M^1_{\buj}(1),\bM^1_{\bullet,-j}(0),\bM^2(1))|\bC,N\rb\rb.\label{eq:theta2-set-III}
\end{align}
Then
\begin{align*}
    \eqref{eq:theta2-set-III}=E\bigg\{ &w(\bV,N){N}^{-1}\sum_{j=1}^N\int E\lb Y_{\buj}(1,M^1_{\buj}(1),\bM^1_{\bullet,-j}(0),\bM^2(1))|\bM^1_{\bumj}(0)=\bm^1_{\bumj},\bC,N\rb \\
    &dP_{\bM^1_{\bumj}|A=0,\bC,N}(\bm^1_{\bumj})\bigg\}~~\text{(LOTE, Assumption \ref{assump:2mediatiors-total})}.
\end{align*}
Finally, we conclude from
\begin{align*}
    &E\lb Y_{\buj}(1,M^1_{\buj}(1),\bM^1_{\bumj}(0),\bM^2(1))|\bM^1_{\bumj}(0)=\bm^1_{\bumj},\bC,N\rb\\
    =&\int  E\lb Y_{\buj}(1,m_{\buj}^1,\bm^1_{\bumj},\bm^2)|M^1_{\buj}(1)=m_{\buj}^1,\bM^1_{\bumj}(0)=\bm^1_{\bumj},\bM^2(1)=\bm^2,\bC,N\rb \\
    &dP_{M^1_{\buj}(1),\bM^2(1)|\bM^1_{\bumj}(0)=\bm^1_{\bumj},\bC,N}(m_{\buj}^1,\bm^2)~~\text{(LOTE)}\\
    =&\int  E\lb Y_{\buj}(1,m_{\buj}^1,\bm^1_{\bumj},\bm^2)|M^1_{\buj}(1)=m_{\buj}^1,\bM^1_{\bumj}(0)=\bm^1_{\bumj},\bM^2(1)=\bm^2,\bC,N\rb \\
    &dP_{M^1_{\buj}(1),\bM^2(1)|\bC,N}(m_{\buj}^1,\bm^2)~~\text{(Assumption \ref{assump:heterogeneous-between-mediator-ignorability})}\\
    =&\int \eta_{\buj}(1, m_{\buj}^1,\bm^1_{\bullet,-j}, \bm^2, \bC,N) dP_{M^1_{\buj},\bM^2|A=1,\bC,N}(m_{\buj}^1,\bm^2)~\text{(Assumptions \ref{assump:2mediatiors-total}-\ref{assump:sequential-ignorability})}.
\end{align*}

\subsection{Proof of Theorem \ref{thm:iden-NIE-K}}
The identification formula for 
$\theta_1^{I}(a^\ast,\ba_{\mJ})$ follows from Theorem 3.1 in \cite{cheng2024semiparametric} by treating $\bM$ as a single unit. 

The proof of the identification formulas for $\theta^{II}_1(1,\ba_{\mJ^\ast})$ and $\theta_2(k,\mJ^\ast)$ follows reasoning analogous to that of Theorem \ref{thm:iden-NIE}. We detail the derivation under Assumption \ref{K-assump:homogeneous-between-mediator-ignorability-INT}; the case under Assumption \ref{K-assump:homogeneous-between-mediator-ignorability-INT-alternative} follows via an analogous argument. 
First, we prove the identification formulas for $\theta^{II}_1(1,\ba_{\mJ^\ast})$. To proceed, by LOTE, we have that
\begin{align}
    &E\left\{w(\bV,N){N}^{-1}\sum_{j=1}^NY_{\buj}(1,\bM^{\mJ^\ast}(0),\bM^{\mK\backslash
    \mJ^\ast}(1))\right\}\nonumber\\
    =&E\lb E\lb w(\bV,N){N}^{-1}\sum_{j=1}^NY_{\buj}(1,\bM^{\mJ^\ast}(0),\bM^{\mK\backslash
    \mJ^\ast}(1))|\bC,N\rb\rb\label{eq:proof-set-II-v2}.
\end{align}
Then we have
\begin{align*}
    &\eqref{eq:proof-set-II-v2}\\
    =&E\lb \int E\lb w(\bV,N){N}^{-1}\sum_{j=1}^NY_{\buj}(1,\bM^{\mJ^\ast}(0),\bM^{\mK\backslash \mJ^\ast}(1))|\bM^{\mJ^\ast}(0)=\bm^{\mJ^\ast},\bC,N\rb dP_{\bM^{\mJ^\ast}(0)|\bC,N}(\bm^{\mJ^\ast})\rb\\
    =&E\lb w(\bV,N){N}^{-1}\sum_{j=1}^N\int E\lb Y_{\buj}(1,\bm^{\mJ^\ast},\bM^{\mK\backslash\mJ^\ast}(1))|\bC,N\rb dP_{\bM^{\mJ^\ast}(0)|\bC,N}(\bm^{\mJ^\ast})\rb,
\end{align*}
where the first equality follows from LOTE and the second equality follows from Assumption \ref{K-assump:homogeneous-between-mediator-ignorability-INT}. Finally, we conclude from 
\begin{align*}
    &\int E\lb Y_{\buj}(1,\bm^{\mJ^\ast},\bM^{\mK\backslash\mJ^\ast}(1))|\bC,N\rb dP_{\bM^{\mJ^\ast}(0)|\bC,N}(\bm^{\mJ^\ast})\\
    =&\int \int E\lb Y_{\buj}(1,\bm^{\mJ^\ast},\bM^{\mK\backslash\mJ^\ast}(1))|\bM^{\mK\backslash\mJ^\ast}(1)=\bm^{\mK\backslash\mJ^\ast},\bC,N\rb dP_{\bM^{\mK\backslash\mJ^\ast}(1)|\bC,N}(\bm^{\mK\backslash
    \mJ^\ast})dP_{\bM^{\mJ^\ast}(0)|\bC,N}(\bm^{\mJ^\ast})\\
    =&\int \int E\lb Y_{\buj}(1,\bm^{\mJ^\ast},\bM^{\mK\backslash\mJ^\ast}(1))|A=1,\bM^{\mK\backslash\mJ^\ast}(1)=\bm^{\mK\backslash\mJ^\ast},\bC,N\rb dP_{\bM^{\mK\backslash\mJ^\ast}|A=1,\bC,N}(\bm^{\mK\backslash\mJ^\ast})\\
    &dP_{\bM^{\mJ^\ast}|A=0,\bC,N}(\bm^{\mJ^\ast})~~~\text{(Assumption \ref{assump:2mediatiors-total})}\\
    =&\int \int E\lb Y_{\buj}(1,\bm^{\mJ^\ast},\bM^{\mK\backslash\mJ^\ast}(1))|A=1,\bM^{\mJ^\ast}(1)=\bm^{\mJ^\ast},\bM^{\mK\backslash\mJ^\ast}(1)=\bm^{\mK\backslash\mJ^\ast},\bC,N\rb \\
    &dP_{\bM^{\mK\backslash\mJ^\ast}|A=1,\bC,N}(\bm^{\mK\backslash\mJ^\ast})dP_{\bM^{\mJ^\ast}|A=0,\bC,N}(\bm^{\mJ^\ast})~~\text{(Assumption \ref{assump:sequential-ignorability})}\\
    =&\int \int \eta_{\buj}(1,\bm^{\mJ^\ast},\bm^{\mK\backslash\mJ^\ast},\bC,N)  dP_{\bM^{\mK\backslash\mJ^\ast}|A=1,\bC,N}(\bm^{\mK\backslash\mJ^\ast})dP_{\bM^{\mJ^\ast}|A=0,\bC,N}(\bm^{\mJ^\ast})~\text{(Assumption \ref{assump:2mediatiors-total})},
\end{align*}
where the first equality follows from LOTE.

Second, we prove the identification formulas for $\theta_2(k,\mJ^\ast)$. By LOTE, we have that
\begin{align}
    &E\left\{ w(\bV,N){N}^{-1}\sum_{j=1}^NY_{\buj}(1,\bM^{\mJ^\ast}(0),\bM_{\bumj}^k(0),M_{\buj}^k(1),\bM^{\mK\backslash (\mJ^\ast\cup\lb k\rb)}(1))\right\}\nonumber\\
    =&E\lb E\lb w(\bV,N){N}^{-1}\sum_{j=1}^NY_{\buj}(1,\bM^{\mJ^\ast}(0),\bM_{\bumj}^k(0),M_{\buj}^k(1),\bM^{\mK\backslash (\mJ^\ast\cup\lb k\rb)}(1))|\bC,N\rb\rb.\label{eq:theta2-set-III-k}
\end{align}
Then
\begin{align*}
    &\eqref{eq:theta2-set-III-k}
\\=&E\bigg\{ \frac{w(\bV,N)}{N}\sum_{j=1}^N\int E\Big\{ Y_{\buj}(1,\bm^{\mJ^\ast},\bm^k_{\bumj},M_{\buj}^k(1),\bM^{\mK\backslash (\mJ^\ast\cup\lb k\rb)}(1))|\bM^{\mJ^\ast}(0)=\bm^{\mJ^\ast}\\
    &\bM^k_{\bumj}(0)=\bm^k_{\bumj},\bC,N\Big\} dP_{\bM^{\mJ^\ast},\bM^k_{\bumj}|A=0,\bC,N}(\bm^{\mJ^\ast},\bm^k_{\bumj})\bigg\}~~~\text{(LOTE, Assumption \ref{assump:2mediatiors-total})}.
\end{align*}
Finally, we conclude from
\begin{align*}
    &E\lb Y_{\buj}(1,\bm^{\mJ^\ast},\bm^k_{\bumj},M_{\buj}^k(1),\bM^{\mK\backslash (\mJ^\ast\cup\lb k\rb)}(1))|\bM^{\mJ^\ast}(0)=\bm^{\mJ^\ast},\bM^k_{\bumj}(0)=\bm^k_{\bumj},\bC,N\rb\\
    =&\int E\Big\{ Y_{\buj}(1,\bm^{\mJ^\ast},\bm^k_{\bumj},m_{\buj}^k,\bm^{\mK\backslash (\mJ^\ast\cup\lb k\rb)})|\bM^{\mJ^\ast}(0)=\bm^{\mJ^\ast},\bM^k_{\bumj}(0)=\bm^k_{\bumj},M_{\buj}^k(1)=m_{\buj}^k, \\
    &\bM^{\mK\backslash (\mJ^\ast\cup\lb k\rb)}(1)=\bm^{\mK\backslash (\mJ^\ast\cup\lb k\rb)},\bC,N\Big\} \\
    &dP_{M^k_{\buj}(1),\bM^{\mK\backslash (\mJ^\ast\cup\lb k\rb)}(1)|\bM^{\mJ^\ast}(0)=\bm^{\mJ^\ast},\bM^k_{\bumj}(0)=\bm^k_{\bumj},\bC,N}(m_{\buj}^k,\bm^{\mK\backslash (\mJ^\ast\cup\lb k\rb)})\text{(LOTE)}\\
    =&\int E\Big\{ Y_{\buj}(1,\bm^{\mJ^\ast},\bm^k_{\bumj},m_{\buj}^k,\bm^{\mK\backslash (\mJ^\ast\cup\lb k\rb)})|A=1,\bM^{\mJ^\ast}(1)=\bm^{\mJ^\ast},\bM^k_{\bumj}(1)=\bm^k_{\bumj},M_{\buj}^k(1)=m_{\buj}^k, \\
    &\bM^{\mK\backslash (\mJ^\ast\cup\lb k\rb)}(1)=\bm^{\mK\backslash (\mJ^\ast\cup\lb k\rb)},\bC,N\Big\} dP_{M^k_{\buj},\bM^{\mK\backslash (\mJ^\ast\cup\lb k\rb)}|A=1,\bC,N}(m_{\buj}^k,\bm^{\mK\backslash (\mJ^\ast\cup\lb k\rb)})\\
    &\text{(Assumptions \ref{assump:2mediatiors-total}, \ref{assump:sequential-ignorability},and \ref{K-assump:generalized-heterogeneous-between-mediator-ignorability})}\\
    =&\int  \eta_{\buj}(1, \bm, \bC,N) dP_{M^k_{\buj},\bM^{\mK\backslash (\mJ^\ast\cup\lb k\rb)}|A=1,\bC,N}(m_{\buj}^k,\bm^{\mK\backslash (\mJ^\ast\cup\lb k\rb)})~~~\text{(Assumption \ref{assump:2mediatiors-total}).}
\end{align*}

\subsection{Proof of Theorem \ref{thm:EIF}}
\subsubsection{Preliminary}
In this section, we derive the EIFs for the parameters $\theta_1$ and $\theta_2$, enabling the construction of the EIFs for the risk difference estimands via additive combinations. For estimands defined on the ratio scale, the corresponding EIFs can be derived by applying the chain rule, similar to the methods used in \cite{cheng2024semiparametric}. We follow standard techniques in semiparametric theory to find the pathwise derivatives of the target functionals \citep{tsiatis2006semiparametric}. Specifically, we derive the EIFs in the nonparametric sense, imposing no restrictions on the likelihood of the observed data vector $\mO$. To facilitate these derivations, we first introduce the following preliminaries. Denote $f(\mO)$ as the joint density function of $\mO$. Consider the following factorization
\begin{align*}
f(\mO)=&f(\bY|\bM,A,\bC,N)\times f(\bM|A,\bC,N)\times f(A)\times f(\bC,N)\\
\propto &f(\bY|\bM,A,\bC,N)\times f(\bM|A,\bC,N)\times f(\bC,N),
\end{align*}
where the treatment propensity $f(a) = \pi^a(1 - \pi)^{1 - a}$ is known a priori for a CRT. Following Theorem 4.4 and Theorem 4.5 in \cite{tsiatis2006semiparametric}, the tangent space $\mathcal{F}$ is the entire Hilbert space $\mathcal{H}\equiv\{h:E\{h(\mO)\}=0,E\{h(\mO)^2\}<\infty\}$. We consider a parametric submodel for the distribution of $\mO$, $\mathcal{P}=\{f_{\epsilon}(\mO):\epsilon\in\mathcal{T}\subseteq\mathbb{R}\}$, where $f_{0}=f$ is the true density and $E_{0}=E$ denotes the expectation with respect to $f_0$ for ease of notation. Let $\nabla_{\epsilon=0}f={\partial f}/{\partial\epsilon}|_{\epsilon=0}$ denote the pathwise derivative operator evaluated at $\epsilon = 0$. Consider the following orthogonal decomposition of the score vector
\begin{equation*}S(\mO)=S(\bY|\bM,A,\bC,N)+S(\bM|A,\bC,N)+S(\bC,N),
\end{equation*}
where 
\begin{align*}
    &S(\mO)=\nabep \log f(\mO),\\
    &S(\bY|\bM,A,\bC,N)=\nabep \log f(\bY|\bM,A,\bC,N),\\
    &S(\bM|A,\bC,N)=\nabep \log f(\bM|A,\bC,N),\\
    &S(\bC,N)=\nabep \log f(\bC,N).
\end{align*}
We define $\theta_{i,\epsilon}$ as the value of $\theta_i$ in the submodel and the truth is attained at $\epsilon=0$, i.e., $\theta_{i,0}=\theta_i$. Theorem 3.2 in \cite{tsiatis2006semiparametric} implies that the influence function $\varphi(\mO;\theta_i)\in\mathcal{H}$ for the submodel can be characterized by 
\begin{equation}\label{eq:influfuncchara}
  E\{\varphi(\mO;\theta_i)S(\mO)\}=\nabep \theta_{i,\epsilon}.
\end{equation}
\cite{kennedy2022semiparametric} showed that there is at most one solution to the differential equation \eqref{eq:influfuncchara}, without imposing any restrictions on the likelihood of $\mO$. By Theorem 4.3 in \cite{tsiatis2006semiparametric}, the EIF is indeed $\varphi(\mO;\theta_i)$ because the tangent space is the entire Hilbert space. Subsequently, the EIF solves Equation \eqref{eq:influfuncchara}. The following two lemmas are needed to facilitate the calculations of the EIF.
\begin{lemma}[\emph{Quotient rule}]\label{lemma:quotient}
    Let $\theta=\theta_n/\theta_d$. Suppose that $\varphi(\mO;\theta_n)$ and $\varphi(\mO;\theta_d)$ are the EIFs for the $\theta_n$ and $\theta_d$, respectively. Then
    $$\varphi(\mO;\theta)=\frac{1}{\theta_d}\varphi(\mO;\theta_n)-\frac{\theta_n}{\theta_d^2}\varphi(\mO;\theta_d).$$
\end{lemma}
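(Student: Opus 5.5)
The quotient rule for efficient influence functions. My plan is to use the defining characterization \eqref{eq:influfuncchara} and differentiate the ratio along an arbitrary regular parametric submodel $\{f_\epsilon\}$.

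Because $\varphi(\mO;\theta_n)$ and $\varphi(\mO;\theta_d)$ are the EIFs of $\theta_n$ and $\theta_d$, along any such submodel we have $\nabep\theta_{n,\epsilon}=E\{\varphi(\mO;\theta_n)S(\mO)\}$ and $\nabep\theta_{d,\epsilon}=E\{\varphi(\mO;\theta_d)S(\mO)\}$. Assuming $\theta_d\neq 0$, which is implicit in the statement, the map $\epsilon\mapsto\theta_{n,\epsilon}/\theta_{d,\epsilon}$ is differentiable at $\epsilon=0$. The ordinary quotient rule then gives
\begin{align*}
\nabep \frac{\theta_{n,\epsilon}}{\theta_{d,\epsilon}}
=\frac{1}{\theta_d}\nabep\theta_{n,\epsilon}-\frac{\theta_n}{\theta_d^2}\nabep\theta_{d,\epsilon}
=E\left[\left\{\frac{1}{\theta_d}\varphi(\mO;\theta_n)-\frac{\theta_n}{\theta_d^2}\varphi(\mO;\theta_d)\right\}S(\mO)\right].
\end{align*}
The second equality uses linearity of expectation, since $\theta_n$ and $\theta_d$ are constants at the truth.

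Next I will check that the candidate function
$\frac{1}{\theta_d}\varphi(\mO;\theta_n)-\frac{\theta_n}{\theta_d^2}\varphi(\mO;\theta_d)$ lies in $\mathcal{H}$. It has mean zero and finite variance because it is a fixed linear combination of two elements of $\mathcal{H}$.

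Finally, I will invoke the earlier discussion. The tangent space is all of $\mathcal{H}$, and by \cite{kennedy2022semiparametric} equation \eqref{eq:influfuncchara} has at most one solution. So this candidate is the unique influence function, and hence the EIF of $\theta$.

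There is no real obstacle here. The only care point is to state explicitly that $\theta_d\neq 0$ and that the submodel derivatives exist, so that the scalar quotient rule applies.
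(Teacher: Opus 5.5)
Your proposal is correct and follows the same route as the paper, which simply invokes the chain rule and cites Example 6 of \cite{kennedy2022semiparametric}; you spell that argument out by applying the scalar quotient rule along an arbitrary submodel and then using the characterization \eqref{eq:influfuncchara} together with uniqueness in the nonparametric model. Your explicit remarks that $\theta_d\neq 0$ and that the linear combination lies in $\mathcal{H}$ are exactly the conditions the cited argument needs.
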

\begin{proof}
    This result follows from the chain rule; one proof is provided by the Example 6 in \cite{kennedy2022semiparametric}.
\end{proof}
Lemma \ref{lemma:quotient} provides the expression for the EIF of a parameter represented as a quotient.
\begin{lemma}\label{lemma:score}
  Consider a random vector $(Y,X,Z)^\top$. Let $S(\bullet)$ be its associated score vector. Then 
  $$S(x|z)=E\{S(Y,x,z)|X=x,Z=z\}-E\{S(Y,X,z)|Z=z\}.$$
\end{lemma}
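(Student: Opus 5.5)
The plan is to differentiate the log-likelihood of the parametric submodel directly and then decompose the joint score. By definition of the score for the full vector $(Y,X,Z)$ under a submodel $f_\epsilon$,
\[
S(y,x,z)=\nabla_{\epsilon=0}\log f_\epsilon(y,x,z)=S(y|x,z)+S(x|z)+S(z),
\]
where $S(y|x,z)=\nabla_{\epsilon=0}\log f_\epsilon(y|x,z)$, $S(x|z)=\nabla_{\epsilon=0}\log f_\epsilon(x|z)$ and $S(z)=\nabla_{\epsilon=0}\log f_\epsilon(z)$. This follows from the chain-rule factorization $f(y,x,z)=f(y|x,z)f(x|z)f(z)$ and linearity of the pathwise derivative. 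I will use the same orthogonal decomposition already displayed above for $S(\mO)$.

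Next, take conditional expectations to isolate each piece.
\begin{enumerate}
\item The key fact is that any conditional score has conditional mean zero. We have $E\{S(Y|x,z)\mid X=x,Z=z\}=\int \nabla_{\epsilon=0} f_\epsilon(y|x,z)\,dy=\nabla_{\epsilon=0}\int f_\epsilon(y|x,z)\,dy=0$. Similarly, $E\{S(X|z)\mid Z=z\}=0$.
\item Conditioning on $X=x,Z=z$ therefore gives $E\{S(Y,x,z)\mid X=x,Z=z\}=S(x|z)+S(z)$, since the latter two terms are deterministic given $(x,z)$.
\item Conditioning only on $Z=z$ gives $E\{S(Y,X,z)\mid Z=z\}=E\{S(X|z)\mid Z=z\}+S(z)=S(z)$. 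Here I use the tower property: the $Y$-score averages to zero given $(X,z)$, and hence also given $z$.
\end{enumerate}
Subtracting the expression in step 3 from the one in step 2 yields $S(x|z)=E\{S(Y,x,z)\mid X=x,Z=z\}-E\{S(Y,X,z)\mid Z=z\}$, which is the claimed identity.

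The only delicate point is justifying the interchange of differentiation and integration in the mean-zero property. I would handle this by invoking the standard regularity assumed for parametric submodels in \cite{tsiatis2006semiparametric}, namely differentiability in quadratic mean with a square-integrable score. Under that condition, $\int \nabla_{\epsilon=0} f_\epsilon = \nabla_{\epsilon=0}\int f_\epsilon = 0$ holds. I would state this assumption explicitly rather than reprove it; everything else is routine algebra.
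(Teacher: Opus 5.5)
Your proposal is correct and follows essentially the same argument as the paper. Both factor the joint density so the joint score splits into conditional scores, and both use that each conditional score has conditional mean zero; the paper groups the terms as $S(Y|x,z)+S(x,z)$ and $S(Y,X|z)+S(z)$ where you split into three pieces, and your explicit justification of interchanging differentiation and integration is a point the paper leaves implicit.
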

\begin{proof}
Since score has mean zero, we have that
    \begin{align*}
        &E\{S(Y,x,z)|X=x,Z=z\}=E\{S(Y|x,z)+S(x,z)|X=x,Z=z\}=S(x,z),\\
        &E\{S(Y,X,z)|Z=z\}=E\{S(Y,X|z)+S(z)|Z=z\}=S(z),
    \end{align*}
which completes the proof by noting that $S(x,z)=S(x|z)+S(z)$.
\end{proof}

\subsubsection{Main proof for $\theta_1^{I}(a^\ast,\ba_\mJ)$}
We derive the EIF for the mediation functional $\theta_1^{I}(a^\ast,\ba_\mJ)$. Let $\theta_d=E\lb w(\bV,N)\rb$ and $\theta_n(a^\ast,\ba_\mJ)=\theta_d\times\theta_1^{I}(a^\ast,\ba_\mJ)$. Let $\mathcal{I}(\bullet)$ denote the indicator function. By Theorem 4.1 in \cite{cheng2024semiparametric}, the EIF for $\theta_n(a^\ast,\ba_\mJ)$ is given by
\begin{align*}
    \varphi(\mO;\theta_n(a^\ast,\ba_\mJ))=&\phi(\mO;a^\ast,\ba_\mJ)-\theta_n(a^\ast,\ba_\mJ),
\end{align*}
where 
\begin{align*}
\eta_{\buj}(a^\ast,\bM,\bC,N)=&E\left\{Y_{\buj}|A=a^\ast,\bM,\bC,N\right\},\\
\eta^\ast_{\buj,a^\ast\ba_{\mJ1}}(\bC,N)=&\int \eta_{\buj}(a^\ast,\bM,\bC,N)f(\bm|\ba_{\mJ1},\bC,N)d\bm,\\
    \phi(\mO;a^\ast,\ba_\mJ)=&\frac{w(\bV,N)}{N}\sum_{j=1}^N\bigg \{\frac{\mathcal{I}(A=a^\ast)}{\Pr(A=a^\ast)}\frac{f(\bM|\ba_{\mJ1},\bC,N)}{f(\bM|a^\ast,\bC,N)}\lb Y_{\buj}-\eta_{\buj}(a^\ast,\bM,\bC,N)\rb+\\
    &\frac{\mathcal{I}(A=\ba_{\mJ1})}{\Pr(A=\ba_{\mJ1})}\lb\eta_{\buj}(a^\ast,\bM,\bC,N)-\eta^{\ast}_{\buj,a^\ast\ba_{\mJ1}}(\bC,N)\rb+\eta^{\ast}_{\buj,a^\ast\ba_{\mJ1}}(\bC,N)\bigg\}.
\end{align*}
To find the EIF for $\theta_d$, $\varphi(\mO;\theta_d)$, we note that 
\begin{align*}
 \nabep \theta_d=&\int\int\int w(\bv,n)S(\bc,n)f(\bc,n)d\bc dn\\
 =&\int\int\int w(\bv,n)\lb E\{S(\bY,\bM,A,\bc,n)|\bc,n\}-E\lb S(\mO)\rb\rb f(\bc,n)d\bc dn~\text{(Lemma \ref{lemma:score})}\\
 =&E\lb w(\bV,N)S(\mO)\rb-E\lb E\lb w(\bV,N)\rb S(\mO)\rb\\
 =&E\lb \left(w(\bV,N)-E\lb w(\bV,N)\rb\right)S(\mO)\rb.
\end{align*}
Thus, $\varphi(\mO;\theta_d)=w(\bV,N)-\theta_d$. By Lemma \ref{lemma:quotient}, we conclude that 
\begin{align*}
    \varphi(\mO;\theta_1(a^\ast,\ba_\mJ))=E\lb w(\bV,N)\rb^{-1}\left\{\phi(\mO;a^\ast,\ba_\mJ)-\theta^{I}_1(a^\ast,\ba_\mJ)w(\bV,N)\right\}.
\end{align*}

\subsubsection{Main proof for $\theta_1^{II}(1,\ba_{\mJ^\ast})$}
We derive the EIF for the mediation functional $\theta_1^{II}(1,\ba_{\mJ^\ast})$. Similarly, it suffices to derive the EIFs for $\theta_n(1,\ba_{\mJ^\ast})\equiv\theta^{II}_1(1,\ba_{\mJ^\ast})\times\theta_d$. Note that $\theta_{n,\epsilon}(1,\ba_{\mJ^\ast})$ can be expressed as 
\begin{align*}
    \theta_{n,\epsilon}(1,\ba_{\mJ^\ast})=&\int n^{-1}\sum_{j=1}^n\int\int w(\bv,n)\eta_{\buj,\epsilon}(1,\bm,\bc,n)f_\epsilon(\bm^{J^\ast}|0,\bc,n)f_\epsilon(\bm^{\mK\backslash J^\ast}|1,\bc,n)f_\epsilon(\bc,n) d\bm d\bc dn.
\end{align*}
Applying the chain rule with respect to the pathwise derivative yields:
\begin{align*}
    &\nabep \theta_{n,\epsilon}(1,\ba_{\mJ^\ast})\\
    =&\underbrace{\int n^{-1}\sum_{j=1}^n\int\int w(\bv,n)\nabep\eta_{\buj,\epsilon}(1,\bm,\bc,n)f(\bm^{J^\ast}|0,\bc,n)f(\bm^{\mK\backslash J^\ast}|1,\bc,n)f(\bc,n) d\bm d\bc dn}_{\text{T}_1}+\\
    &\underbrace{\int n^{-1}\sum_{j=1}^n\int\int w(\bv,n)\eta_{\buj}(1,\bm,\bc,n)\nabep f(\bm^{J^\ast}|0,\bc,n)f(\bm^{\mK\backslash J^\ast}|1,\bc,n)f(\bc,n) d\bm d\bc dn}_{\text{T}_2}+\\
    &\underbrace{\int n^{-1}\sum_{j=1}^n\int\int w(\bv,n)\eta_{\buj}(1,\bm,\bc,n)f(\bm^{J^\ast}|0,\bc,n)\nabep 
 f(\bm^{\mK\backslash J^\ast}|1,\bc,n)f(\bc,n) d\bm d\bc dn}_{\text{T}_3}+\\
    &\underbrace{\int n^{-1}\sum_{j=1}^n\int\int w(\bv,n)\eta_{\buj}(1,\bm,\bc,n)f(\bm^{J^\ast}|0,\bc,n)f(\bm^{\mK\backslash J^\ast}|1,\bc,n)\nabep f_\epsilon(\bc,n) d\bm d\bc dn}_{\text{T}_4}.
\end{align*}
For the term $\text{T}_1$, we have
\begin{align*}
    \text{T}_1=&\int n^{-1}\sum_{j=1}^n\int\int w(\bv,n)\nabep\eta_{\buj,\epsilon}(1,\bm,\bc,n)f(\bm^{J^\ast}|0,\bc,n)f(\bm^{\mK\backslash J^\ast}|1,\bc,n)f(\bc,n) d\bm d\bc dn\\
    =&\int n^{-1}\sum_{j=1}^n\int\int w(\bv,n)\int y_{\buj}\nabep f_\epsilon(\by|\bm,1,\bc,n)f(\bm^{J^\ast}|0,\bc,n)f(\bm^{\mK\backslash J^\ast}|1,\bc,n)f(\bc,n) d\by d\bm d\bc dn\\
    =&\int n^{-1}\sum_{j=1}^n\int\int w(\bv,n)\int y_{\buj}S(\by|\bm,1,\bc,n)f(\by|\bm,1,\bc,n)f(\bm^{J^\ast}|0,\bc,n)f(\bm^{\mK\backslash J^\ast}|1,\bc,n)\\
    &f(\bc,n) d\by d\bm d\bc dn\\
    =&\int n^{-1}\sum_{j=1}^n\int\int w(\bv,n)\int y_{\buj}\lb S(\by,\bm,1,\bc,n)-E\lb S(\bY,\bm,1,\bc,n)|\bm,1,\bc,n\rb\rb\\
    &\frac{f(\bm^{J^\ast}|0,\bc,n)f(\bm^{\mK\backslash J^\ast}|1,\bc,n)}{f(\bm|1,\bc,n)}f(\by|\bm,1,\bc,n)f(\bm|1,\bc,n)f(\bc,n) d\by d\bm d\bc dn~~\text{(Lemma \ref{lemma:score})}\\
    =&\int n^{-1}\sum_{j=1}^n\int\int w(\bv,n)\int \lb y_{\buj}-\eta_{\buj}(1,\bm,\bc,n)\rb \frac{f(\bm^{J^\ast}|0,\bc,n)f(\bm^{\mK\backslash J^\ast}|1,\bc,n)}{f(\bm|1,\bc,n)}S(\by,\bm,1,\bc,n)\\
    &f(\by|\bm,1,\bc,n)f(\bm|1,\bc,n)f(\bc,n) d\by d\bm d\bc dn\\
    =&E\lb  \frac{w(\bV,N)}{N}\sum_{j=1}^N\frac{\mathcal{I}(A=1)}{\Pr(A=1)}\lb Y_{\buj}-\eta_{\buj}(1,\bM,\bC,N)\rb \frac{f(\bM^{J^\ast}|0,\bC,N)f(\bM^{\mK\backslash J^\ast}|1,\bC,N)}{f(\bM|1,\bC,N)}S(\mO)\rb.
\end{align*}
For the term $\text{T}_2$, we have
\begin{align*}
    \text{T}_2=&\int n^{-1}\sum_{j=1}^n\int\int w(\bv,n)\eta_{\buj}(1,\bm,\bc,n)\nabep f(\bm^{J^\ast}|0,\bc,n)f(\bm^{\mK\backslash J^\ast}|1,\bc,n)f(\bc,n) d\bm d\bc dn\\
    =&\int n^{-1}\sum_{j=1}^n\int\int w(\bv,n)\eta_{\buj}(1,\bm,\bc,n)S(\bm^{J^\ast}|0,\bc,n)f(\bm^{J^\ast}|0,\bc,n)f(\bm^{\mK\backslash J^\ast}|1,\bc,n)f(\bc,n) d\bm d\bc dn\\
    =&\int n^{-1}\sum_{j=1}^n\int\int w(\bv,n)\eta_{\buj}(1,\bm,\bc,n) \times \\
    &\lb E\lb S(\bY,\bM^{\mK\backslash J^\ast},\bm^{J^\ast},0,\bc,n)|\bm^{J^\ast},0,
    \bc,n\rb-E\lb S(\bY,\bM,0,\bc,n)|0,\bc,n\rb\rb\\
    &f(\bm^{J^\ast}|0,\bc,n)f(\bm^{\mK\backslash J^\ast}|1,\bc,n)f(\bc,n) d\bm d\bc dn~~\text{(Lemma \ref{lemma:score})}\\  
    =&\int n^{-1}\sum_{j=1}^n\int w(\bv,n) \int \int \lb\widetilde{\tau}_{\buj}^{\mJ^\ast}(\bm^{J^\ast},\bc,n)- \tau^{J^\ast}_{\buj}(\bc,n)\rb S(\by,\bm,0,\bc,n) \\
    &f(\by|\bm,0,\bc,n)f(\bm|0,\bc,n)f(\bc,n) d\by d\bm d\bc dn\\
    =&E\lb  \frac{w(\bV,N)}{N}\sum_{j=1}^N\frac{\mathcal{I}(A=0)}{\Pr(A=0)}\lb\widetilde{\tau}_{\buj}^{\mJ^\ast}(\bM^{J^\ast},\bC,N)- \tau^{J^\ast}_{\buj}(\bC,N)\rb S(\mO)\rb,
\end{align*}
where 
\begin{align*}
  &\tau^{J^\ast}_{\buj}(\bc,n)=\int\eta_{\buj}(1,\bm,\bc,n)f(\bm^{J^\ast}|0,\bc,n)f(\bm^{\mK\backslash J^\ast}|1,\bc,n)d\bm,\\
  &\widetilde{\tau}_{\buj}^{\mJ^\ast}(\bm^{J^\ast},\bc,n)=\int\eta_{\buj}(1,\bm,\bc,n)f(\bm^{\mK\backslash J^\ast}|1,\bc,n)d\bm^{\mK\backslash J^\ast}.
\end{align*}
For the term $\text{T}_3$, we have
\begin{align*}
    \text{T}_3=&\int n^{-1}\sum_{j=1}^n\int\int w(\bv,n)\eta_{\buj}(1,\bm,\bc,n) f(\bm^{J^\ast}|0,\bc,n)\nabep f(\bm^{\mK\backslash J^\ast}|1,\bc,n)f(\bc,n) d\bm d\bc dn\\
    =&\int n^{-1}\sum_{j=1}^n\int\int w(\bv,n)\eta_{\buj}(1,\bm,\bc,n)f(\bm^{J^\ast}|0,\bc,n)S(\bm^{\mK\backslash J^\ast}|1,\bc,n)f(\bm^{\mK\backslash J^\ast}|1,\bc,n)f(\bc,n) d\bm d\bc dn\\
    =&\int n^{-1}\sum_{j=1}^n\int\int w(\bv,n)\eta_{\buj}(1,\bm,\bc,n) \times \\
    &\lb E\lb S(\bY,\bM^{J^\ast},\bm^{\mK\backslash J^\ast},1,\bc,n)|\bm^{\mK\backslash J^\ast},1,
    \bc,n\rb-E\lb S(\bY,\bM,1,\bc,n)|1,\bc,n\rb\rb\\
    &f(\bm^{J^\ast}|0,\bc,n)f(\bm^{\mK\backslash J^\ast}|1,\bc,n)f(\bc,n) d\bm d\bc dn~~\text{(Lemma \ref{lemma:score})}\\  
    =&\int n^{-1}\sum_{j=1}^n\int w(\bv,n) \int \int \lb\check{\tau}^{\mJ^\ast}_{\buj}(\bm^{\mK\backslash J^\ast},\bc,n)- \tau^{J^\ast}_{\buj}(\bc,n)\rb S(\by,\bm,1,\bc,n) \\
    &f(\by|\bm,1,\bc,n)f(\bm|1,\bc,n)f(\bc,n) d\by d\bm d\bc dn\\
    =&E\lb  \frac{w(\bV,N)}{N}\sum_{j=1}^N\frac{\mathcal{I}(A=1)}{\Pr(A=1)}\lb\check{\tau}^{\mJ^\ast}_{\buj}(\bM^{\mK\backslash J^\ast},\bC,N)- \tau^{J^\ast}_{\buj}(\bC,N)\rb S(\mO)\rb,
\end{align*}
where $\check{\tau}^{\mJ^\ast}_{\buj}(\bm^{\mK\backslash J^\ast},\bc,n)=\int\eta_{\buj}(1,\bm,\bc,n)f(\bm^{J^\ast}|0,\bc,n)d\bm^{J^\ast}$. For the term $\text{T}_4$, we have
\begin{align*}
    \text{T}_4=&\int n^{-1}\sum_{j=1}^n\int\int w(\bv,n)\eta_{\buj}(1,\bm,\bc,n)f(\bm^{J^\ast}|0,\bc,n)f(\bm^{\mK\backslash J^\ast}|1,\bc,n)\nabep f_\epsilon(\bc,n) d\bm d\bc dn\\
    =&\int n^{-1}\sum_{j=1}^n\int\int w(\bv,n)\eta_{\buj}(1,\bm,\bc,n)f(\bm^{J^\ast}|0,\bc,n)f(\bm^{\mK\backslash J^\ast}|1,\bc,n) S(\bc,n)f(\bc,n) d\bm d\bc dn\\
    =&\int n^{-1}\sum_{j=1}^n\int\int w(\bv,n)\eta_{\buj}(1,\bm,\bc,n)f(\bm^{J^\ast}|0,\bc,n)f(\bm^{\mK\backslash J^\ast}|1,\bc,n)   \\
    &\lb E\lb S(\bY,\bM,A,\bc,n)|\bc,n\rb -E\lb S(\mO)\rb\rb f(\bc,n)d\bm d\bc dn~~\text{(Lemma \ref{lemma:score})}\\
    =&\int n^{-1}\sum_{j=1}^n\int w(\bv,n)\tau^{J^\ast}_{\buj}(\bc,n) E\lb S(\bY,\bM,A,\bc,n)|\bc,n\rb f(\bc,n) d\bc dn-E\lb S(\mO)\rb\theta_n(1,\ba_{\mJ^\ast})\\
    =&\int n^{-1}\sum_{j=1}^n\int \sum_{a=0}^1w(\bv,n) \tau^{J^\ast}_{\buj}(\bc,n)  S(\by,\bm,a,\bc,n)f(\by|\bm,a,\bc,n)\\
    &f(\bm|a,\bc,n)\Pr(A=a)f(\bc,n)d\by d\bm d\bc dn-E\lb S(\mO)\rb\theta_n(1,\ba_{\mJ^\ast})\\
    =&E\lb \left[\frac{w(\bV,N)}{N}\sum_{j=1}^N  \tau^{J^\ast}_{\buj}(\bC,N)-\theta_n(1,\ba_{\mJ^\ast})\right]S(\mO) \rb.
\end{align*}
It is straightforward to verify that
$\varphi(\mO;\theta_n(1,\ba_{\mJ^\ast}))\in\mathcal{H}$, where
\begin{align*}
    &\varphi(\mO;\theta_n(1,\ba_{\mJ^\ast}))=\phi(\mO;\theta_1(1,\ba_{\mJ^\ast}))-\theta_n(1,\ba_{\mJ^\ast}),
\end{align*}
and
\begin{align*}
    &\phi(\mO;\theta_1(1,\ba_{\mJ^\ast}))\\
    =&\frac{w(\bV,N)}{N}\sum_{j=1}^N\Bigg\{\frac{\mathcal{I}(A=0)}{\Pr(A=0)}\widetilde{\tau}_{\buj}^{\mJ^\ast}(\bM^{J^\ast},\bC,N) +\left(1-\sum_{a=0}^1\frac{\mathcal{I}(A=a)}{\Pr(A=a)}\right)\tau^{J^\ast}_{\buj}(\bC,N)+\\
    &\frac{\mathcal{I}(A=1)}{\Pr(A=1)}\Bigg[\frac{f(\bM^{J^\ast}|0,\bC,N)f(\bM^{\mK\backslash J^\ast}|1,\bC,N)}{f(\bM|1,\bC,N)}\left( Y_{\buj}-\eta_{\buj}(1,\bM,\bC,N)\right)+\check{\tau}^{\mJ^\ast}_{\buj}(\bM^{\mK\backslash J^\ast},\bC,N)\Bigg]\Bigg\}.
\end{align*}
Therefore, we conclude that $\varphi(\mO;\theta_n(1,\ba_{\mJ^\ast}))$ is the EIF for $\theta_{n}(1,\ba_{\mJ^\ast})$. Eventually, by Lemma \ref{lemma:quotient}, we conclude that 
\begin{align*}
    \varphi(\mO;\theta_1(1,\ba_{\mJ^\ast}))=\displaystyle\frac{\phi(\mO;\theta_1(1,\ba_{\mJ^\ast}))-\theta_1(1,\ba_{\mJ^\ast})w(\bV,N)}{E\lb w(\bV,N)\rb}.
\end{align*}

\subsubsection{Main proof for $\theta_2(k,\mJ^\ast)$}
We next derive the EIFs for the mediation functional $\theta_2(k,\mJ^\ast)$. With a slight abuse of notation, we define $\theta_n(k,\mJ^\ast) = \theta_d \times \theta_2(k,\mJ^\ast)$. Similarly, it suffices to derive the EIF for $\theta_n(k,\mJ^\ast)$. Note that $\theta_{n,\epsilon}(k,\mJ^\ast)$ can be expressed as 
\begin{align*}
    &\int n^{-1}\sum_{j=1}^n\int\int w(\bv,n) \eta_{\buj,_\epsilon}(1,\bm, \bc,n)f_\epsilon(\bm^{\mJ^\ast},\bm^k_{\bullet,-j}|0,\bc,n)f_\epsilon(m_{\buj}^k,\bm^{\mK\backslash (\mJ^\ast\cup\{k\})}|1,\bc,n)\\
    &f_\epsilon(\bc,n)d\bm d\bc dn.
\end{align*}
Applying the chain rule with respect to the pathwise derivative yields:
\begin{align*}
    &\nabep \theta_{n,\epsilon}(k,\mJ^\ast)=\sum_{j=1}^4 T_j,
\end{align*}
where
\begin{align*}
    T_1=&\int n^{-1}\sum_{j=1}^n\int\int w(\bv,n)\nabep\eta_{\buj,_\epsilon}(1, \bm, \bc,n)f(\bm^{\mJ^\ast},\bm^k_{\bullet,-j}|0,\bc,n)f(m_{\buj}^k,\bm^{\mK\backslash (\mJ^\ast\cup\{k\})}|1,\bc,n)\\
    &f(\bc,n) d\bm d\bc dn,\\
    T_2=&\int n^{-1}\sum_{j=1}^n\int\int w(\bv,n)\eta_{\buj}(1, \bm, \bc,n)\nabep f_\epsilon(\bm^{\mJ^\ast},\bm^k_{\bullet,-j}|0,\bc,n)f(m_{\buj}^k,\bm^{\mK\backslash (\mJ^\ast\cup\{k\})}|1,\bc,n)\\
    &f(\bc,n) d\bm d\bc dn,\\
    T_3=&\int n^{-1}\sum_{j=1}^n\int\int w(\bv,n)\eta_{\buj}(1, \bm, \bc,n)f(\bm^{\mJ^\ast},\bm^k_{\bullet,-j}|0,\bc,n)\nabep f_\epsilon(m_{\buj}^k,\bm^{\mK\backslash (\mJ^\ast\cup\{k\})}|1,\bc,n)\\
    &f(\bc,n) d\bm d\bc dn,\\
    T_4=&\int n^{-1}\sum_{j=1}^n\int\int w(\bv,n)\eta_{\buj}(1, \bm, \bc,n)f(\bm^{\mJ^\ast},\bm^k_{\bullet,-j}|0,\bc,n) f(m_{\buj}^k,\bm^{\mK\backslash (\mJ^\ast\cup\{k\})}|1,\bc,n)\\
    &\nabep f_\epsilon(\bc,n) d\bm d\bc dn.
\end{align*}
For the term $\text{T}_1$, we have
\begin{align*}
    \text{T}_1=&\int n^{-1}\sum_{j=1}^n\int\int w(\bv,n)\nabep\eta_{\buj,_\epsilon}(1, \bm, \bc,n)f(\bm^{\mJ^\ast},\bm^k_{\bullet,-j}|0,\bc,n)\\
    &f(m_{\buj}^k,\bm^{\mK\backslash (\mJ^\ast\cup\{k\})}|1,\bc,n)f(\bc,n) d\bm d\bc dn\\
    =&\int n^{-1}\sum_{j=1}^n\int\int w(\bv,n)\int y_{\buj}\nabep f_\epsilon(\by|\bm,1,\bc,n)f(\bm^{\mJ^\ast},\bm^k_{\bullet,-j}|0,\bc,n)\\
     &f(m_{\buj}^k,\bm^{\mK\backslash (\mJ^\ast\cup\{k\})}|1,\bc,n)f(\bc,n) d\by d\bm d\bc dn\\
    =&\int n^{-1}\sum_{j=1}^n\int\int w(\bv,n)\int y_{\buj}S(\by|\bm,1,\bc,n)f(\by|\bm,1,\bc,n)f(\bm^{\mJ^\ast},\bm^k_{\bullet,-j}|0,\bc,n)\\
    &f(m_{\buj}^k,\bm^{\mK\backslash (\mJ^\ast\cup\{k\})}|1,\bc,n)f(\bc,n) d\by d\bm d\bc dn\\       
    =&\int n^{-1}\sum_{j=1}^n\int\int w(\bv,n)\int y_{\buj}\lb S(\by,\bm,1,\bc,n)-E\lb S(\bY,\bm,1,\bc,n)|\bm,1,\bc,n\rb\rb\\
    &f(\by|\bm,1,\bc,n)f(\bm^{\mJ^\ast},\bm^k_{\bullet,-j}|0,\bc,n)f(m_{\buj}^k,\bm^{\mK\backslash (\mJ^\ast\cup\{k\})}|1,\bc,n)f(\bc,n) d\by d\bm d\bc dn~~\text{(Lemma \ref{lemma:score})}\\
    =&\int n^{-1}\sum_{j=1}^n\int\int w(\bv,n)\int \lb y_{\buj}-\eta_{\buj}(1,\bm,\bc,n)\rb S(\by,\bm,1,\bc,n)\\
    &\frac{f(\bm^{\mJ^\ast},\bm^k_{\bullet,-j}|0,\bc,n)f(m_{\buj}^k,\bm^{\mK\backslash (\mJ^\ast\cup\{k\})}|1,\bc,n)}{f(\bm|1,\bc,n)}f(\by|\bm,1,\bc,n)f(\bm|1,\bc,n)f(\bc,n) d\by d\bm d\bc dn\\
    =&E\lb  \frac{w(\bV,N)}{N}\sum_{j=1}^N\frac{\mathcal{I}(A=1)}{\Pr(A=1)}\lb Y_{\buj}-\eta_{\buj}(1,\bM,\bC,N)\rb 
    r_{\buj}^{k\mJ^\ast011}(\bM,\bC,N)S(\mO)\rb.
\end{align*}
For the term $\text{T}_2$, we have
\begin{align*}
    \text{T}_2=&\int n^{-1}\sum_{j=1}^n\int\int w(\bv,n)\eta_{\buj}(1, \bm, \bc,n)\nabep f_\epsilon(\bm^{\mJ^\ast},\bm^k_{\bullet,-j}|0,\bc,n)\\
    &f(m_{\buj}^k,\bm^{\mK\backslash (\mJ^\ast\cup\{k\})}|1,\bc,n)f(\bc,n) d\bm d\bc dn\\
    =&\int n^{-1}\sum_{j=1}^n\int\int w(\bv,n)\eta_{\buj}(1,\bm,\bc,n)S(\bm^{\mJ^\ast},\bm^k_{\bullet,-j}|0,\bc,n)f(\bm^{\mJ^\ast},\bm^k_{\bullet,-j}|0,\bc,n)\\
    &f(m_{\buj}^k,\bm^{\mK\backslash (\mJ^\ast\cup\{k\})}|1,\bc,n)f(\bc,n) d\bm d\bc dn\\
    =&\int n^{-1}\sum_{j=1}^n\int\int w(\bv,n)\eta_{\buj}(1,\bm,\bc,n) \times \\
    &\lb E\lb S(\bY,M^k_{\buj},\bM^{\mK\backslash (\mJ^\ast\cup\{k\})},\bm^{\mJ^\ast},\bm_{\bumj}^k,0,\bc,n)|\bm^{\mJ^\ast},\bm_{\bumj}^k,0,
    \bc,n\rb-E\lb S(\bY,\bM,0,\bc,n)|0,\bc,n\rb\rb  \\
    &f(\bm^{\mJ^\ast},\bm^k_{\bullet,-j}|0,\bc,n)f(m_{\buj}^k,\bm^{\mK\backslash (\mJ^\ast\cup\{k\})}|1,\bc,n)f(\bc,n) d\bm d\bc dn~~\text{(Lemma \ref{lemma:score})}\\  
    =&\int n^{-1}\sum_{j=1}^n\int w(\bv,n) \int \int \lb\widetilde{\kappa}_{\buj}^{k\mJ^\ast}(\bm^k_{\bumj},\bm^{\mJ^\ast},\bc,n)- \kappa^{k\mJ^\ast}_{\buj}(\bc,n)\rb S(\by,\bm,0,\bc,n) f(\by|\bm,0,\bc,n)\\
    &f(\bm|0,\bc,n)f(\bc,n) d\by d\bm d\bc dn\\
    =&E\lb  \frac{w(\bV,N)}{N}\sum_{j=1}^N\frac{\mathcal{I}(A=0)}{\Pr(A=0)}\lb\widetilde{\kappa}_{\buj}^{k\mJ^\ast}(\bM^k_{\bumj},\bM^{\mJ^\ast},\bC,N)- \kappa^{k\mJ^\ast}_{\buj}(\bC,N)\rb S(\mO)\rb.
\end{align*}
where 
\begin{align*}
&\kappa^{k\mJ^\ast}_{\buj}(\bc,n)=\int\eta_{\buj}(1,\bm,\bc,n)f(\bm^{\mJ^\ast},\bm^k_{\bullet,-j}|0,\bc,n)f(m_{\buj}^k,\bm^{\mK\backslash (\mJ^\ast\cup\{k\})}|1,\bc,n)d\bm,\\
    &\widetilde{\kappa}_{\buj}^{k\mJ^\ast}(\bm^{\mJ^\ast},\bm_{\bumj}^k,\bc,n)= \int\eta_{\buj}(1,\bm,\bc,n)f(m_{\buj}^k,\bm^{\mK\backslash (\mJ^\ast\cup\{k\})}|1,\bc,n)d(m^k_{\buj},\bm^{\mK\backslash (\mJ^\ast\cup\{k\})}).
\end{align*}
For the term $\text{T}_3$, we have
\begin{align*}
    \text{T}_3=&\int n^{-1}\sum_{j=1}^n\int\int w(\bv,n)\eta_{\buj}(1, \bm, \bc,n)f(\bm^{\mJ^\ast},\bm^k_{\bullet,-j}|0,\bc,n)\\
    &\nabep f_\epsilon(m_{\buj}^k,\bm^{\mK\backslash (\mJ^\ast\cup\{k\})}|1,\bc,n)f(\bc,n) d\bm d\bc dn\\
    =&\int n^{-1}\sum_{j=1}^n\int\int w(\bv,n)\eta_{\buj}(1,\bm,\bc,n)S(m_{\buj}^k,\bm^{\mK\backslash (\mJ^\ast\cup\{k\})}|1,\bc,n)f(\bm^{\mJ^\ast},\bm^k_{\bullet,-j}|0,\bc,n)\\
    &f(m_{\buj}^k,\bm^{\mK\backslash (\mJ^\ast\cup\{k\})}|1,\bc,n)f(\bc,n) d\bm d\bc dn\\
    =&\int n^{-1}\sum_{j=1}^n\int\int w(\bv,n)\eta_{\buj}(1,\bm,\bc,n) \times \\
    &\Big[ E\lb S(\bY,\bM^{\mJ^\ast},\bM_{\bumj}^k,m^k_{\buj},\bm^{\mK\backslash (\mJ^\ast\cup\{k\})},1,\bc,n)|m^k_{\buj},\bm^{\mK\backslash (\mJ^\ast\cup\{k\})},1,
    \bc,n\rb-\\
    &E\lb S(\bY,\bM,1,\bc,n)|1,\bc,n\rb\Big]  \\
    &f(\bm^{\mJ^\ast},\bm^k_{\bullet,-j}|0,\bc,n)f(m_{\buj}^k,\bm^{\mK\backslash (\mJ^\ast\cup\{k\})}|1,\bc,n)f(\bc,n) d\bm d\bc dn~~\text{(Lemma \ref{lemma:score})}\\  
    =&\int n^{-1}\sum_{j=1}^n\int w(\bv,n) \int \int \lb\check{\kappa}_{\buj}^{k\mJ^\ast}(m_{\buj}^k,\bm^{\mK\backslash (\mJ^\ast\cup\{k\})},\bc,n)- \kappa^{k\mJ^\ast}_{\buj}(\bc,n)\rb S(\by,\bm,1,\bc,n) \\
    &f(\by|\bm,1,\bc,n)f(\bm|1,\bc,n)f(\bc,n) d\by d\bm d\bc dn\\
    =&E\lb  \frac{w(\bV,N)}{N}\sum_{j=1}^N\frac{\mathcal{I}(A=1)}{\Pr(A=1)}\lb\check{\kappa}_{\buj}^{k\mJ^\ast}(M_{\buj}^k,\bM^{-k},\bC,N)- \kappa^{k\mJ^\ast}_{\buj}(\bC,N)\rb S(\mO)\rb,
\end{align*}
where $\check{\kappa}_{\buj}^{k\mJ^\ast}(m_{\buj}^k,\bm^{\mK\backslash (\mJ^\ast\cup\{k\})},\bc,n)= \int\eta_{\buj}(1,\bm,\bc,n)f(\bm^{\mJ^\ast},\bm^k_{\bumj}|0,\bc,n)d(\bm^{\mJ^\ast},\bm^k_{\bumj})$.
For the term $\text{T}_4$, we have
\begin{align*}
    \text{T}_4=&\int n^{-1}\sum_{j=1}^n\int\int w(\bv,n)\eta_{\buj}(1, \bm, \bc,n)f(\bm^{\mJ^\ast},\bm^k_{\bullet,-j}|0,\bc,n) \\
    &f(m_{\buj}^k,\bm^{\mK\backslash (\mJ^\ast\cup\{k\})}|1,\bc,n)\nabep f_\epsilon(\bc,n) d\bm d\bc dn\\
    =&\int n^{-1}\sum_{j=1}^n\int\int w(\bv,n)\eta_{\buj}(1,\bm,\bc,n)f(\bm^{\mJ^\ast},\bm^k_{\bullet,-j}|0,\bc,n) f(m_{\buj}^k,\bm^{\mK\backslash (\mJ^\ast\cup\{k\})}|1,\bc,n)\\
    &S(\bc,n)f(\bc,n) d\bm d\bc dn\\
    =&\int n^{-1}\sum_{j=1}^n\int\int w(\bv,n)\eta_{\buj}(1,\bm,\bc,n) f(\bm^{\mJ^\ast},\bm^k_{\bullet,-j}|0,\bc,n) f(m_{\buj}^k,\bm^{\mK\backslash (\mJ^\ast\cup\{k\})}|1,\bc,n)  \\
    &\left[ E\lb S(\bY,\bM,A,\bc,n)|\bc,n\rb -E\lb S(\mO)\rb\right] f(\bc,n) d\bm d\bc dn~~\text{(Lemma \ref{lemma:score})}\\
    =&\int n^{-1}\sum_{j=1}^n\int w(\bv,n)\kappa^{k\mJ^\ast}_{\buj}(\bc,n) E\lb S(\bY,\bM,A,\bc,n)|\bc,n\rb f(\bc,n) d\bc dn-E\lb S(\mO)\rb\theta_n(k,\mJ^\ast)\\
    =&\int n^{-1}\sum_{j=1}^n\int \sum_{a=0}^1w(\bv,n) \kappa^{k\mJ^\ast}_{\buj}(\bc,n)  S(\by,\bm,a,\bc,n)\\
    &f(\by|\bm,a,\bc,n)f(\bm|a,\bc,n)\Pr(A=a)f(\bc,n)d\by d\bm d\bc dn-E\lb S(\mO)\rb\theta_n(k,\mJ^\ast)\\
    =&E\lb \left[\frac{w(\bV,N)}{N}\sum_{j=1}^N  \kappa^{k\mJ^\ast}_{\buj}(\bC,N)-\theta_n(k,\mJ^\ast)\right]S(\mO) \rb.
\end{align*}
It is straightforward to verify that
$\varphi(\mO;\theta_n(k,\mJ^\ast))\in\mathcal{H}$, where
\begin{align*}
    &\varphi(\mO;\theta_n(k,\mJ^\ast))=\phi(\mO;\theta_2(k,\mJ^\ast))-\theta_n(k,\mJ^\ast),
\end{align*}
and
   \begin{align*}     
&\phi(\mO;\theta_2(k,\mJ^\ast))\\
=&\frac{w(\bV,N)}{N}\sum_{j=1}^N\Bigg\{\frac{\mathcal{I}(A=0)}{\Pr(A=0)}\widetilde{\kappa}_{\buj}^{k\mJ^\ast}(\bM_{\bumj}^k,\bM^{\mJ^\ast},\bC,N) +\left(1-\sum_{a=0}^1\frac{\mathcal{I}(A=a)}{\Pr(A=a)}\right)\kappa^{k\mJ^\ast}_{\buj}(\bC,N)+\\
    &\frac{\mathcal{I}(A=1)}{\Pr(A=1)}\left[r^{k\mJ^\ast011}_{\buj}(\bM,\bC,N)\left( Y_{\buj}-\eta_{\buj}(1,\bM,\bC,N)\right)+\check{\kappa}_{\buj}^{k\mJ^\ast}(M_{\buj}^k,\bM^{\mK\backslash(\mJ^\ast\cup\{k\})},\bC,N)\right]\Bigg\}.
\end{align*}
Therefore, we conclude that $\varphi(\mO;\theta_n(k,\mJ^\ast))$ is the EIF for $\theta_{n}(k,\mJ^\ast)$. Eventually, by Lemma \ref{lemma:quotient}, we conclude that 
\begin{align*}
    \varphi(\mO;\theta_2(k,\mJ^\ast))=\displaystyle\frac{\phi(\mO;\theta_2(k,\mJ^\ast))-\theta_2(k,\mJ^\ast)w(\bV,N)}{E\lb w(\bV,N)\rb}.
\end{align*}


\subsection{Proof of Theorem \ref{thm:dml}}

We begin by presenting the following decomposition of the expected estimation error of $\phi(\mO;\theta,\widehat{\bGamma}(\theta))$ given the training sample. For simplicity, all expectations are taken conditional on $\bmO^{-s}$, although this dependence is suppressed in the notation.
\begin{lemma}\label{lemma:decomposition-Eerror}
The expected estimation error of $\phi(\mO;\theta,\widehat{\bGamma}(\theta))$ can be decomposed as: (i) for $\theta=\theta_1^{I}(a^\ast,\ba_{\mJ})$,
\begin{align*}
       &E\lb \phi(\mO;\theta,\widehat{\bGamma}(\theta))-\phi(\mO;\theta,\bGamma(\theta))\rb\\
          =&E\bigg\{ w(\bV,N)\bigg{[}\int \frac{\widehat{f}(\ba_{\mJ1},\bm,\bC,N)}{\widehat{f}(a^\ast,\bm,\bC,N)}[\widehat{f}(a^\ast,\bm,\bC,N)-f(a^\ast,\bm,\bC,N)][\widehat{\overline{\eta}}(a^\ast,\bm,\bC,N)-\overline{\eta}(a^\ast,\bm,\bC,N)]-\\
   &[\widehat{\overline{\eta}}(a^\ast,\bm,\bC,N)-\overline{\eta}(a^\ast,\bm,\bC,N)][\widehat{f}(\ba_{\mJ1},\bm,\bC,N)-f(\ba_{\mJ1},\bm,\bC,N)]d\bm\bigg{]}\bigg\};
\end{align*}
(ii) for $\theta=\theta_1^{II}(1,\ba_{\mJ^\ast})$,
\begin{align*}
   &E\lb \phi(\mO;\theta,\widehat{\bGamma}(\theta))-\phi(\mO;\theta,\bGamma(\theta))\rb\\
   =&E\bigg\{ \frac{w(\bV,N)}{N}\sum_{j=1}^N\bigg{[}\int \widehat{\widetilde{r}}^{\mJ^\ast 011}_{\buj}(\bm,\bC,N)[\widehat{f}(\bm|1,\bC,N)-f(\bm|1,\bC,N)]\times\\
   &[\widehat{\eta}_{\buj}(1,\bm,\bC,N)-\eta_{\buj}(1,\bm,\bC,N)]-\widehat{f}(\bm^{\mJ^\ast}|0,\bC,N)[\widehat{\eta}_{\buj}(1,\bm,\bC,N)-\eta_{\buj}(1,\bm,\bC,N)]\times\\
   &[\widehat{f}(\bm^{\mK\backslash \mJ^\ast}|1,\bC,N)-f(\bm^{\mK\backslash \mJ^\ast}|1,\bC,N)]-\\
   &f(\bm^{\mK\backslash \mJ^\ast}|1,\bC,N)[\widehat{\eta}_{\buj}(1,\bm,\bC,N)-\eta_{\buj}(1,\bm,\bC,N)][\widehat{f}(\bm^{\mJ^\ast}|0,\bC,N)-f(\bm^{\mJ^\ast}|0,\bC,N)]-\\
   &\widehat{\eta}_{\buj}(1,\bm,\bC,N)[\widehat{f}(\bm^{\mJ^\ast}|0,\bC,N)-f(\bm^{\mJ^\ast}|0,\bC,N)][\widehat{f}(\bm^{\mK\backslash \mJ^\ast}|1,\bC,N)-f(\bm^{\mK\backslash \mJ^\ast}|1,\bC,N)]\bigg{]}d\bm\bigg\},
\end{align*}
where
\begin{align*}
    \widehat{\widetilde{r}}^{\mJ^\ast 011}_{\buj}(\bm,\bC,N)=\frac{\widehat{f}(\bm^{\mJ^\ast}|0,\bC,N)\widehat{f}(\bm^{\mK\backslash \mJ^\ast}|1,\bC,N)}{\widehat{f}(\bm|1,\bC,N)};
\end{align*}
and (iii) for $\theta=\theta_2(k,\mJ^\ast)$,
\begin{align*}
   &E\lb \phi(\mO;\theta,\widehat{\bGamma}(\theta))-\phi(\mO;\theta,\bGamma(\theta))\rb\\
   =&E\bigg\{ \frac{w(\bV,N)}{N}\sum_{j=1}^N\bigg{[}\int \widehat{r}^{k\mJ^\ast 011}_{\buj}(\bm,\bC,N)[\widehat{f}(\bm|1,\bC,N)-f(\bm|1,\bC,N)]\times\\
   &[\widehat{\eta}_{\buj}(1,\bm,\bC,N)-\eta_{\buj}(1,\bm,\bC,N)]-\widehat{f}(\bm^{\mJ^\ast},\bm^k_{\bullet,-j}|0,\bC,N)[\widehat{\eta}_{\buj}(1,\bm,\bC,N)-\eta_{\buj}(1,\bm,\bC,N)]\times\\
   &[\widehat{f}(m_{\buj}^k,\bm^{\mK\backslash (\mJ^\ast\cup\lb k\rb)}|1,\bC,N)-f(m_{\buj}^k,\bm^{\mK\backslash (\mJ^\ast\cup\lb k\rb)}|1,\bC,N)]-\\
   &f(m_{\buj}^k,\bm^{\mK\backslash (\mJ^\ast\cup\lb k\rb)}|1,\bC,N)[\widehat{\eta}_{\buj}(1,\bm,\bC,N)-\eta_{\buj}(1,\bm,\bC,N)]\times\\
   &[\widehat{f}(\bm^{\mJ^\ast},\bm^k_{\bullet,-j}|0,\bC,N)-f(\bm^{\mJ^\ast},\bm^k_{\bullet,-j}|0,\bC,N)]-\\
   &\widehat{\eta}_{\buj}(1,\bm,\bC,N)[\widehat{f}(\bm^{\mJ^\ast},\bm^k_{\bullet,-j}|0,\bC,N)-f(\bm^{\mJ^\ast},\bm^k_{\bullet,-j}|0,\bC,N)]\times\\
   &[\widehat{f}(m_{\buj}^k,\bm^{\mK\backslash (\mJ^\ast\cup\lb k\rb)}|1,\bC,N)-f(m_{\buj}^k,\bm^{\mK\backslash (\mJ^\ast\cup\lb k\rb)}|1,\bC,N)]\bigg{]}d\bm\bigg\},
\end{align*}
where 
\begin{align*}
    \widehat{r}_{\buj}^{k\mJ^\ast011}(\bm,\bC,N)=&\frac{\widehat{f}(\bm^{\mJ^\ast},\bm^k_{\bullet,-j}|0,\bC,N)\widehat{f}(m_{\buj}^k,\bm^{\mK\backslash (\mJ^\ast\cup\lb k\rb)}|1,\bC,N)}{\widehat{f}(\bm|1,\bC,N)}.
\end{align*}
    
\end{lemma}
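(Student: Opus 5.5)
The plan is to compute $E\{\phi(\mO;\theta,\widehat{\bGamma})\}$ directly, conditional on the training sample. We iterate expectations over $Y\mid A,\bM,\bC,N$, then $\bM\mid A,\bC,N$, then $A$ (whose law is known and independent of $(\bC,N)$), and finally subtract the truth $\theta$. We then regroup the resulting integrals into products of errors.

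Because $\Pr(A=a)$ is known, each indicator $\mathcal{I}(A=a)/\Pr(A=a)$ simply converts an outer expectation into an integral against $f(\bm\mid a,\bC,N)$. The term $\{1-\sum_a \mathcal{I}(A=a)/\Pr(A=a)\}\widehat\kappa$ (or $\widehat\tau$) has mean zero and drops out. This is where the known-propensity design is used.

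I will carry out case (iii); case (ii) is the same computation with $(\bm^{\mJ^\ast},\bm^k_{\bumj})$ replaced by $\bm^{\mJ^\ast}$ and $(m^k_{\buj},\bm^{\mK\backslash(\mJ^\ast\cup\{k\})})$ replaced by $\bm^{\mK\backslash\mJ^\ast}$. Case (i) reduces to the single-mediator computation of \cite{cheng2024semiparametric} with $\bM$ treated as one block.

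For fixed $j$, write $\widehat f_0=\widehat f(\bm^{\mJ^\ast},\bm^k_{\bumj}\mid 0)$ and $\widehat f_1=\widehat f(m^k_{\buj},\bm^{\mK\backslash(\mJ^\ast\cup\{k\})}\mid1)$, together with their true versions $f_0,f_1$. Write $\widehat f^1,f^1$ for the joint mediator density under $A=1$, and $\widehat\eta,\eta$ for the outcome regression at $A=1$. The expected uncentered EIF then contributes the following three pieces.
\begin{itemize}
\item The residual term gives $\int \widehat r\,(\eta-\widehat\eta) f^1\,d\bm$, with $\widehat r=\widehat f_0\widehat f_1/\widehat f^1$.
\item The $A=0$ term gives $\int \widehat\eta\,\widehat f_1 f_0\,d\bm$, since the marginal of $(\bM^{\mJ^\ast},\bM^k_{\bumj})$ under $f(\cdot\mid0)$ is $f_0$.
\item The $A=1$ term gives $\int\widehat\eta\,\widehat f_0 f_1\,d\bm$.
\end{itemize}
The target is $\theta=\int \eta f_0f_1\,d\bm$ (weighted and averaged). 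The $\phi(\mO;\theta,\bGamma)$ appearing in the lemma has expectation equal to the truth, so the left side equals these three pieces minus $\int\eta f_0f_1$.

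The key algebraic step is to add and subtract $\int\widehat\eta\,\widehat f_0\widehat f_1$ and $\int \eta\,\widehat f_0\widehat f_1$. The first piece becomes $\int\widehat r(\widehat f^1-f^1)(\widehat\eta-\eta)-\int(\widehat\eta-\eta)\widehat f_0\widehat f_1$. The remaining pieces combine to
\begin{align*}
\int\widehat\eta\,\widehat f_1 f_0+\int\widehat\eta\,\widehat f_0f_1-\int\eta f_0f_1 .
\end{align*}
Expanding everything around $(\eta,f_0,f_1)$ then yields the four displayed terms:
\begin{itemize}
\item the density-ratio-weighted cross product $\widehat r(\widehat f^1-f^1)(\widehat\eta-\eta)$;
\item $-\widehat f_0(\widehat\eta-\eta)(\widehat f_1-f_1)$;
\item $-f_1(\widehat\eta-\eta)(\widehat f_0-f_0)$;
\item $-\widehat\eta(\widehat f_0-f_0)(\widehat f_1-f_1)$.
\end{itemize}
I would check this identity by expanding both sides as polynomials in the hatted and unhatted quantities. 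Every monomial should cancel except those listed. The weight $w(\bV,N)/N\sum_j$ and the outer expectation over $(\bC,N)$ are carried along unchanged.

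The main obstacle is bookkeeping rather than any deep step. One must track which coordinates each partial density integrates over, so that $\int\widehat\eta\widehat f_1 f^0(\bm)\,d\bm$ correctly collapses to $\int\widehat\eta\widehat f_1 f_0$ by marginalization. One must also verify that the sign pattern of the four-term expansion exactly reproduces the lemma. I would first do the expansion symbolically with generic $a=\widehat\eta-\eta$, $b=\widehat f_0-f_0$, $c=\widehat f_1-f_1$ to confirm that no first-order terms in $a$, $b$ or $c$ survive. This confirms Neyman orthogonality and is the whole point of the lemma. Only after that would I substitute the definitions of the auxiliary functions $\widetilde\kappa$, $\check\kappa$ and $\kappa$.
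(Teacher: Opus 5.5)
Your route is the standard one, and it is what the paper's one-line proof delegates to Lemma 1 of \cite{cheng2024semiparametric}. You iterate expectations, use the known $\Pr(A=a)$ and $A\perp(\bC,N)$ to turn each $\mathcal{I}(A=a)/\Pr(A=a)$ into integration against $f(\cdot\mid a,\bC,N)$, marginalize $\widehat{\widetilde\kappa}$ and $\widehat{\check\kappa}$ down to $f_0$ and $f_1$, and regroup. Your rewriting of the residual piece is also correct.

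\textbf{The gap.} It sits at the step you single out as using the design. The term $\{1-\sum_{a=0}^1\mathcal{I}(A=a)/\Pr(A=a)\}\widehat\kappa^{k\mJ^\ast}_{\buj}$ does \emph{not} have mean zero. Each of the two ratios $\mathcal{I}(A=a)/\Pr(A=a)$ has mean one, so the bracket has conditional mean $1-2=-1$. The term therefore contributes $-\int\widehat\eta\,\widehat f_0\widehat f_1\,d\bm$. It is in the EIF precisely because the two augmentation terms $\mathcal{I}(A=0)\widetilde\kappa/\Pr(A=0)$ and $\mathcal{I}(A=1)\check\kappa/\Pr(A=1)$ each have mean $\kappa$ at the truth, so one copy must be removed.

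\textbf{Why your computation fails as written.} At $\widehat{\bGamma}=\bGamma$, your ``remaining pieces'' reduce to $\int\eta f_0f_1\,d\bm$ rather than zero. Equivalently, your accounting makes $E\{\phi(\mO;\theta,\bGamma)\}$ equal to twice the target, which contradicts your own statement that it equals the truth. The polynomial check you propose would likewise leave an uncancelled zeroth-order monomial $+\widehat\eta\widehat f_0\widehat f_1$ relative to the lemma.

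\textbf{The fix.} With the correct contribution included, the total (before the weight $w(\bV,N)N^{-1}\sum_j$ and the outer expectation) is
\[
\int \widehat r\,(\widehat f^1-f^1)(\widehat\eta-\eta)\,d\bm-2\int\widehat\eta\,\widehat f_0\widehat f_1\,d\bm+\int\eta\,\widehat f_0\widehat f_1\,d\bm+\int\widehat\eta\,\widehat f_0 f_1\,d\bm+\int\widehat\eta\,f_0\widehat f_1\,d\bm-\int\eta\,f_0f_1\,d\bm .
\]
Now expand the three product terms
\[
-\widehat f_0(\widehat\eta-\eta)(\widehat f_1-f_1)-f_1(\widehat\eta-\eta)(\widehat f_0-f_0)-\widehat\eta(\widehat f_0-f_0)(\widehat f_1-f_1).
\]
This reproduces exactly the last five integrals, so case (iii) of the lemma follows.

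The same correction is needed in case (ii): the analogous term has mean $-\widehat\tau^{\mJ^\ast}_{\buj}$, not zero. Case (i) has no such term, and your reduction there to \cite{cheng2024semiparametric} is fine.
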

\begin{proof}
It follows from the proof of Lemma 1 in the Supplementary Material of \cite{cheng2024semiparametric}, by matching the corresponding mediator index.
\end{proof}
Define $\widehat{\theta}_{\text{nu}} = I^{-1} \sum_{s=1}^S I_s \widehat{\phi}^s(\mO; \theta)$ and $\theta_{\text{nu}} = \theta \times E\{w(\bV, N)\}$, where $\widehat{\theta}_{\text{nu}}$ is the debiased machine learning estimator targeting $\theta_{\text{nu}}$. We can decompose the estimation error of $\widehat{\theta}_{\text{nu}}$:
\begin{equation*}
    \widehat{\theta}_{\text{nu}}-\theta_{\text{nu}}=T_1+T_2+T_3,
\end{equation*}
where 
\begin{align*}
    &T_1=(\mathbb{P}_I-E)\{\phi(\mO;\theta)\},\\
    &T_2=\sum_{s=1}^S\frac{I_s}{I}(\mathbb{P}_{s}-E)\{\phi(\mO;\theta,\widehat{\Gamma}^s(\theta))-\phi(\mO;\theta,\Gamma(\theta))\},\\
    &T_3=\sum_{s=1}^S\frac{I_s}{I}E\{\phi(\mO;\theta,\widehat{\Gamma}^s(\theta))-\phi(\mO;\theta,{\Gamma}(\theta))\}.
\end{align*} 
We will show that $T_2 = o_\mP(I^{-1/2})$ and $T_3 = \text{Rem}_\theta$. Briefly, $T_3 = \text{Rem}_\theta$ follows from Lemma \ref{lemma:decomposition-Eerror}. In what follows, constants are denoted by the same symbol $C$, although their values may vary across different contexts. To see this, for $\theta=\theta_1^{I}(a^\ast,\ba_{\mJ})$,
\begin{align*}
    &|E\{\phi(\mO;\theta,\widehat{\Gamma}^s(\theta))-\phi(\mO;\theta,{\Gamma}(\theta))\}|\\
    \leq&CE\bigg\{\bigg| w(\bV,N)\bigg{[}\frac{\widehat{f}(\ba_{\mJ1},\bM,\bC,N)}{\widehat{f}(a^\ast,\bM,\bC,N)}[\widehat{f}(a^\ast,\bM,\bC,N)-f(a^\ast,\bM,\bC,N)]\\
    &[\widehat{\overline{\eta}}(a^\ast,\bM,\bC,N)-\overline{\eta}(a^\ast,\bM,\bC,N)]-\\
   &[\widehat{\overline{\eta}}(a^\ast,\bM,\bC,N)-\overline{\eta}(a^\ast,\bM,\bC,N)][\widehat{f}(\ba_{\mJ1},\bM,\bC,N)-f(\ba_{\mJ1},\bM,\bC,N)]\bigg{]}\bigg|\bigg\}~~\text{(LOTE and RC (ii))}\\
   \leq&C E\bigg\{ \bigg| [\widehat{f}(a^\ast,\bM,\bC,N)-f(a^\ast,\bM,\bC,N)][\widehat{\overline{\eta}}(a^\ast,\bM,\bC,N)-\overline{\eta}(a^\ast,\bM,\bC,N)]\bigg|\bigg\}+\\
   &CE\bigg\{\bigg|[\widehat{f}(\ba_{\mJ1},\bM,\bC,N)-f(\ba_{\mJ1},\bM,\bC,N)][\widehat{\overline{\eta}}(a^\ast,\bM,\bC,N)-\overline{\eta}(a^\ast,\bM,\bC,N)]\bigg|\bigg\}~~\text{(RCs (i)\&(ii))}\\
   \leq&O_\mP(\lVert \widehat{\overline{\eta}}-\overline{\eta}\rVert_2(\lVert \widehat{f}^{a^\ast}-f^{a^\ast}\rVert_2+\lVert \widehat{f}^{\ba_{\mJ1}}-f^{\ba_{\mJ1}}\rVert_2)),
\end{align*}
where the last inequality follows from the Cauchy-Schwarz inequality. For $\theta=\theta_2(k,\mJ^\ast)$, we have that
\begin{align*}
    &|E\{\phi(\mO;\theta,\widehat{\Gamma}^s(\theta))-\phi(\mO;\theta,{\Gamma}(\theta))\}|\\
    \leq&CE\bigg\{ \bigg|\frac{1}{N}\sum_{j=1}^N\bigg{[} \widehat{r}^{k\mJ^\ast011}_{\buj}(\bM,\bC,N)[\widehat{f}(\bM|1,\bC,N)-f(\bM|1,\bC,N)]\times\\
   &[\widehat{\eta}_{\buj}(1,\bM,\bC,N)-\eta_{\buj}(1,\bM,\bC,N)]-\widehat{f}(\bM^{\mJ^\ast},\bM^k_{\bullet,-j}|0,\bC,N)[\widehat{\eta}_{\buj}(1,\bM,\bC,N)-\eta_{\buj}(1,\bM,\bC,N)]\times\\
   &[\widehat{f}(m_{\buj}^k,\bM^{\mK\backslash (\mJ^\ast\cup\lb k\rb)}|1,\bC,N)-f(m_{\buj}^k,\bM^{\mK\backslash (\mJ^\ast\cup\lb k\rb)}|1,\bC,N)]-\\
   &f(m_{\buj}^k,\bM^{\mK\backslash (\mJ^\ast\cup\lb k\rb)}|1,\bC,N)[\widehat{\eta}_{\buj}(1,\bM,\bC,N)-\eta_{\buj}(1,\bM,\bC,N)]\times\\
   &[\widehat{f}(\bM^{\mJ^\ast},\bM^k_{\bullet,-j}|0,\bC,N)-f(\bM^{\mJ^\ast},\bM^k_{\bullet,-j}|0,\bC,N)]-\\
   &\widehat{\eta}_{\buj}(1,\bM,\bC,N)[\widehat{f}(\bM^{\mJ^\ast},\bM^k_{\bullet,-j}|0,\bC,N)-f(\bM^{\mJ^\ast},\bM^k_{\bullet,-j}|0,\bC,N)]\times\\
   &[\widehat{f}(m_{\buj}^k,\bM^{\mK\backslash (\mJ^\ast\cup\lb k\rb)}|1,\bC,N)-f(m_{\buj}^k,\bM^{\mK\backslash (\mJ^\ast\cup\lb k\rb)}|1,\bC,N)]\bigg{]}\bigg|\bigg\}~~\text{(LOTE and RCs (i)\&(ii))}\\
   \leq&C E\bigg\{ \bigg|\widehat{f}(\bM|1,\bC,N)-f(\bM|1,\bC,N)\bigg|\max_{1\leq j\leq N}\bigg|\widehat{\eta}_{\buj}(1,\bM,\bC,N)-\eta_{\buj}(1,\bM,\bC,N)\bigg|\bigg\}+\\
   &CE\bigg\{\max_{1\leq j\leq N}\bigg|\widehat{f}(m_{\buj}^k,\bM^{\mK\backslash (\mJ^\ast\cup\lb k\rb)}|1,\bC,N)-f(m_{\buj}^k,\bM^{\mK\backslash (\mJ^\ast\cup\lb k\rb)}|1,\bC,N)\bigg|\times\\
   &\max_{1\leq j\leq N}\bigg|\widehat{\eta}_{\buj}(1,\bM,\bC,N)-\eta_{\buj}(1,\bM,\bC,N)\bigg|\bigg\}\\
   &+CE\bigg\{\max_{1\leq j\leq N}\bigg|\widehat{f}(\bM^{\mJ^\ast},\bM^k_{\bullet,-j}|0,\bC,N)-f(\bM^{\mJ^\ast},\bM^k_{\bullet,-j}|0,\bC,N)\bigg|\times\\
   &\max_{1\leq j\leq N}\bigg|\widehat{\eta}_{\buj}(1,\bM,\bC,N)-\eta_{\buj}(1,\bM,\bC,N)\bigg|\bigg\}\\
   &CE\bigg\{\max_{1\leq j\leq N}\bigg|\widehat{f}(\bM^{\mJ^\ast},\bM^k_{\bullet,-j}|0,\bC,N)-f(\bM^{\mJ^\ast},\bM^k_{\bullet,-j}|0,\bC,N)\bigg|\times\\
   &\max_{1\leq j\leq N}\bigg|\widehat{f}(m_{\buj}^k,\bM^{\mK\backslash (\mJ^\ast\cup\lb k\rb)}|1,\bC,N)-f(m_{\buj}^k,\bM^{\mK\backslash (\mJ^\ast\cup\lb k\rb)}|1,\bC,N)\bigg|\bigg\}~~\text{(RC (ii))}\\
   \leq&O_{\mP}(\lVert \max_{1\leq j\leq N}|\widehat{\eta}_{\buj}^1-\eta_{\buj}^1|\rVert_2(\lVert \widehat{f}^1-f^1\rVert_2+\lVert \max_{1\leq j\leq N}|\widehat{f}_{\buj}^1-f_{\buj}^1|\rVert_2+\lVert \max_{1\leq j\leq N}|\widehat{f}_{\bumj}^0-f_{\bumj}^0|\rVert_2)+\\
   &\lVert \max_{1\leq j\leq N}|\widehat{f}_{\buj}^1-f_{\buj}^1|\rVert_2\lVert \max_{1\leq j\leq N}| \widehat{f}_{\bumj}^0-f_{\bumj}^0|\rVert_2)~~\text{(Cauchy-Schwarz)}.
\end{align*}
For $\theta=\theta^{II}_1(1,\ba_{\mJ^\ast})$, similar arguments based on Lemma \ref{lemma:decomposition-Eerror} show that 
\begin{align*}
    &T_3
    \\=&O_{\mP}\left(d_2(\widehat{\eta}_{\buj}^1,{\eta}^1_{\buj})\left\{\lVert \widehat{f}^1-f^1\rVert_2+\lVert\widehat{f}_{\mK\backslash\mJ^\ast}^1-f_{\mK\backslash\mJ^\ast}^1\rVert_2+\lVert\widehat{f}_{\mJ^\ast}^0-f_{\mJ^\ast}^0\rVert_2\right\}+
   \lVert\widehat{f}_{\mK\backslash\mJ^\ast}^1-f_{\mK\backslash\mJ^\ast}^1\rVert_2 \lVert\widehat{f}_{\mJ^\ast}^0-f_{\mJ^\ast}^0\rVert_2\right).
\end{align*}

Finally, we analyze the empirical process term $T_2$. By Lemma 4 in the Supplementary Material of \cite{tong2025semiparametric}, it suffices to show that
\begin{align*}
    E\lb \left(\phi(\mO;\theta,\widehat{\Gamma}^s(\theta))-\phi(\mO;\theta,\Gamma(\theta))\right)^2|\mO^{-s}\rb=o_{\mP}(1).
\end{align*}
The verification for $\theta=\theta^{I}(a^\ast,\ba_{\mJ})$ follows arguments similar to those in \cite{cheng2024semiparametric}. We present the verification for $\theta=\theta_2(k,\mJ^\ast)$; similar arguments apply to $\theta=\theta^{II}(1,\ba_{\mJ^\ast})$. For $\theta=\theta_2(k,\mJ^\ast)$, we have that
\begin{align*}
    &E\lb \left(\phi(\mO;\theta,\widehat{\Gamma}^s(\theta))-\phi(\mO;\theta,\Gamma(\theta))\right)^2|\mO^{-s}\rb\\
    =&E\left\{ \left[\frac{w(\bV,N)}{N}\sum_{j=1}^N\left\{\frac{\mathcal{I}(A=0)}{\Pr(A=0)}(\widehat{\widetilde{\kappa}}^{k\mJ^\ast}_{\buj}-\widetilde{\kappa}_{\buj}^{k\mJ^\ast}) +\left(1-\sum_{a=0}^1\frac{\mathcal{I}(A=a)}{\Pr(A=a)}\right)(\widehat{\kappa}^{k\mJ^\ast}_{\buj}-\kappa^{k\mJ^\ast}_{\buj})+\right.\right.\right.\\
    &\left.\left.\left.\frac{\mathcal{I}(A=1)}{\Pr(A=1)}\left[(\widehat{r}^{k\mJ^\ast 011}_{\buj}-r^{k\mJ^\ast 011}_{\buj})Y_{\buj}-\left(\widehat{r}^{k\mJ^\ast 011}_{\buj}\widehat{\eta}_{\buj} -r^{k\mJ^\ast 011}_{\buj}\eta_{\buj}\right)+(\widehat{\check{\kappa}}^{k\mJ^\ast}_{\buj}-\check{\kappa}_{\buj}^{k\mJ^\ast})\right]\right\}\right]^2|\mO^{-s}\right\}\\
    \leq&CE\left\{ \frac{1}{N}\left[\sum_{j=1}^N\left\{\frac{\mathcal{I}(A=0)}{\Pr(A=0)}(\widehat{\widetilde{\kappa}}^{k\mJ^\ast}_{\buj}-\widetilde{\kappa}_{\buj}^{k\mJ^\ast}) +\left(1-\sum_{a=0}^1\frac{\mathcal{I}(A=a)}{\Pr(A=a)}\right)(\widehat{\kappa}^{k\mJ^\ast}_{\buj}-\kappa^{k\mJ^\ast}_{\buj})+\right.\right.\right.\\
    &\left.\left.\left.\frac{\mathcal{I}(A=1)}{\Pr(A=1)}\left[(\widehat{r}^{k\mJ^\ast011}_{\buj}-r^{k\mJ^\ast011}_{\buj})Y_{\buj}-\left(\widehat{r}^{k\mJ^\ast011}_{\buj}\widehat{\eta}_{\buj} -r^{k\mJ^\ast011}_{\buj}\eta_{\buj}\right)+(\widehat{\check{\kappa}}_{\buj}^{k\mJ^\ast}-\check{\kappa}_{\buj}^{k\mJ^\ast})\right]\right\}\right]^2|\mO^{-s}\right\}\\
    &\text{(RC (i))}\\
    \leq&CE\left\{ \sum_{j=1}^N\left\{\frac{\mathcal{I}(A=0)}{\Pr(A=0)}(\widehat{\widetilde{\kappa}}^{k\mJ^\ast}_{\buj}-\widetilde{\kappa}_{\buj}^{k\mJ^\ast}-\widehat{\kappa}^{k\mJ^\ast}_{\buj}+\kappa^{k\mJ^\ast}_{\buj}) +(\widehat{\kappa}^{k\mJ^\ast}_{\buj}-\kappa^{k\mJ^\ast}_{\buj})+\frac{\mathcal{I}(A=1)}{\Pr(A=1)}\times\right.\right.\\
    &\left.\left.\left[(\widehat{r}^{k\mJ^\ast011}_{\buj}-r^{k\mJ^\ast011}_{\buj})(Y_{\buj}-\eta_{\buj})-\widehat{r}^{k\mJ^\ast011}_{\buj}\left(\widehat{\eta}_{\buj} -\eta_{\buj}\right)+(\widehat{\check{\kappa}}_{\buj}^{k\mJ^\ast}-\check{\kappa}_{\buj}^{k\mJ^\ast})-(\widehat{\kappa}^{k\mJ^\ast}_{\buj}-\kappa^{k\mJ^\ast}_{\buj})\right]\right\}^2|\mO^{-s}\right\}\\
    &\text{(QM-AM inequality)}\\
    \leq&CE\left\{ \sum_{j=1}^N\left\{C(\widehat{\widetilde{\kappa}}^{k\mJ^\ast}_{\buj}-\widetilde{\kappa}_{\buj}^{k\mJ^\ast})^2 +C(\widehat{\kappa}^{k\mJ^\ast}_{\buj}-\kappa^{k\mJ^\ast}_{\buj})^2+(\widehat{r}^{k\mJ^\ast011}_{\buj}-r^{k\mJ^\ast011}_{\buj})^2E\{(Y_{\buj}-\eta_{\buj})^2|1,\bM,\bC,N\}+\right.\right.\\
    &\left.\left.C\left[(\widehat{r}_{\buj}^{k\mJ^\ast011})^2\left(\widehat{\eta}_{\buj} -\eta_{\buj}\right)^2+(\widehat{\check{\kappa}}_{\buj}^{k\mJ^\ast}-\check{\kappa}_{\buj}^{k\mJ^\ast})^2\right]\right\}|\mO^{-s}\right\}~~~\text{(QM-AM inequality and LOTE)}\\
    \leq&CE\left\{ \sum_{j=1}^N\left\{C(\widehat{\widetilde{\kappa}}^{k\mJ^\ast}_{\buj}-\widetilde{\kappa}_{\buj}^{k\mJ^\ast})^2 +C(\widehat{\kappa}^{k\mJ^\ast}_{\buj}-\kappa^{k\mJ^\ast}_{\buj})^2+C(\widehat{r}^{k\mJ^\ast011}_{\buj}-r^{k\mJ^\ast011}_{\buj})^2+C(\widehat{\check{\kappa}}_{\buj}^{k\mJ^\ast}-\check{\kappa}_{\buj}^{k\mJ^\ast})^2\right\}|\mO^{-s}\right\}\\
    &+o_\mP(1)~~~\text{(RCs (ii)-(iv))}\\
    \leq&o_\mP(1),
\end{align*}
where the QM-AM inequality is $(x+y)^2\leq2(x^2+y^2)$, the last inequality follows from the LOTE, and the fact implied by RC (ii) and RC (iv) that $E\{(\widehat{\widetilde{\kappa}}^{k\mJ^\ast}_{\buj}-\widetilde{\kappa}_{\buj}^{k\mJ^\ast})^2|N\}=o_\mP(1)$, $E\{(\widehat{\kappa}^{k\mJ^\ast}_{\buj}-\kappa^{k\mJ^\ast}_{\buj})^2|N\}=o_\mP(1)$, $E\{(\widehat{r}^{k\mJ^\ast011}_{\buj}-r^{k\mJ^\ast011}_{\buj})^2|N\}=o_\mP(1)$, and $E\{(\widehat{\check{\kappa}}_{\buj}^{k\mJ^\ast}-\check{\kappa}_{\buj}^{k\mJ^\ast})^2|N\}=o_\mP(1)$ for all $1\leq j\leq N$ wp1 (the sum of at most finite $N_{\max}$ terms of $o_\mP(1)$ is $o_\mP(1)$).

\section{Additional tables and figures}\label{sec:tables_figures}

\begin{figure}[ht!]
    \centering
    \includegraphics[scale=0.65]{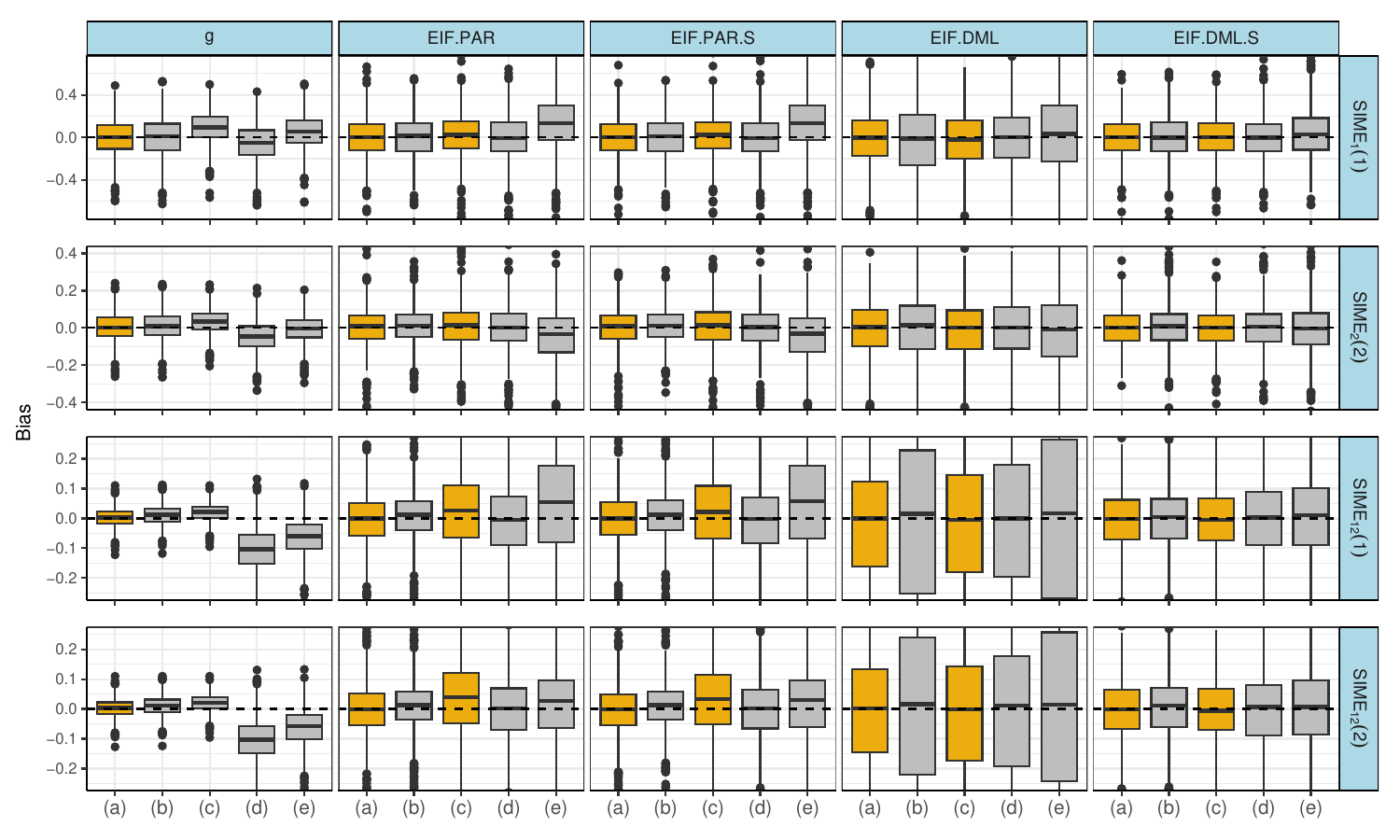}
    \caption{ Simulation results for the bias of the spillover interaction mediation effects $\text{SIME}_{\{1\}}(1),,\text{SIME}_{\{2\}}(2),\text{SIME}_{\{1,2\}}(1),\text{SIME}_{\{1,2\}}(2)$. We consider the following 5 scenarios: (a) assumes all components are correctly specified; (b) misspecifies the mediator conditional means $\{\underline{\eta}_{\buj}^1,\underline{\eta}_{\buj}^2\}$; (c) misspecifies the outcome conditional mean $\eta_{\buj}$; (d) misspecifies the copula generator $g$; and (e) misspecifies all three simultaneously. The estimators that are expected to be theocratically valid are highlighted in orange.  }
    \label{fig:box-SIME}
\end{figure}

\begin{figure}[ht!]
    \centering
    \includegraphics[scale=0.65]{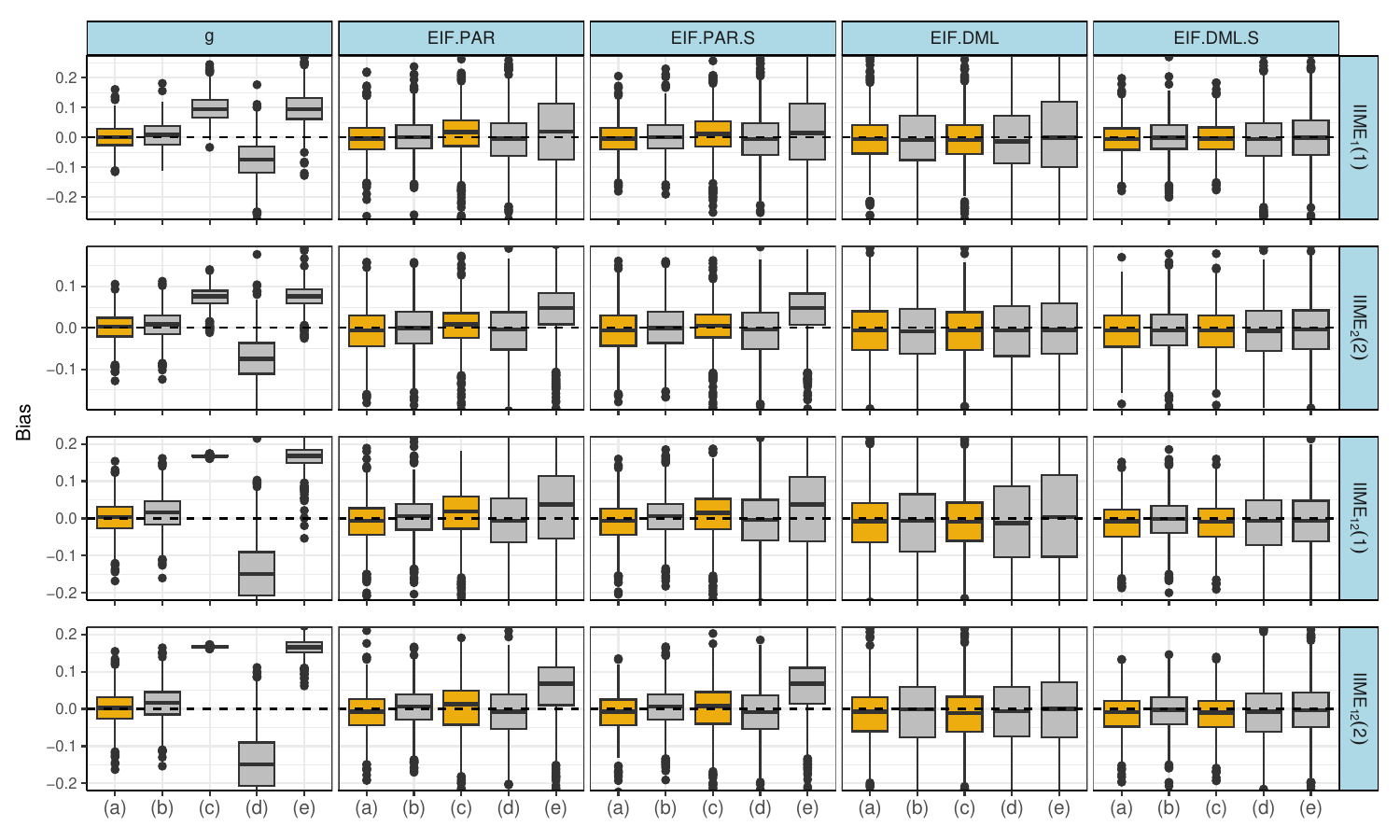}
    \caption{ Simulation results for the bias of the individual interaction mediation effects $\text{IIME}_{\{1\}}(1),,\text{IIME}_{\{2\}}(2),\text{IIME}_{\{1,2\}}(1),\text{IIME}_{\{1,2\}}(2)$. We consider the following 5 scenarios: (a) assumes all components are correctly specified; (b) misspecifies the mediator conditional means $\{\underline{\eta}_{\buj}^1,\underline{\eta}_{\buj}^2\}$; (c) misspecifies the outcome conditional mean $\eta_{\buj}$; (d) misspecifies the copula generator $g$; and (e) misspecifies all three simultaneously. The estimators that are expected to be theocratically valid are highlighted in orange. }
    \label{fig:box-IIME}
\end{figure}

\begin{figure}[ht!]
    \centering
    \includegraphics[scale=0.82]{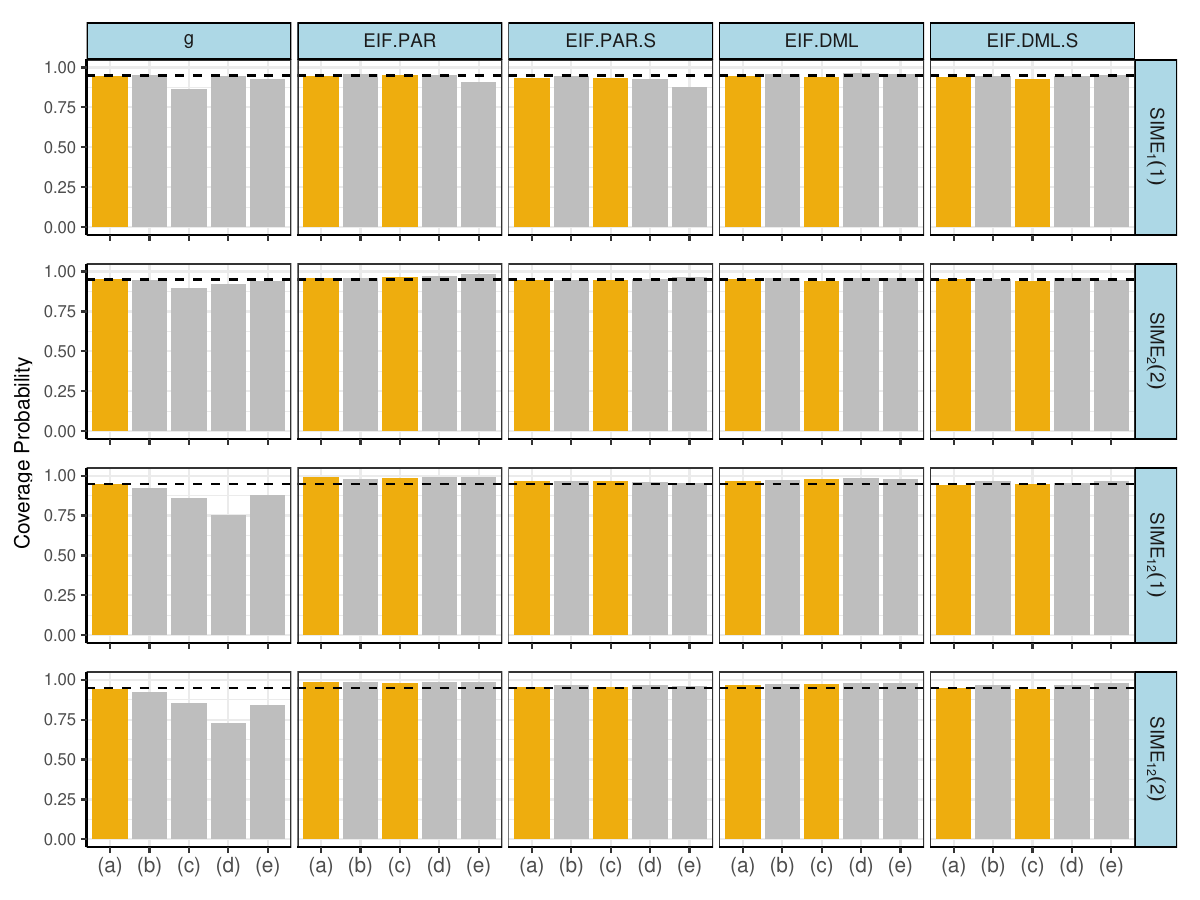}
    \caption{ Simulation results for the bias of the spillover interaction mediation effects $\text{SIME}_{\{1\}}(1),,\text{SIME}_{\{2\}}(2),\text{SIME}_{\{1,2\}}(1),\text{SIME}_{\{1,2\}}(2)$. We consider the following 5 scenarios: (a) assumes all components are correctly specified; (b) misspecifies the mediator conditional means $\{\underline{\eta}_{\buj}^1,\underline{\eta}_{\buj}^2\}$; (c) misspecifies the outcome conditional mean $\eta_{\buj}$; (d) misspecifies the copula generator $g$; and (e) misspecifies all three simultaneously. The estimators that are expected to be theocratically valid are highlighted in orange. }
    \label{fig:box-SIME-CP}
\end{figure}

\begin{figure}[ht!]
    \centering
    \includegraphics[scale=0.82]{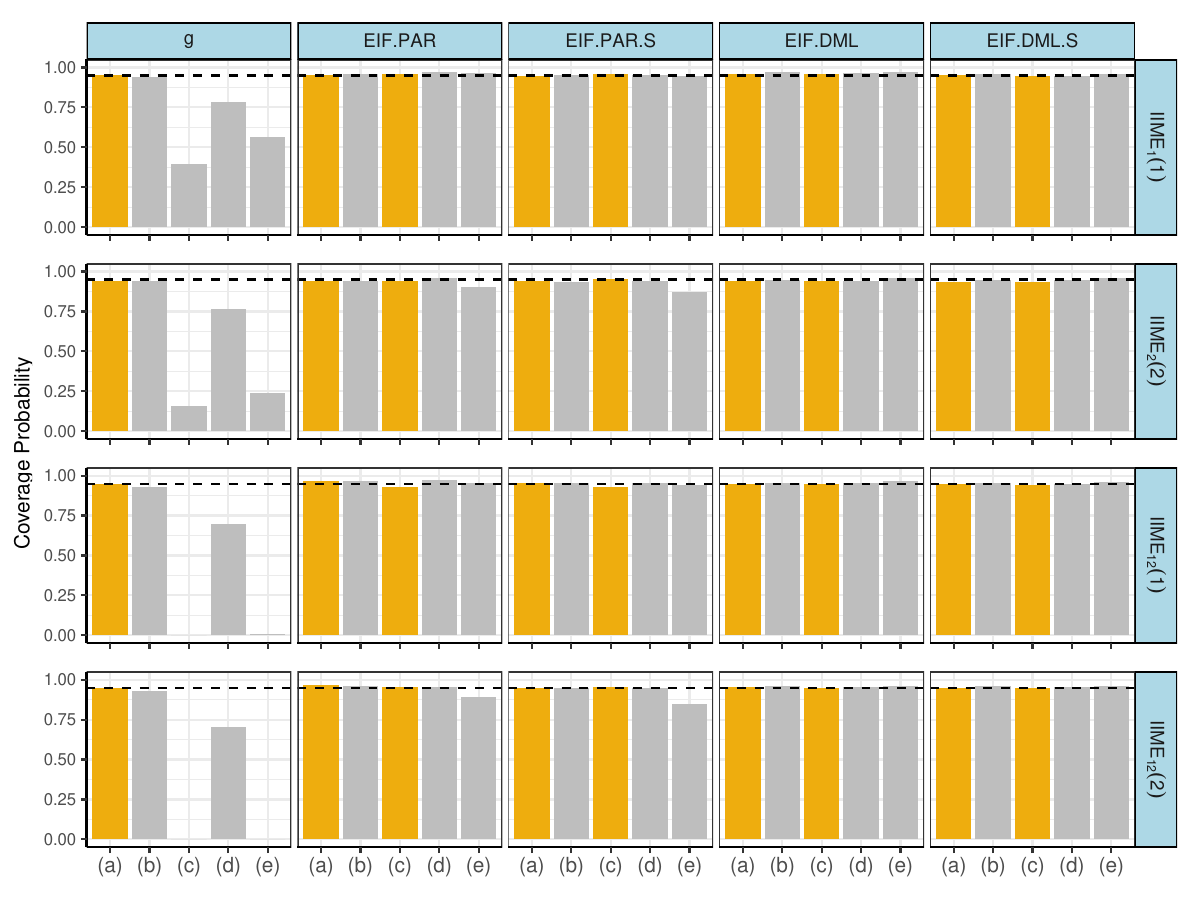}
    \caption{ Simulation results for the empirical coverage probabilities of the individual interaction mediation effects $\text{IIME}_{\{1\}}(1),,\text{IIME}_{\{2\}}(2),\text{IIME}_{\{1,2\}}(1),\text{IIME}_{\{1,2\}}(2)$. We consider the following 5 scenarios: (a) assumes all components are correctly specified; (b) misspecifies the mediator conditional means $\{\underline{\eta}_{\buj}^1,\underline{\eta}_{\buj}^2\}$; (c) misspecifies the outcome conditional mean $\eta_{\buj}$; (d) misspecifies the copula generator $g$; and (e) misspecifies all three simultaneously. The estimators that are expected to be theocratically valid are highlighted in orange. }
    \label{fig:box-IIME-CP}
\end{figure}

\clearpage

\bibliography{bibliography.bib}

\bibliographystyle{chicago}
\end{document}